\newcommand{\darrow}{\mathrel{\longrightarrow\!\!\!\!\!\!\!\!\circ\:\:}}
\newcommand{\prarrow}{\mathrel{\ooalign{{$\longleftarrow$}\crcr\hss{$\shortparallel$}\hss}}}
\newcommand{\plarrow}{\mathrel{\ooalign{{$\longrightarrow$}\crcr\hss{$\shortparallel$}\hss}}}
\newtheorem{definition}{Definition}
\newtheorem{lemma}[definition]{Lemma}
\newtheorem{theorem}[definition]{Theorem}
\newtheorem{corollary}[definition]{Corollary}
\newtheorem{proposition}[definition]{Proposition}
\newtheorem{example}[definition]{Example}
\begin{document}

\title{Automated Proofs of Unique Normal Forms\\
w.r.t.\ Conversion for Term Rewriting Systems}

\author{Takahito Aoto\footnote{Niigata University, \texttt{aoto@ie.niigata-u.ac.jp}}
\and 
Yoshihito Toyama\footnote{Tohoku University, \texttt{toyama@riec.tohoku.ac.jp}}}

\maketitle

\begin{abstract}
The notion of normal forms is ubiquitous in various equivalent
transformations.  Confluence (CR), one of the central properties of
term rewriting systems (TRSs), concerns uniqueness of normal
forms. Yet another such property, which is weaker than confluence, is
the property of unique normal forms w.r.t.\ conversion (UNC).  Famous
examples having UNC but not CR include the TRSs consisting of
S,K,I-rules for the combinatory logic supplemented with various
pairing rules (de Vrijer, 1999).  Recently, automated confluence proof
of TRSs has caught attentions leading to investigations of
automatable methods for (dis)proving CR of TRSs; some powerful
confluence tools have been developed as well.  In contrast, there have
been little efforts on (dis)proving UNC automatically yet. Indeed,
there are few tools that are capable of (dis)proving UNC; furthermore,
only few UNC criteria have been elaborated in these tools.  In this
paper, we address automated methods to prove or disprove UNC of given
TRSs.  We report automation of some criteria known so far, and also
present some new criteria and methods for proving or disproving
UNC. Presented methods are implemented over the confluence prover ACP
(Aoto et al., 2009) and an experimental evaluation is reported.
\end{abstract}

\maketitle

\section{Introduction}

The notion of normal forms is ubiquitous in various equivalent transformations---normal 
forms are objects that can not be transformed further.
Two crucial issues arise around the notion of normal forms---one
is whether any object has a normal form and the other is whether they are unique,
so that normal forms can represent the equivalence classes of objects.
The former issue arises various kinds of termination problems.
For the latter, the notion of confluence (CR),
namely that
$s \stackrel{*}{\gets} \circ \stackrel{*}{\to} t$ 
implies $s \stackrel{*}{\to} \circ \stackrel{*}{\gets} t$
for any objects $s,t$, is most well-studied.
Here, $\stackrel{*}{\to}$ is the reflexive transitive closure of 
an equivalent transformation $\to$,
and $\circ$ stands for the composition.
In fact, in the efforts of proving uniqueness of the normal forms,
one encounters the situation of analyzing `local' peaks $s \gets \circ  \to t$,
and then, in order to apply the induction, 
one needs to consider (general) peaks $s \stackrel{*}{\gets} \circ \stackrel{*}{\to} t$.
This naturally leads to the notion of confluence.
In term rewriting, confluence of various systems,  as well as general theories
of confluence for establishing confluence of systems in 
various classes of rewriting systems have been investigated
(see e.g.\ \cite{ToyRTA} for a survey).

Yet another such a property is 
the property of unique normal forms w.r.t.\ conversion (UNC)\footnote{
The uniqueness of normal forms w.r.t.\ conversion is
also often abbreviated as UN in the literature; 
here, we prefer UNC to distinguish it from a similar but different notion of 
unique normal forms w.r.t.\ reduction (UNR), following the convention
employed in CoCo (Confluence Competition).
},
namely that two convertible normal forms are identical,
i.e.\ $s \stackrel{*}{\leftrightarrow} t$  with normal forms $s,t$
implies $s = t$.
Interestingly, \textrm{CR} implies \textrm{UNC},
and this implication is proper, i.e.\ UNC does not imply CR.
Thus, even if the system lacks \textrm{CR}, there still exists a hope that
the system retains \textrm{UNC}.
In term rewriting, famous examples having UNC but 
not CR include term rewriting systems (TRSs) consisting
of S,K,I-rules for the combinatory logic 
supplemented with various pairing rules \cite{KV90,condlin},
whose non-CR have been shown in \cite{Klo80}.
In contrast to CR, the property UNC directly captures the uniqueness of normal forms
in equivalence classes of objects,
which is one of the motivation for verifying CR.
Therefore, it is anticipated that 
many applications would be considered,  once much more powerful
techniques to archieve UNC were obtained.

Compared to CR, however, analyzing UNC is not (yet) very straightforward.
Indeed, not much has been studied on UNC in the field of term 
rewriting---below, we present the short list of known results on UNC in term rewriting.

\begin{proposition}[\cite{UNomega}]
\label{prop:UNomega}
Any non-$\omega$-overlapping TRS has UNC.
\end{proposition}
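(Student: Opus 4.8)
\medskip
\noindent\textbf{Proof plan.}~Unwinding the definitions, UNC says that a conversion $s = u_0 \leftrightarrow u_1 \leftrightarrow \cdots \leftrightarrow u_n = t$ between normal forms $s$ and $t$ forces $s = t$. The naive approach---flatten the conversion to a valley $s \stackrel{*}{\to}\circ\stackrel{*}{\gets} t$ and invoke normality of the endpoints---is not available: a non-$\omega$-overlapping TRS is certainly locally confluent (every local peak is disjoint or a variable overlap), but we assume no termination, so Newman's lemma does not lift this to confluence, and confluence is anyway hard to maintain under the manipulations below. So instead I would argue by contradiction from a \emph{minimal} counterexample---a conversion of least length, and among those of least total size, between two distinct normal forms---and show it can always be strictly decreased.

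The engine is a residual theory for \emph{complete developments} (contract a chosen set of redexes, together with all their descendants, to completion): since a complete development lies between $\to$ and $\stackrel{*}{\to}$, it leaves $\stackrel{*}{\leftrightarrow}$ unchanged. In the minimal conversion one locates a peak $a \gets c \to b$ (one exists, since both ends are normal, so the first step is a $\gets$-step and the last is a $\to$-step) and tries to replace it by a join obtained from complete developments of the two residual redex sets. The finitary part of non-$\omega$-overlapping removes the only dangerous case, a genuine critical peak, leaving the disjoint case and the ``redex nested inside a substitution'' case; the latter is exactly where non-left-linearity bites, because contracting inside one copy of a duplicated argument breaks the surrounding pattern, and the pattern can be rebuilt only by bringing the sibling copies into step. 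The indispensable use of the ``$\omega$'' is this: duplicating an argument and then rewriting inside one copy must never manufacture a \emph{new} overlap between residuals of two left-hand sides, since such an overlap would be witnessed by a unifier living only on an infinite, cyclic term---a unification that succeeds precisely when the occur-check is dropped---which non-$\omega$-overlapping forbids. Hence the diagrams we manipulate stay overlap-free no matter how far we iterate.

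The step I expect to be the real obstacle is making ``strictly decreased'' precise in the non-left-linear case, where complete developments lack the diamond property: two copies of a subterm produced by a non-left-linear contraction can be reduced incompatibly, so one development on one side of a divergence must be answered by \emph{several} developments---with further duplications cascading---on the other, and a count of conversion steps alone need not go down. One therefore needs a genuinely well-founded measure, e.g.\ a lexicographic combination of a ``mismatched-copies'' count with term size, fed into a decreasing-diagrams argument; a more semantic alternative replaces the whole analysis by a conversion-invariant $N(\cdot)$ assigning to a term its (possibly infinite) normal form at infinity, with hypercollapsing subterms sent to a fresh symbol $\bot$, and reduces the proposition to well-definedness of $N$---a transfinite confluence statement in which the same overlap analysis, now carried up to the limit, does the work; then $N(w)=w$ for a finite normal form $w$ is trivial and $s \stackrel{*}{\leftrightarrow} t$ gives $s = N(s) = N(t) = t$. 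In all of this, non-$\omega$-overlapping is exactly the hypothesis that keeps the bookkeeping both consistent and well-founded; carrying it out rigorously is the bulk of the proof. (This recovers Chew's unique-normal-form theorem for compatible systems, with the $\omega$-overlapping formulation making explicit which unifications must be excluded.)
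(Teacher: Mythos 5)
Proposition~\ref{prop:UNomega} is not proved in this paper at all: it is quoted from Kahrs and Smith \cite{UNomega}, and the surrounding text stresses that it settles Chew's problem, open since 1981. So the only standard to measure your text against is that cited proof, and against it your submission is a proof plan, not a proof. The decisive content is missing at exactly the two places you flag yourself: (a) on the minimal-counterexample/complete-development route you never exhibit the well-founded measure that makes the peak replacement ``strictly decreasing'' in the presence of non-left-linear rules, which is precisely where developments lose the diamond property and duplication cascades --- saying that one needs ``a lexicographic combination \dots fed into a decreasing-diagrams argument'' names the obstacle without removing it; and (b) the ``semantic alternative'' is only gestured at: the map $N$ is not constructed, and its conversion invariance (that $s \stackrel{*}{\leftrightarrow} t$ implies $N(s)=N(t)$, with hypercollapsing parts sent to $\bot$) is an infinitary, confluence-flavoured statement of essentially the same depth as the proposition itself, so reducing UNC to it is not progress unless it is proved.

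The parts you do argue are correct but are the easy parts: in a shortest conversion between normal forms the first step is a reverse step and the last a forward step, so a peak exists; absence of (finite) critical pairs disposes of genuine overlaps; and the $\omega$-condition is indeed what excludes overlaps witnessed only by occur-check-violating (infinite/rational) unifiers. The history is instructive here: Chew's own 1981 argument ran along broadly similar residual-bookkeeping lines and was found to be incomplete, partial repairs (e.g.\ Mano--Ogawa) only covered restricted classes, and the general non-$\omega$-overlapping case resisted this style of attack until Kahrs and Smith closed it in 2016 with a substantially different, semantically based construction rather than a combinatorial induction on conversions. So there is a genuine gap: the proposal correctly locates the difficulty but does not overcome it, and the approach it sketches is precisely the one that is not known to be completable.
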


This recent proposition is,  in fact, an old open problem known as Chew's problem \cite{Chew,ManoOgawaTCS},
and properly generalizes one of the earliest UNC results that
strongly non-overlapping TRSs have UNC \cite{KV90,condlin}.

\begin{proposition}[\cite{MiddeldorpA:phd}]
\label{prop:modular}
UNC is modular for the direct sum.
\end{proposition}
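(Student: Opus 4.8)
I would prove the two implications of modularity separately. Write $R = R_1 \uplus R_2$ over the disjoint union $\mathcal{F} = \mathcal{F}_1 \uplus \mathcal{F}_2$ of the signatures. The forward direction is routine: $\to_{R_i}\,\subseteq\,\to_{R}$, hence $\leftrightarrow^{*}_{R_i}\,\subseteq\,\leftrightarrow^{*}_{R}$; and since $\mathcal{F}_1 \cap \mathcal{F}_2 = \emptyset$, no rule of $R_j$ ($j \neq i$) can match a term over $\mathcal{F}_i$, so a term over $\mathcal{F}_i$ is an $R$-normal form iff it is an $R_i$-normal form. Thus any $R_i$-conversion between $R_i$-normal forms is an $R$-conversion between $R$-normal forms, and UNC of $R$ forces them to be equal; so each $R_i$ has UNC.

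For the reverse direction, assume $R_1$ and $R_2$ have UNC. Call the \emph{rank} of a term the maximal number of alternations of the two signatures along a root-to-leaf path, and the \emph{principal aliens} of a term the maximal proper subterms whose root symbol lies in the signature opposite to that of the term's root. I would prove, by induction on $\mathrm{rank}(s) + \mathrm{rank}(t)$, that $s \leftrightarrow^{*}_{R} t$ with $s, t \in \mathrm{NF}(R)$ implies $s = t$. Decompose $s = C_s[s_1,\dots,s_p]$ and $t = C_t[t_1,\dots,t_q]$, where $C_s, C_t$ are the topmost homogeneous layers and the $s_j, t_k$ are the principal aliens. Since $s,t$ are $R$-normal forms, each alien is again an $R$-normal form of strictly smaller rank, and the caps $\widehat s = C_s[x_1,\dots,x_p]$ and $\widehat t = C_t[y_1,\dots,y_q]$ (principal aliens replaced by fresh variables) are normal forms of the respective single systems.

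The crux is a layer-tracking lemma for a single step $u \to_R v$: either the contracted redex lies strictly within one layer (\emph{non-destructive}), so the top layers of $u$ and $v$ are over the same signature, $\widehat u \to^{=}_{R_i} \widehat v$, and every principal alien of $v$ is a subterm built from principal aliens of $u$ (the abstraction respecting which aliens coincide); or the step is a collapse at a layer boundary (\emph{destructive}), promoting an alien into the layer above and dropping the rank there. Propagating this along $s \leftrightarrow^{*}_{R} t$ and using the induction hypothesis to decide, for the convertible aliens encountered, when they must be identified, I would show that $C_s$ and $C_t$ are over a common signature $\mathcal{F}_i$ and that the conversion projects to an $R_i$-conversion between $\widehat s$ and $\widehat t$ under a matched variable renaming; UNC of $R_i$ then yields $\widehat s = \widehat t$. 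This forces $p = q$, $C_s = C_t$, and $s_j \leftrightarrow^{*}_{R} t_j$ for each $j$, whence the induction hypothesis on the aliens gives $s_j = t_j$ and hence $s = t$.

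The main obstacle is exactly the handling of destructive steps: a collapsing rule fired at the root of an alien can merge it with the layer above and even expose a third alternation, so neither rank nor number of layers is monotone along the conversion and one cannot induct on the conversion's length or rank directly. Making the alien-tracking rigorous, so that the induction hypothesis is invoked only on pairs of strictly smaller rank, is the delicate part; the standard devices are either to pick, among all conversions from $s$ to $t$, one of minimal complexity under a suitable well-founded order and argue that destructive steps can be confined to the two ends, or to reason on the multiset of alien-subconversions rather than on the conversion globally. Erasing and duplicating rules (which make the alien correspondence non-injective or non-functional) and the degenerate case in which $s$ or $t$ is a variable require separate but routine attention.
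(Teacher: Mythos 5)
First, a point of reference: the paper contains no proof of this proposition --- it is quoted from Middeldorp's thesis \cite{MiddeldorpA:phd} --- so your attempt can only be judged on its own merits and against the known literature proof. Note also that with the paper's definition of modularity only the combination direction ($\mathrm{UNC}(\mathcal{R}_1)$ and $\mathrm{UNC}(\mathcal{R}_2)$ imply $\mathrm{UNC}(\mathcal{R}_1\cup\mathcal{R}_2)$) is required; your projection direction is correct but not what is at stake.

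For the combination direction your proposal has a genuine gap, and it sits exactly where you say the ``delicate part'' is. Concretely: (a) your plan to invoke the induction hypothesis ``to decide, for the convertible aliens encountered, when they must be identified'' does not work as stated, because the induction hypothesis is UNC for pairs of smaller rank and hence only identifies convertible \emph{normal forms}, whereas the principal aliens of the intermediate terms of an arbitrary conversion $s \leftrightarrow^{*}_{\mathcal{R}} t$ need not be normal forms and their ranks are not bounded by $\mathrm{rank}(s)+\mathrm{rank}(t)$; to make the projection of outer steps sound (in particular for non-left-linear rules) you must identify aliens by their full $\leftrightarrow^{*}_{\mathcal{R}}$-classes, over which the induction hypothesis gives you no control. (b) The destructive steps are not a technicality to be deferred: a collapsing step at the root flips the signature of the top layer, so the conversion need not project to a single $\mathcal{R}_i$-conversion of the caps at all, and none of your candidate measures (rank, number of layers, length of the conversion) is monotone along it. Your text explicitly leaves this open (``I would show\ldots'', ``the standard devices are either\ldots or\ldots''), but this is the entire mathematical content of the theorem; until one of those devices is carried out in detail, what you have is a programme, not a proof. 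This difficulty is precisely why the established proof is a substantial argument rather than a routine rank induction; a cleaner route than conversion-tracking is to characterize UNC via a confluent extension that preserves normal forms (in the spirit of Proposition~\ref{prop:abst CR to UNC} and Lemma~\ref{lem:approximation II}) and then appeal to modularity of confluence, but constructing such an extension (in particular for conversion classes without normal forms) is itself nontrivial and would have to be done explicitly.
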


This is one of the earliest results on the modularity of TRSs,
where a property $\varphi$ of TRSs is modular for the direct sum if
$\varphi(\mathcal{R})$ and $\varphi(\mathcal{S})$ implies
$\varphi(\mathcal{R} \cup \mathcal{S})$ for 
TRSs $\mathcal{R}$ and $\mathcal{S}$ over disjoint sets of function symbols.
Modularity holds for some cases having an overlap between sets of function symbols,
namely, for layer-preserving decomposition \cite{AotoTandToyamaY:topdown}
and persistent decomposition \cite{AT97} (we refer to \cite{ACP} for these terminologies).

It is undecidable whether $\mathcal{R}$ has UNC for a given TRS $\mathcal{R}$ in general.
But for some subclasses of TRSs, it is known that it is decidable wheather the given TRS 
in the classes has UNC. 
Concerning the (un)decidability results, we here only present ones on 
the positive side, despite some important negative ones are known as well.

\begin{proposition}[\cite{DecCR}]
\label{prop:dec LLRG}
UNC is decidable for left-linear right-ground TRSs.
\end{proposition}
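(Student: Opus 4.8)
The plan is to reduce UNC to decidable emptiness-type questions about finite tree automata and ground tree transducers (GTTs), using that for a left-linear right-ground TRS both the set of normal forms and the conversion relation have ``regular'' descriptions. The first move is to pass from the given signature $\Sigma$ to ground terms over $\Sigma$ extended with finitely many fresh constants. Conversion is stable under substitution, and, by left-linearity, replacing the variables of a term by pairwise distinct fresh constants creates no new redex; hence if $s \leftrightarrow^{*} t$ for distinct normal forms $s, t$, then some such substitution yields distinct ground normal forms that are still convertible, while conversely a conversion over the extended signature projects back by turning the fresh constants into variables. That finitely many fresh constants suffice is argued by localizing a hypothetical witness: since every right-hand side is ground, a normal form in a non-singleton conversion class must contain some right-hand side as a subterm (the first step of a shortest conversion away from it is a backward step, which plants a ground contractum), the conversion never acts at a variable position, and whatever sits below the variable positions of an applied rule is merely carried along; so the number of fresh constants can be bounded in terms of the left-hand sides of $\mathcal{R}$. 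From here on everything is ground over a fixed finite signature.

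The core step is to show that $\leftrightarrow_{\mathcal{R}}^{*}$ is effectively GTT-recognizable. Put $\mathcal{S} = \mathcal{R} \cup \mathcal{R}^{-1}$, so that $\to_{\mathcal{S}}^{*}$ on ground terms coincides with $\leftrightarrow_{\mathcal{R}}^{*}$. The decisive point is that, because each right-hand side $r_i$ is ground, the generating relation $\bigcup_i \{(\ell_i\sigma, r_i)\} \cup \bigcup_i \{(r_i, \ell_i\sigma)\}$ (over all ground $\sigma$) is a GTT relation: the $\ell_i$-instances form a regular set (here left-linearity of $\ell_i$ enters) and $r_i$ is a single fixed ground term, so the two sides are matched by a pair of automata sharing a ``bridge'' state with no correlation needed beyond a trivial common context. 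Closing this relation under arbitrary contexts gives $\to_{\mathcal{S}}$ and preserves GTT-recognizability, and $\to_{\mathcal{S}}^{*}$ is then GTT-recognizable by the Dauchet--Tison theorem that GTT relations are effectively closed under transitive closure. (One may alternatively note that $\mathcal{S}$ is left-linear and, $r_i$ being variable-free, \emph{vacuously} growing, so effective preservation of recognizability for left-linear growing TRSs (Jacquemard; Nagaya--Toyama) yields at least a computable automaton for $\leftrightarrow_{\mathcal{R}}^{*}(L)$ for every regular $L$; but the GTT form delivers the relation itself, which is what is needed to count normal forms per class.)

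To finish, left-linearity makes $\mathrm{NF}(\mathcal{R})$ a regular language with a computable automaton, being the complement of the set of terms having a subterm that is an instance of some left-hand side (instances of a linear term form a regular set). A routine product construction restricts the GTT for $\leftrightarrow_{\mathcal{R}}^{*}$ to those pairs both of whose components --- contexts included --- remain in $\mathrm{NF}(\mathcal{R})$, producing an effectively constructible GTT relation $R$ that always contains the identity on $\mathrm{NF}(\mathcal{R})$. Then $\mathcal{R}$ has UNC iff $R$ is contained in the identity, and whether a GTT relation accepts an off-diagonal pair is decidable: for each bridge state, test whether the two regular languages of subterms reaching it coincide on exactly one term and whether that state occurs in some accepting run.

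I expect the main obstacle to be the core step together with the finitary-ness of the first step. The GTT/recognizability machinery is precisely what replaces an \emph{a priori} unbounded search through conversions, and it applies only because $\mathcal{R}$ is right-ground --- a correlated generator $(\ell_i\sigma, r_i\sigma)$ with $r_i$ non-ground is in general not GTT-recognizable --- and after a careful transitive-closure construction. Fixing the bound on the number of fresh constants in the first step, where left-linearity is again essential (no new redexes under the renaming), requires the localization argument to be executed with some care; taking instead the route via growing TRSs would force the normal-form-counting of the last step to be re-engineered as a finite family of language-emptiness tests, obtained by localizing witnesses to conversions that begin and end by contracting a redex whose contractum is a right-hand side.
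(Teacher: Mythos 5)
The paper itself offers no proof of this proposition: it is imported from the cited literature (\cite{DecCR}, implemented in \textsf{FORT}), where it follows from the decidability of the first-order theory of rewriting for left-linear right-ground TRSs by tree-automata techniques. Your proposal reconstructs essentially that route---pass to ground terms over an extended signature, obtain $\leftrightarrow^{*}_{\mathcal{R}}$ as an effectively computable GTT relation via the Dauchet--Tison transitive-closure theorem (right-groundness making the root-step relation GTT-expressible, left-linearity making instance languages and $\mathrm{NF}(\mathcal{R})$ regular)---so in strategy it matches the proof the paper points to. Two of your steps, however, do not stand as written.

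First, the concluding step fails inside the GTT formalism: GTT relations are \emph{not} closed under intersection with $\mathrm{NF}(\mathcal{R})\times\mathrm{NF}(\mathcal{R})$. Any GTT relation is closed under placing related pairs into arbitrary contexts, whereas $\{(s,s) \mid s\in\mathrm{NF}(\mathcal{R})\}$ is not, so already the restriction of the identity to normal forms is not a GTT relation; hence no ``routine product construction'' yields the restricted GTT you invoke, and the off-diagonal test you sketch (``coincide on exactly one term'') is not the right criterion either, since a violation only requires \emph{some} pair $s'\neq t'$ of subterms meeting at a bridge state inside an $\mathrm{NF}$-compatible context. The standard repair is to leave the GTT class: GTT relations are recognizable as binary regular (RR$_2$/automatic) relations on convolutions of terms, this class is effectively closed under Boolean operations and contains $\mathrm{NF}(\mathcal{R})\times\mathrm{NF}(\mathcal{R})$ and inequality, and emptiness is decidable; UNC is then emptiness of ${\leftrightarrow^{*}_{\mathcal{R}}}\cap(\mathrm{NF}(\mathcal{R})\times\mathrm{NF}(\mathcal{R}))\cap{\neq}$, which is exactly the first-order-theory-of-rewriting finish of the cited proof. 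Second, your bound on the number of fresh constants is asserted rather than proved (``carried along'', ``bounded in terms of the left-hand sides''); no localization is needed: two fresh constants suffice, because a substitution sending two designated variables of a hypothetical counterexample to distinct fresh constants preserves convertibility (closure under substitution), preserves distinctness of the two normal forms, and---by left-linearity and freshness of the constants---preserves normality, while conversely fresh constants in a ground counterexample can be turned back into distinct variables. A minor further point: closing the root-step relation under contexts gives the parallel-step rather than the one-step relation, which is harmless since the transitive closures coincide, but should be said.
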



\begin{proposition}[\cite{UNCshallow}]
\label{prop:dec shallow}
UNC is decidable for shallow TRSs.
\end{proposition}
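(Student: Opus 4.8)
The plan is to establish decidability by reducing \textrm{UNC} to an emptiness test for effectively constructible tree automata, extending the automata-theoretic approach behind the decidability result for left-linear right-ground systems stated above.

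\emph{Step 1: reduction to ground terms over a fixed signature.} First I would show that a failure of \textrm{UNC} can always be witnessed by ground terms over $\mathcal{F}$ enlarged with a \emph{bounded} number of fresh constants. If $s \leftrightarrow^*_{\mathcal R} t$ with distinct normal forms $s,t$, then applying a substitution sending variables to fresh constants yields a ground conversion between distinct ground terms, and these stay normal forms provided the substitution is injective on the variables ``responsible'' for $s \neq t$; the delicate point, that finitely many fresh constants suffice, is a colouring argument that needs a little care precisely because shallow rules may be non-left-linear (merging two variables could create a redex at depth one). From here on one works with ground terms over a fixed finite signature, and it suffices to decide whether two \emph{distinct} ground normal forms are $\leftrightarrow^*_{\mathcal R}$-convertible. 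I would also record the standard fact that the set $\mathrm{NF}({\mathcal R})$ of ground normal forms is a regular tree language with a tree automaton $\mathcal{A}_{\mathrm{NF}}$ computable from ${\mathcal R}$, since being an instance of a shallow left-hand side is a local, hence regular, condition.

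\emph{Step 2: capturing the conversion relation.} For the linear shallow case I would use ground tree transducers (GTTs). A shallow rule inspects and rebuilds a term only down to depth one, and the ``cut'' between the rewritten part and the untouched subterms can be encoded by shared bridge states of a GTT; hence for a \emph{linear} shallow rule the one-step relation $\to_{\ell\to r}$ is GTT-recognizable, and therefore so is $\leftrightarrow^*_{\mathcal R} = \bigl(\bigcup_{\ell\to r}(\to_{\ell\to r}\cup\to_{\ell\to r}^{-1})\bigr)^{*}$, since GTT-recognizable relations are effectively closed under union, inverse and reflexive--transitive closure. Intersecting with the regular constraint ``both components lie in $\mathrm{NF}({\mathcal R})$'' stays within the class, and one can then decide whether the resulting GTT accepts any pair whose two components differ at some position; \textrm{UNC} holds iff it accepts none. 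This already reproves the left-linear right-ground case above and, more generally, settles all linear shallow systems.

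\emph{Step 3: the non-linear case --- the main obstacle.} The real work is that ``shallow'' does not imply linear: a rule such as $f(x,x)\to a$ (or $a\to f(x,x)$) forces the rewrite relation to test equality of sibling subterms, which is not a regular condition, so $\to_{\mathcal R}$ is no longer GTT-recognizable. I expect this to be the crux. One route is to pass to tree automata with equality and disequality constraints between \emph{brothers}, whose emptiness is still decidable and which match shallow non-linearity exactly, and to re-establish just enough closure properties (union, inverse, and the bounded form of transitive closure needed for conversions between normal forms) in this richer model; an alternative is to argue that in a minimal counterexample conversion the non-left-linear rules can be used in only boundedly many essentially distinct ways, so that the problem reduces to the linear shallow case already handled. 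Either way, the hard part is to show that the relevant saturation process terminates once sibling-equality constraints are present --- this is where shallowness is used in full force, and it must be, since \textrm{UNC} is undecidable for term rewriting systems in general.
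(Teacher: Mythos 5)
First, note that the paper you are trying to match does not prove this proposition at all: it is quoted from \cite{UNCshallow}, and the authors explicitly refrain from implementing or reproving the decidability results (Propositions~\ref{prop:dec LLRG} and \ref{prop:dec shallow}). So the only question is whether your sketch itself is a viable proof, and it is not: there is a genuine gap already in Step~2. The one-step rewrite relation of a linear shallow rule is \emph{not} GTT-recognizable as soon as a variable occurs on both sides of the rule, e.g.\ $f(x) \to g(x)$. In a GTT the two components communicate only through finitely many shared states at the cut positions, so the requirement that the variable be instantiated by the \emph{same} (unbounded) subterm on both sides cannot be enforced: if $(f(t_1),g(t_1))$ and $(f(t_2),g(t_2))$ are accepted through the same state, then $(f(t_1),g(t_2))$ is accepted as well, and by pumping this happens for $t_1 \neq t_2$, so any GTT overapproximates the rewrite (and conversion) relation strictly. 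This is precisely why the recognizability/first-order-theory route behind Proposition~\ref{prop:dec LLRG} is confined to left-linear \emph{right-ground} systems (there the right-hand side is fixed, so no cross-component equality is needed) and does not ``extend by bridge states'' to linear shallow rules; note also that the larger class of RR2 relations, which does contain such one-step relations, is not closed under the transitive closure you need for $\stackrel{*}{\leftrightarrow}_\mathcal{R}$. Step~3 you concede is the crux, and the two routes you mention remain speculative: automata with equality constraints between brothers lack the closure under the saturation/iteration you would need, and no argument is offered that a minimal counterexample uses non-linear rules in only boundedly many ways. Step~1 (ground witnesses over boundedly many fresh constants, regularity of the set of normal forms) is fine but is the easy part.

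For comparison, the proof in the cited reference proceeds along quite different lines: rather than computing the conversion relation by automata on pairs, it bounds the size of a minimal witness against UNC---exploiting that shallow rules interact with terms only down to depth one, so deep subterms in a hypothetical conversion between two distinct normal forms can be replaced by small representatives---and then decides the remaining finitely many candidate pairs using a decidable convertibility test for shallow systems. If you want to salvage an automata-based argument, you would have to either restrict to the right-ground situation (recovering only Proposition~\ref{prop:dec LLRG}) or supply exactly the missing combinatorial bound, at which point the automata machinery becomes unnecessary.
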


For the former result, we remark that 
for the class of left-linear right-ground TRSs
first-order theory of rewriting is decidable \cite{DecCR}.
For the latter result, we remark that, in contrast to UNC, 
CR is undecidable for flat TRSs \cite{undecflat},
which is a subclass of shallow TRSs.
Another obvious class for which UNC is decidable is terminating TRSs.

\begin{proposition}[\cite{WeightDec}]
\label{prop:weight decreasing}
Any non-duplicating weight-decreasing joinable TRS has UNC.
\end{proposition}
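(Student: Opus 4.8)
The plan is to follow the classical conversion-tiling strategy (in the style of de Vrijer's unique-normal-form proofs), in two stages: first lift the weight-controlled joinability from critical peaks to \emph{all} local peaks, and then run a well-founded induction over conversions between normal forms. Write $w$ for the weight function witnessing the hypothesis. Two basic facts are used repeatedly: (i) $s \to t$ implies $w(s) \ge w(t)$ --- here non-duplication is essential, since a rule can only decrease the symbol-skeleton weight and, because no variable is copied, this inequality survives arbitrary instantiation; and (ii) every critical peak $s \gets u \to t$ comes equipped with a conversion $s \leftrightarrow^* t$ all of whose terms have weight at most $w(u)$ and which satisfies the decreasingness clause of ``weight-decreasing joinable''.

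First I would prove a \emph{Local Peak Lemma}: every local peak $a \gets c \to b$ admits a conversion $a \leftrightarrow^* b$ whose terms all have weight $\le w(c)$ and which is decreasing in the required sense. There are the three usual cases: disjoint redexes commute, and by (i) both residual terms stay below $w(c)$; for a variable overlap, non-duplication of the outer rule lets the inner contraction be carried out in parallel over the copies in the contractum (at most as many as in the redex), again with (i) bounding all intermediate terms; for a proper overlap, instantiate and contextualise the conversion provided by (ii), where once more non-duplication is what keeps the instance bounded by $w(c)$ rather than merely by the weight of the bare overlapped term. Next I would carry out the global induction: given a conversion $\Gamma : s \leftrightarrow^* t$ between normal forms, if $\Gamma$ is non-trivial it must contain a \emph{peak} --- the first step runs into $s$ backwards and the last into $t$ forwards (normal forms have no forward step), and since weight can only fall along forward steps and rise along backward steps, the weight sequence has an interior strict local maximum, which is a peak; a short argument shows the globally maximal weight $m$ is attained at some peak. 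Replace such a peak of weight $m$ by the conversion furnished by the Local Peak Lemma, obtaining a new conversion with the same endpoints and all terms of weight $\le m$; take as measure a well-founded order on conversions (the multiset of step weights, refined by the combinatorial data of the decreasing-diagram compression argument) and show that each replacement strictly decreases it. Iterating collapses $\Gamma$ to the trivial conversion, so $s = t$, i.e.\ UNC.

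The hard part will be precisely the decrease of this measure. A maximal-weight peak contributes two steps of weight $m$, but its weight-bounded completion may again contain two steps of weight $m$ --- one ``facing'' step on each side cannot in general be avoided, which is exactly why one obtains UNC rather than CR --- so a naive multiset-of-weights measure is not enough; one has to borrow the finer bookkeeping of van Oostrom's decreasing-diagram technique (or de Vrijer's elementary-diagram tiling) and verify that the ``weight-decreasing joinable'' hypothesis was stated strongly enough to deliver exactly the decreasing diagrams this bookkeeping consumes. The remaining delicate point is the variable-overlap case of the Local Peak Lemma, where it is the \emph{non-duplicating} restriction, not merely non-erasingness, that makes the weight estimate go through.
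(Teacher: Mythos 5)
There is a genuine gap, and it starts at your ``basic fact (i)''. The hypothesis does not supply a weight function on \emph{terms} for which $s \to t$ implies $w(s) \ge w(t)$, and non-duplication cannot provide one: $a \to f(a)$ and $f(x) \to g(g(x))$ are non-duplicating yet increase every reasonable term measure. In \cite{WeightDec} (and in Section 4 of this paper) the ``weight'' is a measure of \emph{conversion proofs} --- essentially the number of rewrite steps, including the steps spent establishing the conditions of linearized rules --- and ``weight-decreasing joinable'' is a property of the LR-separated conditional linearization $\mathcal{R}^{LRS}$, a semi-equational CTRS, not of the TRS itself. Consequently both pillars of your argument are unsupported: the weight bounds claimed in your Local Peak Lemma, and the step in your global induction where you extract a maximal-weight peak from a conversion between normal forms because ``weight can only fall along forward steps and rise along backward steps''.

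Second, the point you explicitly defer --- the strict decrease of the measure under peak replacement --- is the entire content of the theorem, and it is exactly where a direct tiling on $\mathcal{R}$ breaks down. At a variable overlap of a non-left-linear rule $l \to r$, the two-step peak is closed only by a conversion using up to $\vert l \vert_x - 1$ extra steps on one side and $\vert r \vert_x$ steps on the other, so a multiset-of-step-weights does not decrease, and the hypothesis as you reconstructed it furnishes no finer decreasing-diagram bookkeeping to compensate. The established proof avoids this by not working on $\mathcal{R}$ at all: one forms $\mathcal{R}^{LRS}$, where every rule becomes left-linear and the non-linearity is pushed into conditions $x_i \approx y_i$ interpreted by $\stackrel{*}{\leftrightarrow}$; non-duplication of $\mathcal{R}$ makes $\mathcal{R}^{LRS}$ non-duplicating as an LR-separated CTRS (each variable of the right-hand side occurs at least as often in the condition part), which is what lets the weight of a conditional step absorb the condition-evaluation steps and makes the bookkeeping close; \cite{WeightDec} then proves that weight-decreasing joinability implies confluence of $\to_{\mathcal{R}^{LRS}}$, and finally CR of $\mathcal{R}^{LRS}$ is transferred to UNC of $\mathcal{R}$ via Proposition~\ref{prop:abst CR to UNC}, using ${\to_{\mathcal{R}}} \subseteq {\to_{\mathcal{R}^{LRS}}}$ and $\mathrm{NF}(\mathcal{R}) \subseteq \mathrm{NF}(\mathcal{R}^{LRS})$. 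Without either a correct term-level weight (which does not exist) or this detour through conditional linearization, your sketch does not yield a proof.
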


This criterion is based on a closure condition
of \textit{conditional} critical pairs,
arising from \textit{conditional linearization} of TRSs.
In contrast to various critical pair closure conditions for ensuring confluence
(e.g.\ \cite{Hue80,Dev,Gra96caap,Oku98}),
few such criteria have been known for UNC.

Recently, automated confluence proof of TRSs has caught
attentions leading to investigations of automatable
methods for (dis)proving CR of TRSs;
some powerful confluence tools have been developed as well,
such as ACP \cite{ACP}, CSI \cite{CSI2}, Saigawa \cite{Saigawa} for TRSs, and 
also tools for other frameworks such as conditional TRSs and higher-order TRSs.
This leads to the emergence of the 
Confluence Competition (CoCo)\footnote{\url{http://coco.nue.ie.niigata-u.ac.jp/}},
yearly efforts since 2012.

In contrast, there have been little efforts on (dis)proving UNC
automatically yet. Indeed, there are few tools that are capable of
(dis)proving UNC; furthermore, only few UNC criteria have been elaborated
in these tools.
In CoCo 2017,
the category of UNC runs for the first time\footnote{
As other related properties, the categories of 
the normal form property (NFP)
and that of the uniqueness of normal forms w.r.t.\ reduction (UNR)
have also been run.
Furthermore, there is a combined CR/NFP/UNC/UNR category,
which motivates to shown where the problem lies
at the proper hierarchy of CR $\Rightarrow$ NFP $\Rightarrow$  UNC $\Rightarrow$ UNR.
Among these properties CR and UNC are closed under signature extensions,
but NFP and UNR are not closed under signature extensions,
i.e.\ these properties rely on which set of function symbols
are considered. This motivates us to consider UNC as our current target.
}.
Techniques used by participants are summarized 
as follows:
(1) UNC is decidable for ground TRSs (in polynomial time) \cite{UNground},
(2) UNC is decidable for left-linear right-ground TRSs \cite{DecCR} and 
(3) any non-$\omega$-overlapping TRS has UNC \cite{UNomega}.


In this paper, we address automated methods to prove or disprove UNC.
Main contributions of the paper are summarized as follows.
\begin{itemize}
\item We report new UNC criteria based on the conditional linearization
technique, namely that TRSs have UNC if their conditional linearization is 
parallel-closed or linear strongly closed
(Theorems~\ref{thm:parallel closed} and \ref{thm:strongly closed criteria}).
We also report on automation of these criteria.
Contrast to the earlier result (UNC of strong non-overlapping TRSs) based on 
the conditional linearization technique,
these results are not subsumed by Proposition~\ref{prop:UNomega}.

\item We present a UNC criterion 
which generalizes Proposition~\ref{prop:weight decreasing} given in \cite{WeightDec},
and show how one can effectively check the criterion.
To be more precise on the first item,
we present a critical pair criterion 
ensuring the (abstract) weight-decreasing joinability,
which is slightly general than the one given in \cite{WeightDec}. 

\item We present a novel method, \emph{UNC completion}, for proving and disproving UNC,
and show its correctness (Theorem~\ref{thm:correctness of UNC procedure}).
The method is another application of an abstract UNC principle 
behind the conditional linearization technique. It turns out that the method is much effective 
for proving and disproving UNC of our testbed from Cops (Confluence problems) database,
compared to the conditional linearization approach.

\item We give a transformational method effective for (dis)proving UNC,
named \emph{rule reversing transformation},
and show its correctness (Theorem~\ref{thm:UNC of rule reversing transformation}).
The transformation experimentally turns out to work effectively
when combined with the UNC completion.

\item We present a simple UNC criterion, named  \emph{right-reducibility} (Theorem~\ref{thm:unc by rhs}).

\item We implement UNC criteria except for 
the decidability results (Propositions~\ref{prop:dec LLRG} and \ref{prop:dec shallow}),
and an experimental evaluation is performed on a testbed from Cops database.
Our implementation is built over our confluence prover ACP \cite{ACP}
and is freely available.

\end{itemize}

The rest of the paper is organized as follows.
After introducing necessary notions and notations in Section 2,
we first revisit the conditional linearization technique for proving UNC,
and obtain new UNC criteria based on this approach in Section 3.
In Section 4, 
we present a slightly generalized version of the critical pair criterion
presented in the paper \cite{WeightDec},
and report an automation of the criterion based on Proposition \ref{prop:weight decreasing}. 
In Section 5, we present our novel methods for proving or disproving UNC.
We show an experiment of the presented methods in Section 6, and report
our confluence prover ACP which newly supports UNC (dis)proving in Section 7.
Section 8 concludes.
Most proofs are given in the Appendix.

\section{Preliminaries}

We now fix notions and notations used in the paper. 
We assume familiarity with basic notions in term rewriting (e.g.\ \cite{BaaderFandNipkowT:TR}).

We use $\sqcup$ to denote the multiset union and $\mathbb{N}$ the set of natural numbers.
A sequence of objects $a_1,\ldots,a_n$ is written as $\vec{a}$.
Negation of a predicate $P$ is denoted by $\neg P$.

The composition of relation $R$ and $S$ is denoted by $R \circ S$.
Let $\to$ be a relation on a set $A$.
The reflexive transitive (reflexive, symmetric, equivalent)
closure of the relation $\to$ is denoted by $\stackrel{*}{\to}$
(resp.\ $\stackrel{=}{\to}$, $\leftrightarrow$, $\stackrel{*}{\leftrightarrow}$).
The set NF of \emph{normal forms} w.r.t.\ the relation $\to$
is given by $\textrm{NF} = \{ a \in A \mid a \to b$ for no $b \in A \}$.
The relation $\to$ has \emph{unique normal forms w.r.t.\ conversion} 
(denoted by $\mathrm{UNC}(\to)$)
if $a \stackrel{*}{\leftrightarrow} b$
and $a, b \in \mathrm{NF}$ imply $a = b$.
The relation $\to$ is \emph{confluent} (denoted by $\mathrm{CR}(\to)$)
if ${\stackrel{*}{\gets} \circ \stackrel{*}{\to}}
\subseteq {\stackrel{*}{\to} \circ \stackrel{*}{\gets}}$.
When we consider two relations $\to_1$ and $\to_2$, 
the respective sets of normal forms w.r.t.\ $\to_1$ and $\to_2$
are denoted by $\textrm{NF}_1$ and $\textrm{NF}_2$.
The following proposition, which is proved easily, 
is a basis of the \emph{conditional linearization technique},
which will be used in Sections 3 and 4.
\begin{proposition}[\cite{KV90,condlin}]
\label{prop:abst CR to UNC}
Suppose 
(1) ${\to}_0 \subseteq {\to}_1$,
(2) $\mathrm{CR}(\to_1)$, and 
(3) $\mathrm{NF}_0 \subseteq \mathrm{NF}_1$.
Then, $\text{UNC}(\to_0)$.
\end{proposition}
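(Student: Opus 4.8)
The plan is to reduce $\mathrm{UNC}(\to_0)$ to the Church--Rosser property of $\to_1$. So fix $a, b \in \mathrm{NF}_0$ with $a \stackrel{*}{\leftrightarrow}_0 b$; the goal is $a = b$. First I would note that conversion is monotone in the underlying relation: from hypothesis (1), ${\to}_0 \subseteq {\to}_1$, one gets ${\stackrel{*}{\leftrightarrow}_0} \subseteq {\stackrel{*}{\leftrightarrow}_1}$, and hence $a \stackrel{*}{\leftrightarrow}_1 b$.

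Next I would invoke the standard fact that confluence is equivalent to the Church--Rosser property, i.e.\ that $\mathrm{CR}(\to_1)$ implies ${\stackrel{*}{\leftrightarrow}_1} \subseteq {\stackrel{*}{\to}_1} \circ {\stackrel{*}{\gets}_1}$. (This is the only genuine lemma used; it is proved by a routine induction on the length of the conversion $a \stackrel{*}{\leftrightarrow}_1 b$, eliminating peaks one at a time via $\mathrm{CR}(\to_1)$.) Applying this to $a \stackrel{*}{\leftrightarrow}_1 b$ produces a common reduct $c$ with $a \stackrel{*}{\to}_1 c$ and $b \stackrel{*}{\to}_1 c$. Then I would use hypothesis (3): since $a, b \in \mathrm{NF}_0 \subseteq \mathrm{NF}_1$, neither $a$ nor $b$ admits any $\to_1$-step, so both reductions $a \stackrel{*}{\to}_1 c$ and $b \stackrel{*}{\to}_1 c$ are empty, whence $a = c = b$, as required.

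There is essentially no obstacle here: the proposition is deliberately elementary and serves only as the abstract backbone of the conditional linearization technique used later. The single point that calls for a (well-known) lemma rather than a one-line observation is the passage from $\mathrm{CR}(\to_1)$ to the Church--Rosser property of $\to_1$; beyond that, the argument uses only monotonicity of the conversion operation under relation inclusion and the definition of a normal form.
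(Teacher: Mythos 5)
Your proof is correct and is exactly the standard argument: lift the conversion from $\to_0$ to $\to_1$ via (1), use the equivalence of confluence with the Church--Rosser property to get a common reduct, and conclude by (3) that both reductions are empty. The paper itself omits the proof, describing the proposition as "proved easily", and your argument is precisely the intended one.
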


The set of \emph{terms} over the set $\mathcal{F}$ of arity-fixed \emph{function symbols}
and denumerable set $\mathcal{V}$ of \emph{variables} is denoted by $\mathrm{T}(\mathcal{F},\mathcal{V})$.
The set of variables (in a term $t$) is denoted by $\mathcal{V}$ (resp.\ $\mathcal{V}(t)$).
A term $t$ is \emph{ground} if $\mathcal{V}(t) = \emptyset$.
We abuse the notation $\mathcal{V}(t)$ and denote by $\mathcal{V}(e)$ the set of variables
occurring in any sequence $e$ of expressions.
The \emph{subterm} of a term $t$ at a \emph{position} $p$ is denoted by $t|_p$. 
The root position is denoted by $\epsilon$.
A \emph{context} is a term containing a special constant $\square$ (called \emph{hole}).
If $C$ is a context containing $n$-occurrences of the hole,
$C[t_1,\ldots,t_n]$ denotes the term obtained from $C$ by replacing holes 
with $t_1,\ldots,t_n$ from left to right;
we write $C[t_1,\ldots,t_n]_{p_1,\dots,p_n}$ 
if the occurrences of holes in $C$ are at the positions $p_1,\ldots,p_n$.
For positions $p_1,\ldots,p_n$ in a term $s$,
the expression $s[t_1,\ldots,t_n]_{p_1,\dots,p_n}$ denotes the term obtained from $s$
by replacing subterms at the positions $p_1,\ldots,p_n$ with terms $t_1,\ldots,t_n$ respectively.
We denote by $\vert t \vert_x$ the number of occurrences of a variable $x$ in a term $t$.
Again, we abuse the notation $\vert t \vert_x$ 
and denote by $\vert e \vert_x$ the number of occurrences of a variable $x$
in any sequence of expressions $e$.
A term $t$ is \emph{linear} if $\vert t \vert_x \le 1$ for any $x \in \mathcal{V}(t)$.
A \emph{substitution} $\sigma$ 
is a mapping from $\mathcal{V}$ to $\mathrm{T}(\mathcal{F},\mathcal{V})$
such that the set $\mathrm{dom}(\sigma) = \{ x \in \mathcal{V} \mid \sigma(x) \neq x \}$,
called the \emph{domain} of $\sigma$, is finite.
Each substitution is identified with its homomorphic extension
over $\mathrm{T}(\mathcal{F},\mathcal{V})$.
For simplicity, we often write $t\sigma$ instead of $\sigma(t)$ for substitutions $\sigma$ and terms $t$.
A \emph{most general unifier} $\sigma$ of terms $s$ and $t$ is
denoted by $\mathrm{mgu}(s,t)$.

An \emph{equation} is a pair $\langle l, r \rangle$ of terms,
which is denoted by $l \approx r$. When we indistinguish
lhs and rhs of the equation, we write $l \mathrel{\dot\approx} r$.
We identify equations modulo renaming of variables.
For a set or sequence $\Gamma$ of equations, we denote by $\Gamma\sigma$
the set or the sequence obtained by replacing each equation
$l \approx r$ by $l\sigma \approx r\sigma$.
An equation $l \approx r$
satisfying $l \notin \mathcal{V}$ and $\mathcal{V}(r) \subseteq \mathcal{V}(l)$
is a \emph{rewrite rule} and written as $l \to r$.
A rewrite rule $l \to r$ is \emph{linear}  
if $l$ and $r$ are linear terms; it is \emph{left-linear} (\emph{right-linear})
if $l$ (resp.\ $r$) is a linear term.
A rewrite rule $l \to r$ is \emph{non-duplicating}
if $\vert l \vert_x \ge \vert r \vert_x$ for any $x \in \mathcal{V}(l)$.
A \emph{term rewriting system} (\emph{TRS}, for short) is a finite set of rewrite rules.
A TRS is linear (left-linear, right-linear, non-duplicating) if so are all rewrite rules.
A \emph{rewrite step} of a TRS $\mathcal{R}$ (a set $\Gamma$ of equations)
is a relation $\to_\mathcal{R}$ (resp.\ $\leftrightarrow_\Gamma$) over 
$\mathrm{T}(\mathcal{F},\mathcal{V})$
defined by $s \to_{\mathcal{R}} t$ iff
$s = C[l\sigma]$ and $s = C[r\sigma]$ for some $l \to r \in \mathcal{R}$ (resp.\ $l \mathrel{\dot\approx} r \in \Gamma$)
and context $C$ and substitution $\sigma$.
The position $p$ such that $C|_p = \square$
is called the \emph{redex position} of the rewrite step,
and we sometimes write $s \to_{p,\mathcal{R}} t$ to indicate 
the redex position of this rewrite step explicitly.
A \emph{rewrite sequence} is (finite or infinite) consecutive applications of rewrite steps.
A rewrite sequence of the form 
$t_1 \mathrel{_\mathcal{R}{\gets}} t_0 \to_\mathcal{R} t_2$ is called a \emph{local peak}.

Let $l_1 \to r_1$ and $l_2 \to r_2$ be rewrite rules such that
$\mathcal{V}(l_1) \cap \mathcal{V}(l_2) = \emptyset$.
Suppose that there exists a position $p$ in $l_2$ 
such that $l_2|_p$ and $l_1$ are unifiable.
Let $\sigma = \mathrm{mgu}(l_1,l_2|_p)$.
A local peak $l_2[r_1]_p\sigma \mathrel{_\mathcal{R}{\gets}} l_2\sigma \to_\mathcal{R} r_2\sigma$ is called
a \emph{critical peak} of the rewrite rule $l_1 \to r_1$ over the rewrite rule $l_2 \to r_2$,
provided that it is not the case that $p = \epsilon$ and $l_1 \to r_1$ and $l_2 \to r_2$ are identical.
The term pair $\langle l_2[r_1]_p\sigma, r_2\sigma \rangle$ is called a \emph{critical pair} in $\mathcal{R}$.
It is called an \emph{overlay} critical pair if $p = \epsilon$;
it is called an \emph{inner-outer} critical pair if $p \neq \epsilon$.
The set of (overlay, inner-outer) critical pairs from rules in a 
TRS $\mathcal{R}$ is denoted by $\mathrm{CP}(\mathcal{R})$
(resp.\ $\mathrm{CP}_\textit{out}(\mathcal{R})$, $\mathrm{CP}_\textit{in}(\mathcal{R})$).

Let $l\approx r$ be an equation
and let $\Gamma$ be a sequence $s_1 \approx t_1,\ldots,s_k \approx t_k$ of equations.
An expression of the form $\Gamma \Rightarrow l\approx r$ is called a \emph{conditional equation}.
A conditional equation $\Gamma \Rightarrow l\approx r$ is a conditional rewrite rule
if $l \notin \mathcal{V}$; in this case $\Gamma \Rightarrow l\approx r$ is written
as $l\to r \Leftarrow \Gamma$.
The sequence $\Gamma$ is called the \emph{condition part} of the conditional rewrite rule.
A finite set of conditional rewrite rules is called 
a \emph{conditional term rewriting system} (\emph{CTRS}, for short).
A CTRS is left-linear is so are all rewrite rules.
CTRS $\mathcal{R}$ is said to be of \emph{type 3} (type 1)
if $\mathcal{V}(r) \subseteq \mathcal{V}(l)\cup \mathcal{V}(c)$
(resp.\ $\mathcal{V}(c) \cup \mathcal{V}(r) \subseteq \mathcal{V}(l)$)
for all $l \to r \Leftarrow c \in \mathcal{R}$.

The notion of critical pairs of TRSs is naturally generalized
to the notion of \emph{conditional} critical pairs of CTRSs.
Let $l_1 \to r_1 \Leftarrow \Gamma_1 $ and $l_2 \to r_2 \Leftarrow \Gamma_2$ be conditional
rewrite rules such that
$\mathcal{V}(l_1,r_1,\Gamma_1) \cap \mathcal{V}(l_2,r_2,\Gamma_2) = \emptyset$.
Suppose that $l_2|_p$ and $l_1$ are unifiable and $\sigma = \mathrm{mgu}(l_1,l_2|_p)$.
Then the ternary relation of a sequence of equations and two terms
$\Gamma_1\sigma,\Gamma_2\sigma \Rightarrow  \langle l_2[r_1]_p\sigma, r_2\sigma \rangle$ 
is called a \emph{conditional critical pair},
provided that it is not the case that $p = \epsilon$ and $l_1 \to r_1 \Leftarrow \Gamma_1$ 
and $l_2 \to r_2 \Leftarrow \Gamma_2$ are identical.
Here, $\Gamma_1\sigma,\Gamma_2\sigma$ is a sequence of equations
obtained by the juxtaposition of sequences $\Gamma_1\sigma$ and $\Gamma_2\sigma$.
It is called overlay if $p = \epsilon$;
it is called inner-outer if $p \neq \epsilon$.
The set of conditional critical pairs from conditional rewrite rules in a 
CTRS $\mathcal{R}$ is denoted by $\mathrm{CCP}(\mathcal{R})$
(resp.\ $\mathrm{CCP}_\textit{out}(\mathcal{R})$, $\mathrm{CCP}_\textit{in}(\mathcal{R})$).
A CTRS $\mathcal{R}$ is \emph{orthogonal} if
it is left-linear and $\mathrm{CCP}(\mathcal{R}) = \emptyset$.

Several types of CTRSs are distinguished according to how the
condition part of the conditional rewrite rules is interpreted
to define the rewrite steps.
In this paper, we are interested in \emph{semi-equational} CTRSs where
the equations in condition parts are interpreted by convertibility $\stackrel{*}{\leftrightarrow}$.
Formally, the conditional rewrite step $\to_\mathcal{R}$ of a semi-equational CTRS $\mathcal{R}$
is defined, using auxiliary relations ${\to}_{\mathcal{R}}^{(n)}$ ($n \ge 0$), like this:
\[
\begin{array}{l@{\,}c@{\,}l}
  {\to}_{\mathcal{R}}^{(0)} &=& \emptyset \\[1ex]
  {\to}_{\mathcal{R}}^{(n+1)}&=& 
\{ \langle C[l\sigma],C[r\sigma] \rangle \mid l \to  r \Leftarrow s_1 \approx t_1,\ldots,s_k \approx t_k 
\in \mathcal{R},\\[-1ex]
&& \qquad\qquad\qquad\quad
\forall i ~(1 \le i \le k).\, s_i\sigma \mathrel{\overset{*}{\leftrightarrow}^{\strut}_{\mathcal{R}}{\!\!\!\!\!\!\strut}^{(n)}} t_i\sigma) \} \\
  \to_{\mathcal{R}}&=&\underset{n\in \mathbb{N}}{\bigcup} {\to}^{(n)}_{\mathcal{R}}
\end{array}
\]
The \emph{rank} of conditional rewrite step $s \to_\mathcal{R} t$ is 
the least $n$ such that $s \to^{(n)}_\mathcal{R} t$.

Let $\mathcal{R}$ be a TRS or CTRS.
The set of normal forms w.r.t. $\to_\mathcal{R}$ is written as $\mathrm{NF}(\mathcal{R})$.
A (C)TRS $\mathcal{R}$ has UNC (CR) if $\text{UNC}(\to_\mathcal{R})$ (resp.\ $\text{CR}(\to_\mathcal{R})$)
on the set $\mathrm{T}(\mathcal{F},\mathcal{V})$.
Let $\mathcal{E}$ be a set or sequence of equations or rewrite rules.
We denote $\approx_\mathcal{E}$ the congruence closure of $\mathcal{E}$.
We write $\vdash_\mathcal{E} l \approx r$ if $l \stackrel{*}{\leftrightarrow}_\mathcal{E} r$.
For sets or sequences $\Gamma$ and $\Sigma$ of equations,
we write $\vdash_\mathcal{E} \Sigma$ if $\vdash_\mathcal{E} l \approx r$ for all $l \approx r \in \Sigma$,
and $\Gamma \vdash_\mathcal{E} \Sigma$ if 
$\vdash_\mathcal{E} \Gamma\sigma$ implies  $\vdash_\mathcal{E} \Sigma\sigma$
for any substitution $\sigma$.

\section{Conditional linearization revisited}

The plan of this section is as follows:
We first revisit the conditional linearization technique for proving UNC
in Section \ref{subsec:Conditional linearization}.
Then, we present two new UNC criteria based on this approach 
in Section \ref{subsec:UNC by conditional linearization}.
We remark on automation of check of the criteria in Section \ref{subsec:Automation}.

\subsection{Conditional linearization}
\label{subsec:Conditional linearization}

A conditional linearization is a translation from TRSs to CTRSs
which eliminates non-left-linear rewrite rules, say $f(x,x) \to r$, 
by replacing them with a corresponding conditional rewrite rules,
such as $f(x,y) \to r \Leftarrow x \approx y$.
Formally, let
$l = C[x_1,\dots,x_n]$ with all variable occurrences in $l$ displayed
(i.e.\ $\mathcal{V}(C) = \emptyset$).
Note here $l$ may be a non-linear term
and some variables in $x_1,\ldots,x_n$ may be identical.
Let $l' = C[x_1',\dots,x_n']$ where $x_1',\ldots,x_n'$ are mutually distinct fresh variables
and $\delta$ be a substitution such that $\delta(x_i') = x_i$ $(1 \le i \le n)$
and $\mathrm{dom}(\sigma) = \{ x_1', \ldots, x_n' \}$.
A conditional rewrite rule $l' \to r' \Leftarrow \Gamma$ is 
a \emph{conditional linearization} of a rewrite rule $l \to r$
if $r'\delta = r$
and $\Gamma$ is a sequence of equations of the form $x_i \approx x_j$ ($1 \le i,j \le n$)
such that $x_i' \approx_\Gamma x_j'$ iff $x_i'\delta = x_j'\delta$ holds for any $1 \le i,j \le n$.
A conditional linearization of a TRS 
$\mathcal{R}$ is a semi-equational CTRS (denoted by $\mathcal{R}^L$)
obtained by replacing each rewrite rule with its conditional linearization.
Note that the results of conditional linearizations are not unique,
and any results of conditional linearization is a left-linear CTRS of type 1.

Conditional linearization is useful for showing UNC of non-left-linear TRSs.
The key observation is 
$\mathrm{CR}(\mathcal{R}^L)$
implies 
$\mathrm{UNC}(\mathcal{R})$.
For this, we use Proposition~\ref{prop:abst CR to UNC}
for ${\to}_0  := {\to}_{\mathcal{R}}$
and ${\to}_1  := {\to}_{\mathcal{R}^L}$.
Clearly, ${\to}_{\mathcal{R}} \subseteq {\to}_{\mathcal{R}^L}$,
and thus the condition (1) of Proposition~\ref{prop:abst CR to UNC} holds.
Suppose $\mathrm{CR}(\mathcal{R}^L)$.
Then, one can easily show that 
$\mathrm{NF}(\mathcal{R}) \subseteq \mathrm{NF}(\mathcal{R}^L)$
by induction on the rank of conditional rewrite steps.
Thus, the condition (2) of Proposition~\ref{prop:abst CR to UNC} implies its condition (3).
Hence, $\mathrm{CR}(\mathcal{R}^L)$ implies $\text{UNC}(\mathcal{R})$.

Now, for semi-equational CTRSs, the following confluence criterion is known.

\begin{proposition}[\cite{BergstralJAandKlopJW:crs,ODonnell}]
Orthogonal semi-equational CTRSs are confluent.
\end{proposition}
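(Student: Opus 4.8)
The plan is to prove confluence of ${\to}_{\mathcal{R}} = \bigcup_{n}{\to}^{(n)}_{\mathcal{R}}$ by a rank‑stratified induction, adapting the parallel‑reduction (Tait--Martin-L\"of) method to conditional rewriting. First I would reduce the global statement to a per‑rank statement. Since the family $\{{\to}^{(n)}_{\mathcal{R}}\}_{n}$ is monotone --- ${\to}^{(n)}_{\mathcal{R}} \subseteq {\to}^{(n+1)}_{\mathcal{R}}$, because $\overset{*}{\leftrightarrow}{}^{(n-1)}_{\mathcal{R}} \subseteq \overset{*}{\leftrightarrow}{}^{(n)}_{\mathcal{R}}$ by a side induction --- any finite peak $s \mathrel{\overset{*}{\gets}_{\mathcal{R}}} u \mathrel{\overset{*}{\to}_{\mathcal{R}}} t$ uses only steps of rank $\le N$ for some $N$, hence lies inside ${\overset{*}{\gets}{}^{(N)}_{\mathcal{R}}} \circ {\overset{*}{\to}{}^{(N)}_{\mathcal{R}}}$. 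So it suffices to establish $\mathrm{CR}({\to}^{(n)}_{\mathcal{R}})$ for every $n$, which I would do by induction on $n$; the base case $n = 0$ is trivial since ${\to}^{(0)}_{\mathcal{R}} = \emptyset$.

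For the inductive step, assuming $\mathrm{CR}({\to}^{(n)}_{\mathcal{R}})$, I would introduce a parallel rewrite relation $\Rrightarrow_{n+1}$ that simultaneously contracts a set of pairwise non‑overlapping ${\to}^{(n+1)}_{\mathcal{R}}$‑redexes, arranged so that ${\to}^{(n+1)}_{\mathcal{R}} \subseteq {\Rrightarrow_{n+1}} \subseteq {\overset{*}{\to}{}^{(n+1)}_{\mathcal{R}}}$, and then prove the diamond property ${\Lleftarrow_{n+1}} \circ {\Rrightarrow_{n+1}} \subseteq {\Rrightarrow_{n+1}} \circ {\Lleftarrow_{n+1}}$; confluence of ${\to}^{(n+1)}_{\mathcal{R}}$ follows by the usual tiling argument, and then the first paragraph gives $\mathrm{CR}({\to}_{\mathcal{R}})$. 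The diamond is verified by the standard case analysis on the relative positions of two coinitial parallel redex sets: left‑linearity of the rules together with the absence of conditional critical pairs (orthogonality) guarantees that any two coinitial redexes are either disjoint or one lies strictly inside a \emph{variable} part of the substitution of the other, so residuals are well defined and a complete development of all residuals closes the diamond, exactly as in the unconditional orthogonal case.

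The genuinely conditional part, which I expect to be the main obstacle, is showing that \emph{residuals of conditional redexes are again legitimate redexes of the same rank}. When a redex $l\sigma \to r\sigma$ with conditions $s_i\sigma \mathrel{\overset{*}{\leftrightarrow}{}^{(n)}_{\mathcal{R}}} t_i\sigma$ has other contracted redexes nested inside $\sigma$, the substitution changes to some $\sigma'$ with $\sigma(x) \mathrel{\overset{*}{\to}_{\mathcal{R}}} \sigma'(x)$ for the affected variables, and one must re‑establish $s_i\sigma' \mathrel{\overset{*}{\leftrightarrow}{}^{(n)}_{\mathcal{R}}} t_i\sigma'$ so that the residual of the outer step again has rank $n+1$. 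When the nested contractions have rank $\le n$ this is immediate from $s_i\sigma' \mathrel{\overset{*}{\gets}{}^{(n)}_{\mathcal{R}}} s_i\sigma \mathrel{\overset{*}{\leftrightarrow}{}^{(n)}_{\mathcal{R}}} t_i\sigma \mathrel{\overset{*}{\to}{}^{(n)}_{\mathcal{R}}} t_i\sigma'$; the delicate case is when a nested redex itself has rank $n+1$, and here one exploits the \emph{semi‑equational} nature of the conditions: because the condition relation is a convertibility (reflexive and symmetric) and the underlying rules are left‑linear, the seemingly destructive self‑overlaps produced by non‑linear instances such as $f(z,z)\to r$ of a rule $f(x,y)\to r \Leftarrow x\approx y$ are harmless, so the two orders of contraction can be reconciled. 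To make this precise I would strengthen the induction hypothesis so that it records that $\overset{*}{\leftrightarrow}{}^{(n)}_{\mathcal{R}}$ is exactly the conversion generated by the confluent relation ${\to}^{(n)}_{\mathcal{R}}$ (hence level‑$n$ conversions can always be replaced by valleys), and carry this property along with the diamond through the induction. With that in hand the induction closes and, combined with the reduction of the first paragraph, yields $\mathrm{CR}(\to_{\mathcal{R}})$.
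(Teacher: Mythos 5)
The paper does not prove this proposition---it is quoted from Bergstra--Klop and O'Donnell---so your proposal has to be measured against the classical argument (which the paper's own Appendix proof of Theorem~\ref{thm:parallel closed} closely mirrors). Your first reduction is sound, but it reduces confluence to \emph{level}-confluence, i.e.\ to $\mathrm{CR}({\to}^{(n)}_{\mathcal{R}})$ for every $n$, and that statement is simply false for orthogonal semi-equational CTRSs, so the induction you set up cannot close. Concretely, take $\mathcal{R} = \{\, p(x) \to q(x) \Leftarrow x \approx a,\;\; a \to b \Leftarrow c \approx d,\;\; c \to d \,\}$, which is left-linear and has no conditional critical pairs. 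Then $c \to^{(1)} d$, hence $a \to^{(2)} b$, and $p(a) \to^{(1)} q(a)$ (its condition $a \approx a$ holds at level $0$). The peak $q(a) \mathrel{{}^{(1)}\!\gets} p(a) \to^{(2)} p(b)$ lies inside ${\to}^{(2)}$, but $p(b)$ is a ${\to}^{(2)}$-normal form: the step $p(b) \to q(b)$ needs $b \stackrel{*}{\leftrightarrow} a$, which requires the rank-$2$ step $a \to b$, so it has rank $3$; meanwhile the ${\to}^{(2)}$-reducts of $q(a)$ are only $q(a)$ and $q(b)$. So ${\to}^{(2)}$ is not confluent (the peak is joinable only at rank $3$), and the per-level diamond you aim for does not exist. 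The point of failure is exactly the ``delicate case'' you flag: after contracting a nested rank-$(n{+}1)$ redex inside the substitution of another rank-$(n{+}1)$ redex, the only available justification of the residual's condition is $s_i\sigma' \leftrightarrow s_i\sigma \stackrel{*}{\leftrightarrow}{}^{(n)} t_i\sigma \leftrightarrow t_i\sigma'$, a level-$(n{+}1)$ conversion, so the residual is in general only a rank-$(n{+}2)$ redex; strengthening the induction hypothesis so that level-$n$ conversions become level-$n$ valleys does not help, because the offending conversion is at level $n{+}1$, not $n$.

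The repair is to drop the stratification from the confluence argument itself. Since every conversion is finite and ranks are monotone, a rule instance is a $\to_{\mathcal{R}}$-step if and only if its conditions hold w.r.t.\ the \emph{full} conversion $\stackrel{*}{\leftrightarrow}_{\mathcal{R}}$, and this validity is stable under arbitrary conversion of the matching substitution: from $\theta(y) \stackrel{*}{\leftrightarrow}_{\mathcal{R}} \theta'(y)$ and $\vdash_{\mathcal{R}}\Gamma\theta$ one gets $\vdash_{\mathcal{R}}\Gamma\theta'$ directly, with no rank bookkeeping. Hence residuals of redexes are again redexes (possibly of larger rank, which is harmless), and the usual parallel-moves/complete-development argument for orthogonal systems goes through for the full parallel relation; the rank is needed only to make $\to_{\mathcal{R}}$ well-defined, not to stratify the confluence proof. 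This is precisely how the cited Bergstra--Klop proof proceeds, and it is the mechanism used in the paper's Lemma~\ref{lem:cp lemma I} and in the proof of Theorem~\ref{thm:parallel closed}, of which the proposition is the critical-pair-free special case.
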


A TRS $\mathcal{R}$ is \emph{strongly non-overlapping} if $\mathrm{CCP}(\mathcal{R}^L) = \emptyset$.
Hence, it follows:

\begin{proposition}[\cite{KV90,condlin}]
\label{prop:stongly non-overlapping}
Strongly non-overlapping TRSs have UNC.
\end{proposition}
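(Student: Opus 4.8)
\emph{Proof proposal.} The plan is to reduce the statement to the two facts already assembled above: (i) for any TRS $\mathcal{R}$ and any conditional linearization $\mathcal{R}^L$, one has $\mathrm{CR}(\mathcal{R}^L) \Rightarrow \mathrm{UNC}(\mathcal{R})$, established via Proposition~\ref{prop:abst CR to UNC} applied with ${\to}_0 := {\to}_{\mathcal{R}}$ and ${\to}_1 := {\to}_{\mathcal{R}^L}$, using ${\to}_{\mathcal{R}} \subseteq {\to}_{\mathcal{R}^L}$ and the observation $\mathrm{NF}(\mathcal{R}) \subseteq \mathrm{NF}(\mathcal{R}^L)$ (which follows by induction on the rank of conditional rewrite steps once $\mathrm{CR}(\mathcal{R}^L)$ is known); and (ii) orthogonal semi-equational CTRSs are confluent. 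Thus it suffices, given a strongly non-overlapping TRS $\mathcal{R}$, to produce a conditional linearization $\mathcal{R}^L$ that is orthogonal and then invoke (ii) and (i) in turn.

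First I would fix an arbitrary conditional linearization $\mathcal{R}^L$ of $\mathcal{R}$; by construction $\mathcal{R}^L$ is a semi-equational CTRS that is left-linear and of type~1. Next I would unfold the definition of strong non-overlappingness: $\mathcal{R}$ being strongly non-overlapping means exactly $\mathrm{CCP}(\mathcal{R}^L) = \emptyset$, so $\mathcal{R}^L$ has no conditional critical pairs. Being left-linear with $\mathrm{CCP}(\mathcal{R}^L) = \emptyset$, $\mathcal{R}^L$ is orthogonal by definition. Applying the cited confluence result for orthogonal semi-equational CTRSs gives $\mathrm{CR}(\mathcal{R}^L)$, and then the implication (i) yields $\mathrm{UNC}(\mathcal{R})$, which is the claim.

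I expect no real obstacle: the statement is a corollary, and all the substantive work — the abstract principle of Proposition~\ref{prop:abst CR to UNC}, the rank induction establishing $\mathrm{NF}(\mathcal{R}) \subseteq \mathrm{NF}(\mathcal{R}^L)$, and the confluence of orthogonal semi-equational CTRSs — has already been carried out in the results cited above. The one point deserving a word of care is that conditional linearization is not unique; but since strong non-overlappingness is stated directly in terms of $\mathrm{CCP}(\mathcal{R}^L)$, it is enough to fix one witnessing $\mathcal{R}^L$ and run the argument above with that choice.
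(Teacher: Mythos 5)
Your proposal is correct and follows essentially the same route as the paper: strong non-overlappingness means $\mathrm{CCP}(\mathcal{R}^L)=\emptyset$, so the (left-linear, type~1) conditional linearization is orthogonal, hence confluent by the cited result for semi-equational CTRSs, and $\mathrm{CR}(\mathcal{R}^L)$ yields $\mathrm{UNC}(\mathcal{R})$ via Proposition~\ref{prop:abst CR to UNC}. Your remark on fixing one witnessing linearization is a fine point of care but raises no issue.
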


As we mentioned in the introduction,
this proposition is subsumed by Proposition \ref{prop:UNomega}.

\subsection{UNC by conditional linearization}
\label{subsec:UNC by conditional linearization}

We now give some simple extensions of Proposition~\ref{prop:stongly non-overlapping}
which are easily incorporated from \cite{Hue80},
but are not subsumed by Proposition \ref{prop:UNomega}.
For this, let us recall the notion of parallel rewrite steps.
A \emph{parallel rewrite step} $s \plarrow_\mathcal{R} t$ is defined
like this: $s \plarrow_\mathcal{R} t$
iff 
$s = C[l_1\sigma_1,\ldots,l_n\sigma_n]$
and 
$t = C[r_1\sigma_1,\ldots,r_n\sigma_n]$
for some rewrite rules $l_1 \to r_1,\ldots,l_n \to r_n \in \mathcal{R}$
and context $C$ and substitutions $\sigma_1,\ldots,\sigma_n$
($n \ge  0$).
Let us write $\Gamma \vdash_\mathcal{R} u \to v$
if $\vdash_\mathcal{R} \Gamma\sigma$ implies $u\sigma \to_\mathcal{R} v\sigma$ for any substitution $\sigma$.
We define $\Gamma \vdash_\mathcal{R} u \plarrow_\mathcal{R} v$, etc.\ analogously.



The following notion is a straightforward extension of
the corresponding notion of \cite{Hue80,Toyama88}.

\begin{definition}
A semi-equational CTRS $\mathcal{R}$ is \emph{parallel-closed}
if (i) $\Gamma \vdash_\mathcal{R}  u \plarrow v$
for any inner-outer conditional critical pair
$\Gamma \Rightarrow \langle u, v \rangle$ of $\mathcal{R}$,
and
(ii) $\Gamma \vdash_\mathcal{R} u \plarrow \circ \stackrel{*}{\gets} v$
for any overlay conditional critical pair
$\Gamma \Rightarrow \langle u, v \rangle$ of $\mathcal{R}$.
\end{definition}


We now come to our first extension of Proposition~\ref{prop:stongly non-overlapping},
the proof, which is very similar to the one for TRSs, 
is given in Appendix~\ref{sec:omitted proofs}.

\begin{theorem}
\label{thm:parallel closed}
Parallel-closed semi-equational CTRSs are confluent.
\end{theorem}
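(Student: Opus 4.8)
The plan is to adapt Huet's classical proof that parallel-closed (ordinary) TRSs are confluent to the semi-equational conditional setting. The standard strategy is to establish the strong-diamond property for the parallel rewrite relation, i.e.\ that $\plarrow_{\mathcal{R}}$ enjoys ${{}_{\mathcal{R}}{\plarrow}} \circ {\plarrow_{\mathcal{R}}} \subseteq {\plarrow_{\mathcal{R}}} \circ {\stackrel{*}{\gets}_{\mathcal{R}}}$ (or a suitable asymmetric variant matching the two clauses of the definition), since this immediately gives confluence of $\to_{\mathcal{R}}$ because ${\to_{\mathcal{R}}} \subseteq {\plarrow_{\mathcal{R}}} \subseteq {\stackrel{*}{\to}_{\mathcal{R}}}$. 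Concretely, I would take a local peak $t_1 \mathrel{_{\mathcal{R}}{\plarrow}} s \plarrow_{\mathcal{R}} t_2$, where $s = C[l_1\sigma_1,\dots,l_m\sigma_m]$ contracts to $t_1$ and $s = D[l'_1\tau_1,\dots,l'_n\tau_n]$ contracts to $t_2$, and analyze how the two families of redex occurrences relate: disjoint occurrences commute trivially; nested occurrences where the inner redex lies in the substitution part are handled by the usual substitution/parallel-step bookkeeping (here using that each $l_i$ is left-linear, which holds because $\mathcal{R}^L$-style rules produced by conditional linearization are left-linear — but we only need the hypotheses as stated); and genuinely overlapping occurrences are exactly where a (conditional) critical pair appears.

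Because the CTRS is semi-equational, the argument must be organized as an induction on the \emph{rank} of the rewrite steps involved (the least $n$ with $s \to^{(n)}_{\mathcal{R}} t$), in addition to the usual induction on term structure / number of parallel redexes. The point is that, when we invoke a conditional critical pair $\Gamma \Rightarrow \langle u,v\rangle$ via some unifying substitution extended to $\rho$, the needed premise $\vdash_{\mathcal{R}} \Gamma\rho$ holds because the conditions of the two overlapping rule instances are satisfied at strictly smaller rank, so the conclusions $\Gamma \vdash_{\mathcal{R}} u \plarrow v$ (inner-outer case) and $\Gamma \vdash_{\mathcal{R}} u \plarrow \circ \stackrel{*}{\gets} v$ (overlay case) can be instantiated at $\rho$; the closing parallel/reduction steps thus obtained are themselves of rank below that of the original peak, so the induction hypothesis applies to splice them together with the non-overlapping parts. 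The asymmetry between clauses (i) and (ii) is handled exactly as in Huet's proof: inner-outer overlaps are joined symmetrically in one parallel step on each side, while for overlay overlaps the "outer" side is closed by a parallel step and the "inner" side may need several steps $\stackrel{*}{\gets}$, which is still compatible with proving the asymmetric diamond property that yields $\mathrm{CR}$.

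The main obstacle I expect is the rank bookkeeping in the overlapping cases: one has to verify carefully that instantiating a conditional critical pair does not raise the rank (the conditions of the overlapping instances hold at rank $<$ that of the peak, and $\mathrm{mgu}$ composed with the peak's substitutions still satisfies them at the same or smaller rank), and that the $\stackrel{*}{\gets}$ steps introduced on the "inner" side of an overlay join can be absorbed by the induction without circularity — i.e.\ that the well-founded measure (rank, then term size or redex count) strictly decreases at every recursive appeal. A secondary technical point is making the substitution lemmas for $\plarrow$ work uniformly when a parallel redex on one side sits inside the substitution of a parallel redex on the other side, duplicating or deleting sub-steps according to $\vert l\vert_x$; here left-linearity of the rules keeps this manageable. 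Once these lemmas and the rank induction are set up, the case analysis itself is routine and parallels \cite{Hue80,Toyama88}. The full details are deferred to Appendix~\ref{sec:omitted proofs}.
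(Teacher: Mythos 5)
Your overall Huet--Toyama organization is the same as the paper's: analyze a parallel peak $t_1 \mathrel{\prarrow} t \plarrow t_2$ by cases (disjoint redexes, redexes inside the substitution part, genuine overlaps yielding conditional critical pairs) and establish the asymmetric property that there exists $t_3$ with $t_1 \plarrow^* t_3$ and $t_2 \plarrow t_3$. The genuine gap is in your induction measure. You organize the argument as an induction on the rank of the conditional rewrite steps (lexicographically with term size or redex count), and you justify the recursive appeals by asserting that the closing steps obtained from parallel-closedness ``are themselves of rank below that of the original peak.'' That assertion is unjustified and in general false: the hypothesis $\Gamma \vdash_\mathcal{R} u \plarrow v$ only guarantees \emph{some} closing parallel step in the full system once $\vdash_\mathcal{R} \Gamma\theta$ holds; nothing bounds its rank by the rank of the steps forming the peak. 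Similarly, in the variable-overlap case one passes from $\theta$ to a substitution $\theta'$ with $\theta'(y) \stackrel{*}{\leftrightarrow}_\mathcal{R} \theta(y)$, and the conversions needed to re-establish $\vdash_\mathcal{R} \Gamma\theta'$ are again of unbounded rank. So the lexicographic measure (rank, then size) you propose is not shown to decrease at the recursive calls --- which is precisely the obstacle you flag at the end, and your sketch does not resolve it.

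The fix, and the route the paper takes, is to drop rank altogether. Since the CTRS is semi-equational, conditions are interpreted by full convertibility $\stackrel{*}{\leftrightarrow}_\mathcal{R}$, so their satisfaction transfers directly: from $\vdash_\mathcal{R} \Gamma\theta$ and $\theta'(y) \stackrel{*}{\leftrightarrow}_\mathcal{R} \theta(y)$ one gets $\vdash_\mathcal{R} \Gamma\theta'$ (the variable-overlap case, Lemma~\ref{lem:cp lemma I}), and a genuinely overlapping local peak is literally an instance of a conditional critical pair whose condition part is satisfied (Lemma~\ref{lem:cp lemma II}); the closure assumptions are stated in the form $\Gamma \vdash_\mathcal{R} \cdots$, so they instantiate immediately, with no stratification. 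The induction is then solely on the total size of the inner overlapping redex occurrences $\textit{Red}_\textit{in}(t_1 \prarrow t \plarrow t_2)$, exactly as for TRSs, and this measure strictly decreases after splicing in the critical-pair closure, because the residual peak involves only the remaining original inner redexes. A further point: your case analysis (like the paper's Lemma~\ref{lem:cp lemma I}) needs left-linearity of the rules, which is not literally among the stated hypotheses; it is harmless for the intended application since conditional linearizations are left-linear, but it should be assumed explicitly rather than waved off as ``the hypotheses as stated.''
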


\begin{corollary}
\label{cor:parallel closed}
A TRS $\mathcal{R}$ has UNC if $\mathcal{R}^L$ is parallel-closed. 
\end{corollary}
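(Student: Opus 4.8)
The plan is to read the corollary off the chain of implications already assembled in Section~\ref{subsec:Conditional linearization}, with Theorem~\ref{thm:parallel closed} plugged in as the new ingredient. By construction $\mathcal{R}^L$ is a left-linear semi-equational CTRS (of type~1), so the hypothesis that $\mathcal{R}^L$ is parallel-closed lets Theorem~\ref{thm:parallel closed} apply directly and yields $\mathrm{CR}(\mathcal{R}^L)$.

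It then suffices to derive $\mathrm{UNC}(\mathcal{R})$ from $\mathrm{CR}(\mathcal{R}^L)$, and for this I would invoke Proposition~\ref{prop:abst CR to UNC} with ${\to}_0 := {\to}_{\mathcal{R}}$ and ${\to}_1 := {\to}_{\mathcal{R}^L}$, exactly as in the discussion of Section~\ref{subsec:Conditional linearization}. Its condition~(1), ${\to}_{\mathcal{R}} \subseteq {\to}_{\mathcal{R}^L}$, is immediate: a step by $l \to r \in \mathcal{R}$ with substitution $\tau$ is simulated by its conditional linearization $l' \to r' \Leftarrow \Gamma$ under the substitution $\sigma$ defined by $\sigma(x_i') = \tau(x_i)$, since $\sigma$ identifies exactly those fresh variables equated by $\Gamma$ (because $x_i' \approx_\Gamma x_j'$ iff $x_i = x_j$), so every condition of $\Gamma$ holds trivially, while $l'\sigma = l\tau$ and $r'\sigma = r\tau$. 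Condition~(2) is the confluence just established.

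The only step needing an argument is condition~(3), $\mathrm{NF}(\mathcal{R}) \subseteq \mathrm{NF}(\mathcal{R}^L)$, which I would prove by induction on the rank of conditional rewrite steps, as noted in Section~\ref{subsec:Conditional linearization}: if some $t \in \mathrm{NF}(\mathcal{R})$ admitted a step $t \to_{\mathcal{R}^L} t'$ of least rank $n+1$ via the linearization of some $l \to r \in \mathcal{R}$, then the matched arguments are subterms of $t$, hence themselves $\mathcal{R}$-normal forms, and the conditions of $\Gamma$, together with the induction hypothesis and $\mathrm{CR}(\mathcal{R}^L)$, would force these arguments to coincide exactly as dictated by $\Gamma$, making the contracted redex an $\mathcal{R}$-instance of $l$ --- contradicting $t \in \mathrm{NF}(\mathcal{R})$. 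With the three conditions in hand, Proposition~\ref{prop:abst CR to UNC} gives $\mathrm{UNC}({\to}_{\mathcal{R}})$, i.e.\ $\mathcal{R}$ has UNC. The main (and essentially only) obstacle is this rank induction for condition~(3); everything else is a bookkeeping instantiation of results already in place, so the corollary is little more than a repackaging of the Section~\ref{subsec:Conditional linearization} observation once Theorem~\ref{thm:parallel closed} supplies its premise.
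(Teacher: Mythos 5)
Your proposal is correct and follows essentially the same route as the paper: Theorem~\ref{thm:parallel closed} gives $\mathrm{CR}(\mathcal{R}^L)$, and then the Section~\ref{subsec:Conditional linearization} argument (Proposition~\ref{prop:abst CR to UNC} with ${\to}_0 := {\to}_\mathcal{R}$, ${\to}_1 := {\to}_{\mathcal{R}^L}$, inclusion of rewrite relations, and $\mathrm{NF}(\mathcal{R}) \subseteq \mathrm{NF}(\mathcal{R}^L)$ under $\mathrm{CR}(\mathcal{R}^L)$) yields $\mathrm{UNC}(\mathcal{R})$. The normal-form inclusion you sketch is exactly the step the paper itself delegates to the rank-induction remark in Section~\ref{subsec:Conditional linearization}, so your filling-in matches the paper's intended proof.
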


Next, we incorporate 
the strong confluence criterion of TRSs \cite{Hue80} to semi-equational CTRSs
in the similar way.

\begin{definition}
A semi-equational CTRS $\mathcal{R}$ is \emph{strongly closed}
if $\Gamma \vdash_\mathcal{R} u \stackrel{*}{\to} \circ \stackrel{=}{\gets} v$
and $\Gamma \vdash_\mathcal{R} u \stackrel{=}{\to} \circ \stackrel{*}{\gets} v$
for any critical pair
$\Gamma \Rightarrow \langle u, v \rangle$ of $\mathcal{R}$.
\end{definition}


Similar to the proof of Theorem~\ref{thm:parallel closed},
the following theorem is obtained in the same way as
in the proof for TRSs.

\begin{theorem}
\label{thm:strongly closed criteria}
Linear strongly closed semi-equational CTRSs are confluent.
\end{theorem}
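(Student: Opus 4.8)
The plan is to adapt Huet's proof of the strong confluence criterion for left-linear TRSs to the semi-equational conditional setting, using the parallel rewrite relation $\plarrow$ as the intermediate relation and exploiting the linearity hypothesis to control overlaps. Recall that for abstract relations, if $\to$ satisfies the strong confluence diagram $\gets \circ \to \;\subseteq\; \stackrel{*}{\to} \circ \stackrel{=}{\gets}$ (together with the symmetric variant), then $\to$ is confluent; so it would suffice to show that $\to_\mathcal{R}$ itself is strongly confluent in this sense. First I would set up the induction: since $\to_\mathcal{R} = \bigcup_n \to^{(n)}_\mathcal{R}$, I would argue by induction on the maximum of the ranks involved in the local peak $t_1 \mathrel{_\mathcal{R}{\gets}} t_0 \to_\mathcal{R} t_2$, so that when analyzing how the condition parts of the two applied rules are satisfied, I may assume confluence (hence the Church–Rosser / strong-closure consequences) for all conditional rewrite steps of strictly smaller rank.

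Next I would perform the standard case analysis on the relative positions of the two redexes in the peak $t_1 \mathrel{_{p_1,\mathcal{R}}{\gets}} t_0 \to_{p_2,\mathcal{R}} t_2$. In the parallel (disjoint) case the two steps commute trivially and the closing diagram is immediate. In the nested case where one redex lies strictly inside a variable-bound subterm of the other — this is where left-linearity is essential — the outer rule's left-hand side is linear, so the inner rewrite occurs in an argument substituted for a single variable occurrence; because the right-hand side need not be linear, I would need right-linearity to bound the multiplicity on the other side, which is exactly why the theorem hypothesizes $\mathcal{R}$ \emph{linear} rather than merely left-linear. Here one closes the diagram by doing the inner step on the appropriate (at most one) copies on each side, obtaining $t_1 \stackrel{=}{\to} \circ \stackrel{=}{\gets} t_2$, which is absorbed into the strong-closure shape. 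The genuinely overlapping case is handled by the hypothesis: the local peak is an instance of a critical peak, so there is a critical pair $\Gamma \Rightarrow \langle u,v\rangle$ with $t_1 = u\sigma$, $t_2 = v\sigma$ and $\vdash_\mathcal{R} \Gamma\sigma$ — and here I must check that the instance of $\Gamma$ really is satisfied \emph{at a rank strictly below the current one}, so that the strong-closure data $\Gamma \vdash_\mathcal{R} u \stackrel{*}{\to}\circ\stackrel{=}{\gets} v$ can be instantiated by $\sigma$ and transported along the surrounding context.

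The main obstacle I anticipate is precisely the bookkeeping around conditional rewrite steps and ranks: unlike the unconditional case, applying a rewrite step presupposes that its condition part is derivable, and the "derivation depth" of the conditions must be managed so that (a) the induction hypothesis is available exactly where needed, and (b) the closing conversions produced by $\Gamma \vdash_\mathcal{R} \cdots$ do not themselves blow up the rank in a way that breaks the induction. A related subtlety is that $\Gamma \vdash_\mathcal{R} u \stackrel{*}{\to}\circ\stackrel{=}{\gets} v$ only yields a valid closing sequence once we know $\vdash_\mathcal{R}\Gamma\sigma$ in the ambient system, and that the ambient conversion used to verify $\Gamma\sigma$ may have larger rank than the peak under analysis — so the induction measure needs to be chosen carefully (e.g. lexicographically on rank and then on some structural size), and I would expect this to be the technically delicate point, though morally it is the same argument as for unconditional linear strongly closed TRSs, as the statement and Appendix pointer suggest. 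Finally, confluence of $\to_\mathcal{R}$ follows from strong confluence by the standard abstract lemma, completing the proof.
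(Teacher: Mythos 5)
Your overall skeleton is the right one and matches what the paper intends (Huet's case analysis on the local peak: disjoint positions, variable overlap handled by linearity, critical overlap handled by the strongly closed hypothesis, then the abstract lemma that strong confluence implies confluence). But the rank-based induction you build the argument on is both unnecessary and, as you yourself concede, not made to work: you never identify where confluence of lower-rank steps would actually be used, and you leave the measure problem ("the ambient conversion used to verify $\Gamma\sigma$ may have larger rank than the peak") unresolved. That is the genuine gap, and the fix is not a cleverer lexicographic measure but the observation that in the \emph{semi-equational} setting no rank induction is needed at all. Conditions are interpreted by full conversion $\stackrel{*}{\leftrightarrow}_\mathcal{R}$, so (a) any rule application whose instantiated conditions are convertible in the full system is a legitimate $\to_\mathcal{R}$ step (a conversion uses finitely many steps, hence has bounded rank), and (b) in the variable-overlap case, when you pass from $\theta$ to the modified substitution $\theta'$ with $\theta(x) \stackrel{*}{\leftrightarrow}_\mathcal{R} \theta'(x)$, the outer rule's conditions remain satisfied simply by composing conversions, $s_i\theta' \stackrel{*}{\leftrightarrow}_\mathcal{R} s_i\theta \stackrel{*}{\leftrightarrow}_\mathcal{R} t_i\theta \stackrel{*}{\leftrightarrow}_\mathcal{R} t_i\theta'$ --- no Church--Rosser property of lower ranks is invoked anywhere. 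This is exactly the content of Lemma~\ref{lem:cp lemma I} (and Lemma~\ref{lem:cp lemma II} for the critical case) in the appendix, restated for single steps instead of parallel steps; the closure data $\Gamma \vdash_\mathcal{R} u \stackrel{*}{\to} \circ \stackrel{=}{\gets} v$ is by definition usable for \emph{any} substitution $\sigma$ with $\vdash_\mathcal{R} \Gamma\sigma$, so the rank of the conversions certifying $\Gamma\sigma$ is irrelevant, and the closing steps are not subject to any induction hypothesis because only the peak is analysed, never the closing sequences.

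Two smaller points. First, for this theorem you do not need $\plarrow$ as an intermediate relation; that device belongs to the parallel-closed criterion (Theorem~\ref{thm:parallel closed}), whereas here one shows directly that $\to_\mathcal{R}$ itself satisfies $\gets \circ \to \;\subseteq\; \stackrel{*}{\to} \circ \stackrel{=}{\gets}$ (and symmetrically), with no induction in the nested case thanks to left- and right-linearity, exactly as in the unconditional proof. Second, your requirement in the critical case that $\Gamma\sigma$ be "satisfied at a rank strictly below the current one" is harmless but beside the point: what Lemma~\ref{lem:cp lemma II} gives is $\vdash_\mathcal{R} \Sigma\sigma$ with respect to the full relation, which is all the strong-closure hypothesis asks for. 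Once these conditional bookkeeping steps are factored out as the two lemmas, the rest of your case analysis (including the use of right-linearity to get $t_1 \stackrel{=}{\to} \circ \gets t_2$ in the variable-overlap case) is correct and is the paper's argument.
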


\begin{corollary}
\label{cor:strongly closed criteria}
A right-linear TRS $\mathcal{R}$ has UNC if $\mathcal{R}^L$ is strongly closed.
\end{corollary}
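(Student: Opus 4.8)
The plan is to reduce the statement to Theorem~\ref{thm:strongly closed criteria} via Proposition~\ref{prop:abst CR to UNC}, exactly in the way already outlined in Section~\ref{subsec:Conditional linearization} for the strongly non-overlapping case. Fix a right-linear TRS $\mathcal{R}$ and assume that some conditional linearization $\mathcal{R}^L$ is strongly closed. Take ${\to}_0 := {\to}_{\mathcal{R}}$ and ${\to}_1 := {\to}_{\mathcal{R}^L}$. Condition (1) of Proposition~\ref{prop:abst CR to UNC}, namely ${\to}_{\mathcal{R}} \subseteq {\to}_{\mathcal{R}^L}$, is immediate: every $\mathcal{R}$-step $C[l\sigma] \to_{\mathcal{R}} C[r\sigma]$ is an $\mathcal{R}^L$-step using the conditioned rule $l' \to r' \Leftarrow \Gamma$, with the substitution $\sigma' = \delta\sigma$ (precomposing the collapsing substitution $\delta$ with $\sigma$), since then $l'\sigma' = l\sigma$, $r'\sigma' = r\sigma$, and every condition $x_i \approx x_j$ in $\Gamma$ is satisfied because $x_i\sigma = x_j\sigma$ whenever $x_i'\delta = x_j'\delta$. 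Condition (3), $\mathrm{NF}(\mathcal{R}) \subseteq \mathrm{NF}(\mathcal{R}^L)$, is obtained by the same induction on the rank of conditional rewrite steps that the excerpt already invokes; this step does not use strong closure at all, only that $\mathcal{R}^L$ is a conditional linearization of $\mathcal{R}$.

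So the whole burden is condition (2): $\mathrm{CR}(\to_{\mathcal{R}^L})$. For this I would apply Theorem~\ref{thm:strongly closed criteria}, which requires $\mathcal{R}^L$ to be \emph{linear} and strongly closed. Strong closure is assumed. Linearity of $\mathcal{R}^L$ is where right-linearity of $\mathcal{R}$ enters and must be checked carefully: by construction every conditional linearization is left-linear (the linearized left-hand side $l' = C[x_1',\dots,x_n']$ has pairwise distinct fresh variables), so the only thing to verify is right-linearity of each linearized rule $l' \to r' \Leftarrow \Gamma$. Here $r' $ is obtained from the original right-hand side $r$ by renaming occurrences of variables apart so that $r'\delta = r$; one has to argue that this can be done (or: that \emph{some} conditional linearization does it) so that $r'$ is linear. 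Since $r$ is already linear by hypothesis, $r'$ can simply be taken to be $r$ itself with its variables left untouched (each variable of $r$ occurs once, so no renaming against $\delta$ is forced), and then $r'$ is linear; the conditions $\Gamma$ still only need to record the coincidences among the displayed left-hand-side variables, which is independent of $r$. Thus a linear conditional linearization $\mathcal{R}^L$ exists, and we choose that one. Then Theorem~\ref{thm:strongly closed criteria} gives $\mathrm{CR}(\to_{\mathcal{R}^L})$, and Proposition~\ref{prop:abst CR to UNC} yields $\mathrm{UNC}(\to_{\mathcal{R}})$, i.e.\ $\mathcal{R}$ has UNC.

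The main obstacle I anticipate is purely bookkeeping rather than conceptual: making precise the claim ``right-linearity of $\mathcal{R}$ lets us pick a linear $\mathcal{R}^L$.'' One must be slightly careful that the definition of conditional linearization in Section~\ref{subsec:Conditional linearization} displays \emph{all} variable occurrences of $l$ and introduces a fresh variable for each such occurrence, while leaving $r'$ constrained only by $r'\delta = r$; so the right-hand side is free to be $r$ verbatim, and no duplication is introduced on the right unless $r$ itself already had it. Once this is spelled out, everything else is a direct citation of the two results already in hand, and there is no interaction between the strong-closure hypothesis and the linearity hypothesis beyond both being fed into Theorem~\ref{thm:strongly closed criteria}. (A small auxiliary remark worth including: the choice of which conditional linearization is used is immaterial for UNC, since the argument goes through for any linear one, and a linear one exists precisely because $\mathcal{R}$ is right-linear.)
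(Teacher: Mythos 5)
Your proposal follows essentially the same route as the paper, which gives no separate proof of Corollary~\ref{cor:strongly closed criteria}: it is exactly Theorem~\ref{thm:strongly closed criteria} (linearity of $\mathcal{R}^L$ coming from left-linearity by construction plus right-linearity of $\mathcal{R}$) combined with the argument of Section~\ref{subsec:Conditional linearization} that $\mathrm{CR}(\mathcal{R}^L)$ implies $\mathrm{UNC}(\mathcal{R})$ via Proposition~\ref{prop:abst CR to UNC}. Two corrections, though. First, your parenthetical claim that the inclusion $\mathrm{NF}(\mathcal{R}) \subseteq \mathrm{NF}(\mathcal{R}^L)$ ``does not use strong closure at all, only that $\mathcal{R}^L$ is a conditional linearization'' is false as an unconditional statement: for $\mathcal{R} = \{ f(x,x) \to d,\ a \to b,\ a \to c \}$ the term $f(b,c)$ is an $\mathcal{R}$-normal form but is $\mathcal{R}^L$-reducible, since $b \stackrel{*}{\leftrightarrow}_{\mathcal{R}^L} c$. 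The inclusion is proved, as in Section~\ref{subsec:Conditional linearization}, \emph{under the hypothesis} $\mathrm{CR}(\mathcal{R}^L)$ (``condition (2) implies condition (3)''): one first gets confluence from Theorem~\ref{thm:strongly closed criteria}, and only then the normal-form inclusion, e.g.\ by induction on subterms --- the instantiated condition terms $x_i\sigma, x_j\sigma$ are proper $\mathcal{R}$-normal subterms, $\mathcal{R}^L$-convertible, hence by CR and the induction hypothesis equal, which would make the term $\mathcal{R}$-reducible. So you should reorder: establish (2) before (3). Second, there is no need to ``choose'' a linear conditional linearization, and indeed you cannot, since the hypothesis concerns the given strongly closed $\mathcal{R}^L$: but this is harmless because \emph{every} conditional linearization of a right-linear rule is automatically right-linear --- $r'\delta = r$ with $\delta$ mapping variables to variables, so a repeated variable in $r'$ would force a repeated variable in $r$. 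With these repairs your argument coincides with the paper's.
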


\begin{example} Let 
\[
\mathcal{R} =
\left\{
\begin{array}{lclcll}
f(x,x,g(y))  &\to& h(y,x) \\
g(a)         &\to& f(a,b,b)\\
h(x,y)       &\to& h(a,y)\\
f(x,x,y)     &\to& h(a,x) \\
\end{array}
\right\}
\]
Since $\mathcal{R}$ is overlapping, not shallow
and not right-ground, 
neither Propositions~\ref{prop:UNomega}, \ref{prop:dec LLRG}
and \ref{prop:dec shallow} apply.
Propositions~\ref{prop:modular}, \ref{prop:weight decreasing} do not apply neither.
By conditional linearization, we obtain
\[
\mathcal{R}^L =
\left\{
\begin{array}{lclcll}
f(x_1,x_2,g(y))  &\to& h(y,x_1) &\Leftarrow& x_1 \approx x_2 & (a)\\
g(a) &\to& f(a,b,b)           & &  & (b)\\
h(x,y)       &\to& h(a,y)    & &  & (c)\\
f(x_1,x_2,y)  &\to& h(a,x_1) &\Leftarrow& x_1 \approx x_2  & (d)\\
\end{array}
\right\}
\]
We have
\[
\mathrm{CCP}_{in}(\mathcal{R}^L) =
\{  x_1 \approx x_2 \Rightarrow  
\langle f(x_1,x_2,f(a,b,b)),  
h(a,x_1)  \rangle \}
\]
and
\[
\mathrm{CCP}_{out}(\mathcal{R}^L) =
\{ x_1 \approx x_2         \Rightarrow  \langle h(a,x_1),   
 h(y,x_1)  \rangle \}.
\]
By $f(x_1,x_2,f(a,b,b)) \to_{\{ (d) \}} h(a,x_1)$
and
$h(a,x_1) \gets_{\{ (c) \}}  h(y,x_1)$,
$\mathcal{R}^L$ is parallel-closed
(or linear strongly closed).
Thus, from Corollary~\ref{cor:parallel closed}
(or Corollary~\ref{cor:strongly closed criteria}),
it follows that $\mathcal{R}$ has UNC.
\end{example}

\begin{figure*}[t]
\begin{center}
\[
\begin{array}{c}
\infer[]
 { \Gamma \sqcup \{ u  \approx v \} \Vdash_{\mathcal{R}} u \sim_0 v}
 {}
\quad
\infer[]
 { \Gamma \Vdash_{\mathcal{R}} t \sim_0 t }
 {}
\quad
\infer[]
 { \Gamma \Vdash_{\mathcal{R}} s \sim_i t }
 { \Gamma \Vdash_{\mathcal{R}} t \sim_i s }
\quad
\infer[]
 { \Gamma \sqcup \Sigma \Vdash_{\mathcal{R}} s \sim_{i+j} u }
 { \Gamma \Vdash_{\mathcal{R}} s \sim_i t  & \Sigma \Vdash_{\mathcal{R}} t \sim_j u }
\\[3ex]
\infer[]
 { \Gamma \Vdash_{\mathcal{R}} C[s] \sim_i C[t] }
 { \Gamma \Vdash_{\mathcal{R}} s \sim_i t }
\quad
\infer[k = \sum_j i_j ]
 {\bigsqcup_j \Gamma_j \Vdash_{\mathcal{R}} \langle u_1,\ldots,u_n \rangle  \sim_k \langle v_1, \ldots,v_n \rangle}
 { \Gamma_1 \Vdash_{\mathcal{R}} u_1 \sim_{i_1} v_1 ~~\cdots~~\Gamma_n \Vdash_{\mathcal{R}} u_n \sim_{i_n} v_n}
\\[3ex]
\infer[]
 { \Gamma \Vdash_{\mathcal{R}} s \sim_i t}
 { \Gamma \Vdash_{\mathcal{R}} s \to_i t}
\quad
\infer[{\scriptsize l \to r \Leftarrow u_1 \approx v_1,\ldots,u_n \approx v_n \in \mathcal{R}}]
 { \Gamma \Vdash_{\mathcal{R}} C[l\sigma] \to_{i+1}  C[r\sigma]}
 { \Gamma \Vdash_{\mathcal{R}} \langle u_1\sigma,\ldots,u_n\sigma \rangle  \sim_i \langle v_1\sigma, \ldots,v_n\sigma \rangle}
\end{array}
\]
\end{center}
\caption{Inference rules for ranked conversions and rewrite steps}
\label{fig:Inference rules}
\end{figure*}



\subsection{Automation}
\label{subsec:Automation}

Even though proofs are rather straightforward,
it is not at all obvious how the conditions
of Theorems~\ref{thm:parallel closed}
and \ref{thm:strongly closed criteria} can be effectively checked.

Let $\mathcal{R}$ be a semi-equational CTRS.
Let $\Gamma \Rightarrow \langle u, v \rangle$ be
an inner-outer conditional critical pair of $\mathcal{R}$,
and consider to check 
$\Gamma \vdash_\mathcal{R}  u \plarrow v$.
For this, we construct the set 
$\mathit{Red} = \{ v' \mid \Gamma \vdash_\mathcal{R}  u \plarrow v' \}$
and check whether $v \in \mathit{Red}$.

To construct the set $\mathit{Red}$,
we seek the possible redex positions in $u$.
Suppose we found conditional rewrite rules $l_1 \to r_1 \Leftarrow \Gamma_1,
l_2 \to r_2 \Leftarrow \Gamma_2 \in \mathcal{R}$
and substitutions $\theta_1,\theta_2$
such that $u = C[l_1\theta_1,l_2\theta_2]$.
Then we obtain $u \plarrow C[r_1\theta_1,r_2\theta_2]$ if 
$\vdash_\mathcal{R} \Gamma_1\theta_1$
and 
$\vdash_\mathcal{R} \Gamma_2\theta_2$,
i.e.\ 
$s \stackrel{*}{\leftrightarrow}_\mathcal{R} t$
for any equations $s \approx t$  in $\Gamma_1\theta_1 \cup \Gamma_2\theta_2$.
Now, for checking $\Gamma \vdash_\mathcal{R}  u \plarrow v$,
it suffices to consider the case $\vdash_\mathcal{R} \Gamma$ holds.
Thus, we may assume  $s' \stackrel{*}{\leftrightarrow}_\mathcal{R} t'$
for any $s' \approx t'$  in $\Gamma$.
Therefore, the problem is to check 
whether $s' \stackrel{*}{\leftrightarrow}_\mathcal{R} t'$ for $s' \approx t'$  in $\Gamma$
implies 
$s \stackrel{*}{\leftrightarrow}_\mathcal{R} t$
for any equations $s \approx t$  in $\Gamma_1\theta_1 \cup \Gamma_2\theta_2$.

To check this, we use the following sufficient condition:
$s \approx_\Gamma t$ for all $s \approx t \in \Gamma_1\theta_1 \cup \Gamma_2\theta_2$.
Note there $\approx_\Gamma$ is the congruence closure of $\Gamma$.
Since congruence closure of a finite set of equations is
decidable \cite{BaaderFandNipkowT:TR}, this approximation is indeed automatable.

\begin{example}
Let
\[
\mathcal{R} =
\left\{
\begin{array}{lclcll}
P(Q(x)) &\to& P(R(x)) &\Leftarrow& x \approx A & (a)\\
Q(H(x)) &\to& R(x)     &\Leftarrow& S(x) \approx H(x) & (b)\\
R(x) &\to& R(H(x))     &\Leftarrow& S(x) \approx A & (c)\\
\end{array}
\right\}
\]
Then we have
$\mathrm{CCP}(\mathcal{R})
= \mathrm{CCP}_{in}(\mathcal{R}) =$
\[
\{ S(x) \approx H(x), H(x) \approx A \Rightarrow \langle P(R(x)), P(R(H(x))) \rangle \}
\]
Now, in order to apply rule $(c)$ to have $P(R(x)) \plarrow_\mathcal{R} P(R(H(x)))$,
we have to check the condition 
$S(x) \stackrel{*}{\leftrightarrow}_\mathcal{R} A$.
This holds,
since 
we can suppose $S(x) \stackrel{*}{\leftrightarrow}_\mathcal{R} H(x)$
and $H(x) \stackrel{*}{\leftrightarrow}_\mathcal{R} A$.
This is checked by 
$S(x)  \approx_\Sigma A$,
where $\Sigma = \{ S(x) \approx H(x), H(x) \approx A \}$.
\end{example}

\section{Automating UNC proof of non-duplicating TRSs}

In this section, we show a slight generalization of
the UNC criterion based on Proposition~\ref{prop:weight decreasing} \cite{WeightDec},
and show how the criterion can be decided.
First, we briefly capture necessary notions and notations from the paper \cite{WeightDec}.

A \emph{left-right separated (LR-separated) conditional rewrite rule}
is $l \to r \Leftarrow x_1 \approx y_1,\ldots,x_n \approx y_n$
such that 
(i) $l \notin \mathcal{V}$ is linear, 
(ii) $\mathcal{V}(l) =  \{ x_i \}_i$ and $\mathcal{V}(r) \subseteq  \{ y_i \}_i$
(iii) $\{ x_i \}_i \cap\{ y_i  \}_i = \emptyset$, and 
(iv) $x_i \neq x_j$ for $i \neq j$.
Here, note that some variables in $y_1,\ldots,y_n$ can be identical.
A finite set of LR-separated conditional rewrite rules is called
an \emph{LR-separated conditional term rewriting system} (\emph{LR-separated CTRS}, for short).
An LR-separated conditional rewrite rule 
$l \to r \Leftarrow x_1 \approx y_1,\ldots,x_n \approx y_n$
is \emph{non-duplicating} if 
$|r|_y \le |y_1,\ldots,y_n|_y$ for all  $y \in \mathcal{V}(r)$.

The LR-separated conditional linearization 
translated TRSs to LR-separated CTRSs.
This is given as follows:
Let $C[y_1,\ldots,y_n] \to r$ be a rewrite rule,
where $\mathcal{V}(C)= \emptyset$.
Here, some variables in $y_1,\ldots,y_n$ may be identical.
Then, we take fresh distinct $n$ variables $x_1,\ldots,x_n$,
and put $C[x_1,\ldots,x_n] \to r \Leftarrow x_1 \approx y_1,\ldots, x_n \approx y_n$
as the result of the translation.
It is easily seen that the result is indeed 
an LR-separated conditional rewrite rule.
It is also easily checked that if the rewrite rule is 
non-duplicating then so is the result of the translation 
(as an LR-separated conditional rewrite rule).
The LR-separated conditional linearization $\mathcal{R}^{LRS}$ of a TRS $\mathcal{R}$
is obtained by applying the translation to each rule.

It is shown in \cite{WeightDec} that semi-equational non-duplicating 
LR-separated CTRSs are confluent if their conditional critical pairs
satisfy some closure condition, which makes the rewrite steps `weight-decreasing joinable'.
By applying the criterion to LR-separated conditional linearization of TRSs, 
they obtained a criterion of \textrm{UNC} for non-duplicating TRSs.
%
Note that rewriting in LR-separated CTRSs is (highly) non-deterministic;
even reducts of rewrite steps at the same position by the same rule is generally not unique,
not only reflecting semi-equational evaluation of the conditional part
but also by the $\mathcal{V}(l) \cap \mathcal{V}(r) = \emptyset$ for 
LR-separated conditional rewrite rule $l \to r \Leftarrow c$.
Thus, how to effectively check the sufficient condition of weight-decreasing joinability 
is not very clear, albeit it is mentioned in \cite{WeightDec} that the decidability is clear.

For obtaining an algorithm for computing the criterion,
we introduce ternary relations
parameterized by an LR-separated CTRS $\mathcal{R}$ and $n \in \mathbb{N}$
as follows.

\begin{definition}
\label{def:derivation of sim}
The derivation rules for $\Gamma \Vdash_{\mathcal{R}} u \sim_n v$ 
and $\Gamma \Vdash_{\mathcal{R}} u \to_n v$ are given in Figure~\ref{fig:Inference rules}.
Here, $n \in \mathbb{N}$ and $\Gamma$ is a multiset of equations.
\end{definition}

Intuitively, $\Gamma \Vdash_{\mathcal{R}} u \sim_n v$ means 
that $u \stackrel{*}{\leftrightarrow}_\mathcal{R} v$ using the assumption $\Gamma$
where the number of rewrite steps is $n$ in total (i.e.\ including those 
used in checking conditions).
Main differences to the relation 
$\underset{\Gamma}{\sim}$ in \cite{WeightDec} are twofold:
\begin{enumerate}
\item Instead of considering a special constant $\bullet$, we use an index of natural number.
The number of $\bullet$ corresponds to the index number.
\item Auxiliary equations in $\Gamma$ are allowed in our notation of $\Gamma \Vdash_{\mathcal{R}} u \sim_n v$.
On the contrary, $\Gamma$ in $\underset{\Gamma}{\sim}$ in \cite{WeightDec} does not
allow auxiliary equations in $\Gamma$
\end{enumerate}
The former is rather a notational convenience; however, this is useful
to designing the effectiv procedure to check the UNC criteria presented below.
The latter is convenient to prove the satisfiability of constraints
on such expressions.
We refer 
to  Appendix~\ref{sec:Comparison to WeightDec} for more precise comparison with \cite{WeightDec}.

The following is a slight generalization of the main result of \cite{WeightDec}.
A proof is given in Appendix~\ref{sec:Comparison to WeightDec}.

\begin{theorem}
\label{thm:I}
A semi-equational non-duplicating LR-separated CTRS $\mathcal{R}$ is 
weight-decreasing joinable
if for any critical pair $\Gamma \Rightarrow  \langle s,t \rangle$ of $\mathcal{R}$,
either
(i) $\Gamma \Vdash_{\mathcal{R}} s \sim_{\le 1} t$,
(ii) $\Gamma \Vdash_{\mathcal{R}} s  \leftrightarrow_2 t$, or 
(iii) $\Gamma \Vdash_{\mathcal{R}} s  \to_i \circ \sim_j t$ with $i + j \le 2$ and 
$\Gamma \Vdash_{\mathcal{R}} t  \to_{i'} \circ \sim_{j'} s$ with $i' + j' \le 2$.
\end{theorem}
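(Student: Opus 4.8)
The plan is to adapt de Vrijer's weight-decreasing-joinability argument from \cite{WeightDec} to the slightly more liberal critical-pair conditions (i)--(iii), keeping the same two-layer structure: first analyse a single ``local peak'' $t_1 \mathrel{_\mathcal{R}\gets} t_0 \to_\mathcal{R} t_2$ in an LR-separated semi-equational CTRS, classify it by the relative positions of the two redexes, and in each case produce a weight-decreasing valley; then lift this to arbitrary conversions by a well-founded induction on the total weight (the sum of the ranks/step-counts, i.e.\ the indices $n$ carried by $\Vdash_\mathcal{R}$ in Figure~\ref{fig:Inference rules}). I would phrase everything in terms of the ranked relations $\Gamma \Vdash_\mathcal{R} u \sim_n v$ and $\Gamma \Vdash_\mathcal{R} u \to_n v$, since the indices make the ``weight'' bookkeeping explicit and since (as the text notes) this is exactly the formulation needed to check the criterion effectively.

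First I would set up the local-peak case analysis. For a peak with redexes at positions $p_1,p_2$: (a) if $p_1 \parallel p_2$ the two steps commute and the peak closes in one step on each side, which is trivially weight-decreasing because non-duplication prevents weight growth; (b) if $p_1 \le p_2$ (or symmetrically) with $p_2$ strictly inside a variable-substitution part of the outer redex, use left-linearity of LR-separated rules together with the semi-equational condition-evaluation to replay the inner step under the substitution — here non-duplication is what forbids the inner step from being copied so many times that weight increases; (c) the genuine overlap case, where $p_2$ is a non-variable position of the outer rule and the two redexes form an instance of a conditional critical pair $\Gamma \Rightarrow \langle s,t\rangle$. In case (c) I would invoke hypothesis (i), (ii) or (iii): each of these gives a joining diagram using at most two rewrite steps in total on each side (counting condition-checking steps via the indices), and the point of the bound ``$\le 1$'' in (i), ``$2$'' in (ii), ``$i+j\le 2$'' in (iii) is precisely that the joining sequence weighs no more than the peak it replaces, so the step is weight-decreasing joinable. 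The auxiliary-equation slot in $\Gamma \Vdash_\mathcal{R}$ is used here to carry the critical-pair condition $\Gamma$ along while instantiating by the unifying substitution, matching the ``$\Gamma \vdash_\mathcal{R}$'' conventions already fixed in the preliminaries.

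Then I would do the global induction: given a conversion between two terms, pick a peak of maximal local weight (or use the standard minimal-counterexample / Newman-style argument phrased for weight-decreasing systems as in \cite{WeightDec}), replace it by the weight-decreasing valley from the local analysis, and observe that the multiset of step-weights strictly decreases in the well-founded order; confluence follows. I would also need the routine lemma that rank-$n$ conditional rewrite steps can be handled by an inner induction on $n$ (the condition parts are themselves conversions of strictly smaller rank, so the inductive hypothesis applies to them) — this is where the $\sim_i$/$\to_i$ premises in the inference rules of Figure~\ref{fig:Inference rules} feed the induction.

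The main obstacle I expect is case (c) together with the interaction between the two sources of non-determinism that the text flags: an LR-separated rule $l \to r \Leftarrow c$ has $\mathcal{V}(l)\cap\mathcal{V}(r)=\emptyset$, so even ``the same step at the same position by the same rule'' can yield many reducts, and on top of that the semi-equational conditions are checked only up to convertibility. Consequently the joining diagram supplied by (i)--(iii) must be shown to be instantiable \emph{uniformly} — i.e.\ one must verify $\Gamma \vdash_\mathcal{R}(\cdots)$ in the sense of the preliminaries, so that whenever the critical-pair condition $\Gamma\sigma$ is satisfied the prescribed valley actually exists — and one must track that the condition-checking sub-conversions do not themselves blow up the weight. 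Getting the weight accounting to survive this substitution step, and checking that the generalized bounds in (iii) (asymmetric $i+j\le 2$ and $i'+j'\le 2$ rather than de Vrijer's symmetric condition) still yield a net decrease, is the delicate part; everything else is a fairly mechanical adaptation of the known proof.
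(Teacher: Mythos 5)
Your plan is a genuinely different route from the paper's proof, and it is much heavier. The paper does not redo the local-peak/weight analysis at all: it first notes that adding the reflexivity rule (and permitting auxiliary equations in $\Gamma$) to Definition~9 of \cite{WeightDec} leaves Lemma~3 and Theorem~1 of \cite{WeightDec} valid verbatim (the new case is trivial), and then proves three correspondence lemmas (Lemmas~\ref{lem:correspondence I}--\ref{lem:correspondence III}) translating the ranked judgements $\Gamma \Vdash_{\mathcal{R}} s \sim_n t$ and $\Gamma \Vdash_{\mathcal{R}} s \to_n t$ of Definition~\ref{def:derivation of sim} into the $\bullet$-multiset relations of Definition~\ref{def:new sim}, the index $n$ matching the number of occurrences of $\bullet$ (with an off-by-one for rewrite steps); Theorem~\ref{thm:I} is then immediate from Lemma~\ref{lem:correspondence III}. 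So the paper's proof is a short reduction whose only content is bookkeeping between two formalisms, whereas you propose to re-prove the underlying criterion directly: the parallel/variable/critical-overlap case split, the substitution lemma that lets a critical-pair join be instantiated uniformly while charging the condition-checking conversions to the weight (this is exactly the role of Lemma~3 of \cite{WeightDec}, which you would have to re-establish in the ranked notation), the rank induction for semi-equational conditions, and the global multiset induction---precisely the parts you flag as delicate but leave as ``mechanical adaptation''. Your route buys a self-contained proof in one formalism with no translation lemmas; the paper's route buys brevity and inherits the hard combinatorics from \cite{WeightDec}. If you do pursue your route, two corrections: the statement to establish is weight-decreasing joinability (every conversion is replaced by a valley of no greater weight), not merely confluence, so the global induction must carry the quantitative bound; and the generalization over \cite{WeightDec} does not lie in any asymmetry of condition (iii), which already has the same two-sided form there, but in the reflexivity rule and the auxiliary equations in $\Gamma$ (needed for trivial critical pairs)---also, the criterion you are adapting is due to Toyama and Oyamaguchi, not de Vrijer.
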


Thus, any non-duplicating TRS $\mathcal{R}$ has UNC 
if all CCPs of $\mathcal{R}^{LRS}$ satisfy some of conditions (i)--(iii).

Thanks to our new formalization of sufficient condition,
decidability of the condition follows.

\begin{theorem}
\label{thm:decidability of WDJ}
The condition of Theorem~\ref{thm:I} is decidable.
\end{theorem}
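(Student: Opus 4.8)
The plan is to show that each of the three alternatives (i)--(iii) in Theorem~\ref{thm:I} is an effectively checkable property of a given critical pair $\Gamma \Rightarrow \langle s,t \rangle$, and since there are only finitely many critical pairs, the whole condition is decidable. The central observation is that in every alternative the total number of rewrite steps is bounded by $2$, so the derivations witnessing $\Gamma \Vdash_{\mathcal{R}} u \sim_n v$, $\Gamma \Vdash_{\mathcal{R}} u \to_n v$, etc., involve at most two applications of the rule-firing inference (the last rule in Figure~\ref{fig:Inference rules}), and hence the \emph{shape} of such a derivation is drawn from a finite set. Concretely, I would first establish a normalization lemma: any derivation of $\Gamma \Vdash_{\mathcal{R}} u \sim_n v$ can be rearranged so that it is a ``linear'' conversion $u = w_0 \leftrightarrow w_1 \leftrightarrow \cdots \leftrightarrow w_m = v$ at the top level, where each $\leftrightarrow$ is either a use of an assumption equation from $\Gamma$, a reflexivity/congruence step, or one rule-firing step (with its own recursively-bounded condition check), and the indices sum to $n$. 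With $n \le 2$ this bounds the number of rule firings (including those inside condition checks) globally by $2$.

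Next I would reduce each alternative to a finite disjunction of \emph{congruence-closure queries}. For instance, to decide $\Gamma \Vdash_{\mathcal{R}} s \sim_{\le 1} t$: enumerate the finitely many candidate ``skeletons'' using at most one rule firing --- either no firing (then the query is whether $s \approx_\Gamma t$, decidable since congruence closure of a finite equation set is decidable \cite{BaaderFandNipkowT:TR}), or a single firing $C[l\theta] \to C[r\theta]$ somewhere along the conversion, for some rule $l \to r \Leftarrow c \in \mathcal{R}$ and some matching substitution $\theta$. The subtlety is that $\theta$ ranges over infinitely many substitutions; the key point is that, because the rule is firing \emph{inside a conversion between the fixed terms} $s$ and $t$ modulo $\approx_\Gamma$, the position and the relevant part of $\theta$ are constrained: one may take $\theta$ to be (an extension of) a matcher of $l$ against a subterm of $s$ or $t$ up to $\approx_\Gamma$, and the remaining variables of $\theta$ (those of $r$ not pinned down, which exist because LR-separated rules have $\mathcal V(l)\cap\mathcal V(r)=\emptyset$) only need to be chosen so that the condition $c\theta$ is derivable and the resulting term is $\approx_\Gamma$-equal to the other side. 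This last choice is again a congruence-closure satisfiability question over $\Gamma$ augmented with $c\theta$, solvable by treating the free $r$-variables as fresh constants and asking whether the requisite equalities follow --- which is decidable. Alternatives (ii) and (iii) are handled the same way: they name a fixed number ($\le 2$) of directed rule-firing steps plus $\sim$-segments, so one enumerates the finitely many choices of rules, the finitely many relevant redex positions (in the fixed terms $s$, $t$, and the bounded-size intermediate terms, which have bounded size because only $\le 2$ rule rewrites act on them), solves for the matchers up to $\approx_\Gamma$, and discharges the side-conditions by congruence-closure queries.

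I expect the main obstacle to be making precise the claim that the candidate substitutions $\theta$ for rule firings can be restricted to a finite, effectively computable set --- i.e.\ controlling the ``new'' right-hand-side variables introduced by LR-separated rules and by the semi-equational condition evaluation, and bounding the size of the intermediate terms $w_i$ that appear between $s$ and $t$. The resolution is that, since at most two rule rewrites occur in total across the whole derivation (outer steps and condition-checking steps combined), any intermediate term is obtained from $s$ or $t$ by at most two rewrites, hence has size linearly bounded in $|s|,|t|$ and the rule sizes, so only finitely many such terms (up to the relevant equivalence) need be considered; and the free right-hand-side variables can be handled symbolically via congruence closure with fresh constants rather than by enumeration. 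Once this finiteness is pinned down, the decision procedure is: for each of the finitely many critical pairs, for each of the finitely many derivation skeletons realizing (i), (ii), or (iii), emit the associated congruence-closure query and decide it; accept iff some skeleton for some alternative succeeds for every critical pair.
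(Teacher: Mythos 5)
Your reduction of the zero\--rewrite\--step segments to congruence\--closure queries ($s \approx_\Gamma t$) misreads the relation that has to be decided, and this is a genuine gap rather than a presentational issue. In the system of Figure~\ref{fig:Inference rules} the multiset $\Gamma$ is a \emph{resource}: the transitivity and tuple rules split $\Gamma$ into disjoint multiset parts, and the assumption rule consumes one occurrence of an equation, so $\Gamma \Vdash_{\mathcal{R}} s \sim_0 t$ means $t$ is reachable from $s$ using each equation occurrence of $\Gamma$ \emph{at most once} (and without instantiating it). Congruence closure allows unbounded reuse: e.g.\ for $\Gamma = \{a \approx b\}$ we have $f(a,a) \approx_\Gamma f(b,b)$ but not $\Gamma \Vdash_{\mathcal{R}} f(a,a) \sim_0 f(b,b)$. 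This multiplicity bookkeeping is exactly what makes Theorem~\ref{thm:I} sound (in the correspondence with \cite{WeightDec}, the witnessing multiset $\Sigma$ must satisfy $\Sigma \sqsubseteq \Gamma \sqcup \{\bullet\}$, and the weights of the assumption proofs are \emph{summed} over uses), so a procedure that answers the congruence\--closure question decides a strictly weaker condition and cannot be used to certify the hypothesis of Theorem~\ref{thm:I}. The same defect infects your treatment of the condition parts $c\theta$ and of the ``free'' right\--hand\--side variables ``symbolically via congruence closure with fresh constants'': those subderivations must also fit inside the same global multiset budget, which your queries do not track. Your bound on intermediate terms (``at most two rewrites act on them'') likewise ignores the $\sim_0$ segments, which do modify terms; they are controlled not by the rewrite count but by the size of $\Gamma$.

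The paper's proof exploits precisely this resource sensitivity: since every $\sim_0$ step consumes an equation occurrence of the finite multiset $\Gamma$ and applies it without instantiation, the set $\mathrm{SIM}_0(\Gamma,s) = \{\langle \Sigma, t\rangle \mid \Gamma \setminus \Sigma \Vdash_{\mathcal{R}} s \sim_0 t\}$ of reachable terms \emph{together with the leftover equations} is finite and computable; the sets $\mathrm{RED}_1$, $\mathrm{SRS}_{010}$, $\mathrm{SIM}_1$, $\mathrm{RED}_2$ are then built from it in layers (rule firings are found by \emph{syntactic} matching against the finitely many terms already enumerated, with their conditions discharged recursively inside the remaining budget), and conditions (i)--(iii) become emptiness/membership tests on these finite sets. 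To repair your argument you would have to replace the congruence\--closure oracle by an enumeration that records which occurrences of $\Gamma$ have been spent --- at which point you have essentially reconstructed the paper's $\mathrm{SIM}/\mathrm{RED}$ construction.
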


\begin{proof}
We show that each condition (i)--(iii) is decidable.
Let $\Gamma$ be a (finite) multiset of equations,
$s,t$ terms, and $\vec{s},\vec{t}$ sequences of terms.
The claim follows by showing the following series of sets are finite and effectively constructed one by one:
(a)
$\mathrm{SIM}_0(\Gamma, s) =  \{ \langle\Sigma, t\rangle  \mid \Gamma{\setminus}\Sigma \Vdash_{\mathcal{R}} s \sim_0 t  \}$,
(b)
$\mathrm{SIM}_0(\Gamma, \vec s\,) =  \{ \langle\Sigma, \vec t\, \rangle  \mid \Gamma{\setminus}\Sigma \Vdash_{\mathcal{R}} \vec s\,  \sim_0 \vec t\,  \}$,
(c)
$\mathrm{RED}_1(\Gamma, s,t) =  \{ \Sigma  \mid \Gamma{\setminus}\Sigma \Vdash_{\mathcal{R}} s \to_1 t  \}$,
(d)
$\mathrm{SRS}_{010}(\Gamma, s, t) =  \{ \Sigma  \mid \Gamma{\setminus}\Sigma \Vdash_{\mathcal{R}} s \sim_0 \circ \to_1 \circ \sim_0 t  \}$,
(e)
$\mathrm{SIM}_1(\Gamma, s, t) =  \{ \Sigma  \mid \Gamma{\setminus}\Sigma \Vdash_{\mathcal{R}} s \sim_1 t  \}$,
(f)
$\mathrm{SIM}_1(\Gamma, \vec s, \vec t\,) =  \{ \Sigma \mid \Gamma{\setminus}\Sigma \Vdash_{\mathcal{R}} \vec s\,  \sim_1 \vec t\,  \}$, and
(g)
$\mathrm{RED}_2(\Gamma, s,t) =  \{ \Sigma \mid \Gamma{\setminus}\Sigma \Vdash_{\mathcal{R}} s \to_2 t  \}$.
\end{proof}

\begin{example} Let 
\[
\mathcal{R} =
\left\{
\begin{array}{lclcll}
f(x,x)      &\to& h(x,f(x,b))    \\
f(g(y),y)    &\to& h(y,f(g(y),c(b))) \\
h(c(x),b)    &\to& h(b,b) \\
c(b)         &\to& b \\
\end{array}
\right\}
\]
Since $\mathcal{R}$ is overlapping, not right-ground, and not shallow,
Propositions~\ref{prop:UNomega}, \ref{prop:dec LLRG}, \ref{prop:dec shallow}
do not apply. 
Proposition~\ref{prop:modular} and Theorems~\ref{thm:parallel closed}, \ref{thm:strongly closed criteria}
do not apply either.
By conditional linearization, we obtain
$\mathcal{R}^{LRS} =$
\[
\left\{
\begin{array}{lclcll}
f(x_1,x_2)  &\to& h(x,f(x,b))  &\Leftarrow& x_1 \approx x, x_2  \approx x \\
f(g(y_1),y_2)  &\to& h(y,f(g(y),c(b)))  &\Leftarrow& y_1 \approx y, y_2  \approx y \\
h(c(x),b)    &\to& h(b,b) \\
c(b)          &\to& b     \\
\end{array}
\right\}
\]
We have an overlay critical pair
\[
\left\{
\begin{array}{cc}
y_1 \approx y & (a)\\
y_2 \approx y & (b)\\
g(y_1) \approx x & (c)\\
y_2 \approx x  & (d)\\
\end{array}
\right\}
\Rightarrow  \langle h(x,f(x,b)),  
h(y,f(g(y),c(b)))  \rangle
\]
(Another one is its symmetric version.)
Let $\Gamma = \{ (a),(b),(c),(d) \}$,
$s = h(y,f(g(y),c(b)))$
and $t = h(x,f(x,b)))$.
To check the criteria of Theorem~\ref{thm:I},
we start computing 
$\mathrm{SIM}_0(\Gamma, s)$
and
$\mathrm{SIM}_0(\Gamma, t)$.
For example, the former equals to 
\[
\left\{
\begin{array}{l}
\langle \{ (a),(b),(c),(d) \}, h(y,f(g(y),c(b))) \rangle\\
\langle \{ (b),(c),(d) \}, h(y_1,f(g(y),c(b))) \rangle\\
\langle \{ (b),(c),(d) \}, h(y,f(g(y_1),c(b))) \rangle\\
\langle \{ (b),(d) \}, h(y,f(x,c(b))) \rangle\\
\langle \{ (a),(c),(d) \}, h(y_2,f(g(y),c(b))) \rangle\\
\langle \{ (a),(c),(d) \}, h(y,f(g(y_2),c(b))) \rangle\\
\langle \{ (a),(c) \}, h(x,f(g(y),c(b))) \rangle\\
\langle \{ (a),(c) \}, h(y,f(g(x),c(b))) \rangle\\
\langle \{ (c),(d) \}, h(y_1,f(g(y_2),c(b))) \rangle\\
\langle \{ (c),(d) \}, h(y_2,f(g(y_1),c(b))) \rangle\\
\langle \{ (c) \}, h(y_1,f(g(x),c(b))) \rangle\\
\langle \{ (c) \}, h(x,f(g(y_1),c(b))) \rangle\\
\langle \{ (d) \}, h(y_2,f(x,c(b))) \rangle\\
\langle \emptyset, h(x,f(x,c(b))) \rangle
\end{array}
\right\}.
\]
We now can check  $s \sim_0 t$ does not hold
by $\langle \Gamma', t\rangle \in \mathrm{SIM}_0(\Gamma, s)$
for no $\Gamma'$.
To check $\Gamma \Vdash s \to_1 t$,
we compute $\mathrm{RED}_1(\Gamma,s,t)$.
For this, we check there exist a context $C$ and substitution $\theta$
and rule $l \to r \Leftarrow \Gamma \in \mathcal{R}^{LRS}$
such that $s = C[l\theta]$ and $t = C[r\theta]$.
In our case, it is easy to see $\mathrm{RED}_1(\Gamma,s,t) = \emptyset$.
Next to check $\Gamma \Vdash s \sim_1 t$,
we compute $\mathrm{SRS}_{010}(\Gamma, s, t)$.
This is done by, for each $\langle \Gamma', s' \rangle  \in \mathrm{SIM}_0(\Gamma, s)$,
computing $\langle \Sigma, t' \rangle  \in \mathrm{SIM}_0(\Gamma', t)$
and check there exists $\Sigma \in \mathrm{RED}_1(\Sigma',s',t')$.
In our case,
for $\langle \emptyset, h(x,f(x,c(b))) \rangle \in \mathrm{SIM}_0(\Gamma, s)$
we have  $\langle \emptyset, t \rangle \in \mathrm{SIM}_0(\emptyset, t)$,
and $\emptyset \in \mathrm{RED}_1(\emptyset, h(x,f(x,c(b))), t)$.
Thus, we know $h(x,f(x,c(b))) \to_1 h(x,f(x,b))$.
Hence, for these overlay critical pairs, we have
$y_1 \approx y, y_2 \approx y, g(y_1) \approx x, y_2 \approx x 
\Vdash_{\mathcal{R}} 
h(y,f(g(y),c(b))) \sim_1 h(x,f(x,b))$.
We also have $\mathrm{CCP}_{in}(\mathcal{R}^{LRS}) = \{$
$\emptyset \Rightarrow  \langle h(b,b),  
h(b,b) \rangle \}$.
For this inner-outer critical pair, 
it follows that $\Vdash_{\mathcal{R}} h(b,b) \sim_0 h(b,b)$
using $\langle \emptyset, h(b,b) \rangle \in \mathrm{SIM}_0(\emptyset, h(b,b))$.
Thus, from Theorem~\ref{thm:I}, $\mathcal{R}^{LRS}$ is weight-decreasing.
Hence, it follows $\mathcal{R}$ has UNC.
We remark that, in order to derive $\Vdash_{\mathcal{R}} h(b,b) \sim_0 h(b,b)$,
we need the reflexivity rule.
However, since the corresponding Definition of $\sim$ in the paper \cite{WeightDec}
lacks the reflexivity rule, 
the condition of weight-decreasing in \cite{WeightDec} (Definition 9) does not hold for $\mathcal{R}^{LRS}$.
A part of situations where the reflexivity rule is required is, however, covered by the congruence rule;
thus the reflexivity rule becomes necessary when there exists a trivial critical pair such as above.
\end{example}

\begin{figure*}[t]
\noindent
\begin{flushleft}
\hspace*{.2cm}\textbf{Input}: TRS $\mathcal{R}$, predicates $\varphi, \Phi$\\
\hspace*{.2cm}\textbf{Output}: UNC or NotUNC or Failure (or may diverge)\\
\end{flushleft}
\begin{enumerate}
\item[\textbf{Step 1.}]
Compute the set $\mathrm{CP}(\mathcal{R})$ of critical pairs of $\mathcal{R}$.

\item[\textbf{Step 2.}]
If $\Phi(u, v)$ for all $\langle u, v \rangle \in \mathrm{CP}(\mathcal{R})$
and $\varphi(\mathcal{R})$ then return UNC.

\item[\textbf{Step 3.}]
Let $\mathcal{S} := \emptyset$.
For each $\langle u, v \rangle \in \mathrm{CP}(\mathcal{R})$ with $u \neq v$ for which $\Phi(u, v)$ does not hold,
do:
\begin{enumerate}
\item
If $u,v \in \mathrm{NF}(\mathcal{R})$, then exit with NotUNC.

\item If $u \notin \mathrm{NF}(\mathcal{R})$ and $v \in \mathrm{NF}(\mathcal{R})$,
then if $\mathcal{V}(v) \not\subseteq \mathcal{V}(u)$ 
then exit with NotUNC,
otherwise update $\mathcal{S} :=  \mathcal{S} \cup \{ u \to v \}$.

\item If $v \notin \mathrm{NF}(\mathcal{R})$ and $u \in \mathrm{NF}(\mathcal{R})$,
then if $\mathcal{V}(u) \not\subseteq \mathcal{V}(v)$ 
then exit with NotUNC,
otherwise update $\mathcal{S} :=  \mathcal{S} \cup \{ v \to u \}$.

\item If $u,v \notin \mathrm{NF}(\mathcal{R})$
then find $w$ such that $u \stackrel{*}{\to}_\mathcal{R} w$ ($v \stackrel{*}{\to}_\mathcal{R} w$),
and $\mathcal{V}(w) \subseteq \mathcal{V}(v)$ (resp.\ $\mathcal{V}(w) \subseteq \mathcal{V}(v)$).
If it succeeds then update $\mathcal{S} :=  \mathcal{S} \cup \{ v \to w \}$.
\end{enumerate}

\item[\textbf{Step 4.}]
If $\mathcal{S} = \emptyset$ then return Failure; 
otherwise update $\mathcal{R} := \mathcal{R} \cup \mathcal{S}$
and go back to Step~1.\\[-4ex]
\end{enumerate}
\caption{UNC completion procedure parameterized by predicates $\varphi,\Phi$}
\label{fig:UNC completion}
\end{figure*}

\section{UNC completion and other methods}

In this section, we present some new approaches
for proving and disproving UNC.

Firstly, observe that the conditional linearization does not change the input 
TRSs if they are left-linear. Thus, 
the technique has no effects on left-linear rewrite rules. 
But, as one can easily see, however, it is not at all guaranteed that left-linear TRSs have UNC.


Now, observe that a key idea in the conditional linearization technique is 
that CR of an approximation of a TRS implies UNC of the original TRS.
The first method presented in this section 
is based on the observation that one can 
also use the approximation other than conditional linearization.
To fit our usage, we now slightly modify Proposition~\ref{prop:abst CR to UNC}.

\begin{lemma}
\label{lem:approximation II}
Suppose
(1) ${\to}_\mathcal{R} \subseteq {\to}_{\mathcal{S}} \subseteq {\stackrel{*}{\leftrightarrow}}_\mathcal{R}$ and 
(2) $\mathrm{NF}(\mathcal{R}) \subseteq \mathrm{NF}(\mathcal{S})$.
Then,
(i) If $\mathrm{CR}(\mathcal{S})$ then $\mathrm{UNC}(\mathcal{R})$.
(ii) If there exists distinct $s,t \in \mathrm{NF}(\mathcal{S})$
such that $s \stackrel{*}{\leftrightarrow}_\mathcal{S} t$,
then $\neg \mathrm{UNC}(\mathcal{R})$.
\end{lemma}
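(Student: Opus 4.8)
The plan is to derive both parts of Lemma~\ref{lem:approximation II} essentially by specializing the abstract principle of Proposition~\ref{prop:abst CR to UNC}, and then leveraging the extra hypothesis ${\to}_{\mathcal{S}} \subseteq {\stackrel{*}{\leftrightarrow}}_\mathcal{R}$ to transfer conversions between $\mathcal{R}$ and $\mathcal{S}$. First I would record the immediate consequence of hypothesis~(1): since ${\to}_\mathcal{R} \subseteq {\to}_{\mathcal{S}}$, taking reflexive--symmetric--transitive closures gives ${\stackrel{*}{\leftrightarrow}}_\mathcal{R} \subseteq {\stackrel{*}{\leftrightarrow}}_\mathcal{S}$; and since ${\to}_{\mathcal{S}} \subseteq {\stackrel{*}{\leftrightarrow}}_\mathcal{R}$, the same closure operation gives ${\stackrel{*}{\leftrightarrow}}_\mathcal{S} \subseteq {\stackrel{*}{\leftrightarrow}}_\mathcal{R}$. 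Hence ${\stackrel{*}{\leftrightarrow}}_\mathcal{R} = {\stackrel{*}{\leftrightarrow}}_\mathcal{S}$, which is the workhorse identity for the whole proof.

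For part~(i), I would apply Proposition~\ref{prop:abst CR to UNC} with ${\to}_0 := {\to}_\mathcal{R}$ and ${\to}_1 := {\to}_{\mathcal{S}}$. Condition~(1) of that proposition, ${\to}_0 \subseteq {\to}_1$, is the left inclusion of our hypothesis~(1); condition~(2), $\mathrm{CR}(\to_1)$, is exactly the assumption $\mathrm{CR}(\mathcal{S})$; condition~(3), $\mathrm{NF}_0 \subseteq \mathrm{NF}_1$, is our hypothesis~(2) (with $\mathrm{NF}_0 = \mathrm{NF}(\mathcal{R})$, $\mathrm{NF}_1 = \mathrm{NF}(\mathcal{S})$). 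The conclusion $\mathrm{UNC}(\to_0)$ is precisely $\mathrm{UNC}(\mathcal{R})$. So part~(i) is just a verification that the hypotheses match, and requires no real work beyond the observation that the weaker bound ${\to}_{\mathcal{S}} \subseteq {\stackrel{*}{\leftrightarrow}}_\mathcal{R}$ is not even needed here.

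For part~(ii), I would argue by contradiction, or directly: suppose $s,t \in \mathrm{NF}(\mathcal{S})$ are distinct with $s \stackrel{*}{\leftrightarrow}_\mathcal{S} t$. By the identity ${\stackrel{*}{\leftrightarrow}}_\mathcal{S} = {\stackrel{*}{\leftrightarrow}}_\mathcal{R}$ established above, we get $s \stackrel{*}{\leftrightarrow}_\mathcal{R} t$. It remains to see that $s$ and $t$ are $\mathcal{R}$-normal forms. This is where hypothesis~(2) is used: contrapositively, $\mathrm{NF}(\mathcal{R}) \subseteq \mathrm{NF}(\mathcal{S})$ does not immediately give it, so instead I note that if $s \notin \mathrm{NF}(\mathcal{R})$ then $s \to_\mathcal{R} s'$ for some $s'$, and since ${\to}_\mathcal{R} \subseteq {\to}_{\mathcal{S}}$ this yields $s \to_{\mathcal{S}} s'$, contradicting $s \in \mathrm{NF}(\mathcal{S})$; hence $s \in \mathrm{NF}(\mathcal{R})$, and likewise $t \in \mathrm{NF}(\mathcal{R})$. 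Thus $s,t$ are distinct $\mathcal{R}$-normal forms that are $\mathcal{R}$-convertible, i.e.\ $\neg\mathrm{UNC}(\mathcal{R})$.

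There is essentially no hard part: the lemma is a routine repackaging, and the only thing to be careful about is which of the three inclusions in the chain ${\to}_\mathcal{R} \subseteq {\to}_{\mathcal{S}} \subseteq {\stackrel{*}{\leftrightarrow}}_\mathcal{R}$ is invoked where --- the left inclusion feeds Proposition~\ref{prop:abst CR to UNC} in part~(i) and the normal-form argument in part~(ii), while the right inclusion is needed only to collapse $\stackrel{*}{\leftrightarrow}_\mathcal{S}$ back down to $\stackrel{*}{\leftrightarrow}_\mathcal{R}$ in part~(ii). I would present the common observation ${\stackrel{*}{\leftrightarrow}}_\mathcal{R} = {\stackrel{*}{\leftrightarrow}}_\mathcal{S}$ once at the start and then dispatch both parts in a few lines each.
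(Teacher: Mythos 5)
Your proof is correct and follows essentially the same route as the paper: part (i) is exactly the specialization of Proposition~\ref{prop:abst CR to UNC} (which the paper proves inline by the same confluence-join argument), and part (ii) uses the same two observations, namely ${\to}_\mathcal{R} \subseteq {\to}_\mathcal{S}$ gives $\mathrm{NF}(\mathcal{S}) \subseteq \mathrm{NF}(\mathcal{R})$ and ${\to}_\mathcal{S} \subseteq {\stackrel{*}{\leftrightarrow}}_\mathcal{R}$ gives $s \stackrel{*}{\leftrightarrow}_\mathcal{R} t$.
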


Our approximation $\mathcal{S}$ of a TRS $\mathcal{R}$
is given by adding auxiliary rules aiming to obtain $\text{CR}$ of the TRS $\mathcal{S}$, 
in such a way that conditions (1) and (2) of the lemma are guaranteed.

\begin{definition}
A UNC completion procedure is given as Figure~\ref{fig:UNC completion}.
Its input are a TRS and two predicates $\varphi,\Phi$ such
that for any TRS $\mathcal{S}$ satisfying $\varphi(\mathcal{S})$
if $\Phi(u, v)$ for all critical pairs $\langle u, v \rangle$ of $\mathcal{S}$,
then $\mathrm{CR}(\mathcal{S})$.
\end{definition}

\begin{example}[Cops $\sharp$254]
Let
\[
\mathcal{R} =
\left\{
\begin{array}{lclcll}
a &\to& f(c)\\
a &\to& f(h(c))\\
f(x) &\to& h(f(x))\\
\end{array}
\right\}
\]
Since $\mathcal{R}$ is overlapping, not right-ground, and not shallow,
Propositions~\ref{prop:UNomega}, \ref{prop:dec LLRG}, \ref{prop:dec shallow}
do not apply. 
Proposition~\ref{prop:modular} does not apply either.
Now, 
let us apply the UNC completion procedure to $\mathcal{R}$
using linear strongly closed criteria for confluence.
For this, take $\varphi(\mathcal{R})$ as $\mathcal{R}$ is linear,
and $\Phi(u,v)$ as
$(u \stackrel{*}{\to} \circ \stackrel{=}{\gets} v)
\land (u \stackrel{=}{\to} \circ \stackrel{*}{\gets} v)$.
In Step 3, we find an overlay critical pair $\langle f(h(c)), f(c) \rangle$,
for which $\Phi$ is not satisfied.
Since $f(h(c))$ and $f(c)$ are not normal, we go to Step 3(b).
Take $w := f(c)$ and add a rewrite rule $f(h(c)) \to f(c)$
to obtain $\mathcal{R} := \mathcal{R} \cup \{  f(h(c)) \to f(c) \}$.
Now, the updated $\mathcal{R}$ is linear and strongly closed
(and thus, $\mathcal{R}$ is confluent).
Hence, the procedure returns UNC at Step 2.
\end{example}

We now prove the correctness of the procedure.
We first present two simple lemmas for this.

\begin{lemma}
\label{lem:addition}
Suppose
$l \stackrel{*}{\leftrightarrow}_\mathcal{R} r$,
$l \notin \mathrm{NF}(\mathcal{R})$,
and $l \to r$ is a rewrite rule.
Then, $\mathrm{UNC}(\mathcal{R})$
iff 
$\mathrm{UNC}(\mathcal{R} \cup \{ l \to r \})$.
\end{lemma}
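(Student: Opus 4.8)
The plan is to show that adding the rule $l \to r$ changes neither the conversion relation nor the set of normal forms; the claim then follows immediately from the definition of UNC. Throughout, write $\mathcal{R}' = \mathcal{R} \cup \{ l \to r \}$.

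First I would establish ${\stackrel{*}{\leftrightarrow}}_{\mathcal{R}} = {\stackrel{*}{\leftrightarrow}}_{\mathcal{R}'}$. The inclusion $\subseteq$ is trivial since $\mathcal{R} \subseteq \mathcal{R}'$. For $\supseteq$ it suffices to check ${\to}_{\mathcal{R}'} \subseteq {\stackrel{*}{\leftrightarrow}}_{\mathcal{R}}$: a step $s \to_{\mathcal{R}'} t$ either already is an $\mathcal{R}$-step, or has the form $s = C[l\sigma] \to C[r\sigma] = t$ using $l \to r$, and then closure of ${\stackrel{*}{\leftrightarrow}}_{\mathcal{R}}$ under substitutions and contexts, together with the hypothesis $l \stackrel{*}{\leftrightarrow}_{\mathcal{R}} r$, gives $s \stackrel{*}{\leftrightarrow}_{\mathcal{R}} t$.

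Next I would establish $\mathrm{NF}(\mathcal{R}) = \mathrm{NF}(\mathcal{R}')$. The inclusion $\mathrm{NF}(\mathcal{R}') \subseteq \mathrm{NF}(\mathcal{R})$ is immediate from ${\to}_{\mathcal{R}} \subseteq {\to}_{\mathcal{R}'}$. For the converse, take $t \in \mathrm{NF}(\mathcal{R})$ and suppose for contradiction that $t \to_{\mathcal{R}'} t'$. If this step uses a rule of $\mathcal{R}$ we contradict $t \in \mathrm{NF}(\mathcal{R})$ directly; otherwise $t = C[l\sigma]$ for some context $C$ and substitution $\sigma$. Here the hypothesis $l \notin \mathrm{NF}(\mathcal{R})$ enters: choosing $l''$ with $l \to_{\mathcal{R}} l''$ and using closure of rewriting under substitutions and contexts, we get $t = C[l\sigma] \to_{\mathcal{R}} C[l''\sigma]$, again contradicting $t \in \mathrm{NF}(\mathcal{R})$. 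Hence $t \in \mathrm{NF}(\mathcal{R}')$.

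Finally, since $\mathrm{UNC}(\to_{\mathcal{R}})$ asserts precisely that $a \stackrel{*}{\leftrightarrow}_{\mathcal{R}} b$ with $a, b \in \mathrm{NF}(\mathcal{R})$ implies $a = b$, and both the conversion relation and the normal-form set coincide for $\mathcal{R}$ and $\mathcal{R}'$, the statements $\mathrm{UNC}(\mathcal{R})$ and $\mathrm{UNC}(\mathcal{R}')$ are literally the same. There is no serious obstacle; the only point requiring care is the inclusion $\mathrm{NF}(\mathcal{R}) \subseteq \mathrm{NF}(\mathcal{R}')$, which genuinely fails without the hypothesis $l \notin \mathrm{NF}(\mathcal{R})$ (a term that was $\mathcal{R}$-normal could acquire an $l$-redex), so that is the step I would state most carefully.
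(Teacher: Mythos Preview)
Your proof is correct and follows essentially the same approach as the paper: both arguments rest on showing that the conversion relation and the set of normal forms are unchanged by adding $l \to r$, from which the equivalence of UNC is immediate. The paper's version merely asserts these two equalities inline within the $(\Rightarrow)$ and $(\Leftarrow)$ directions, whereas you spell out the details (particularly the contrapositive argument for $\mathrm{NF}(\mathcal{R}) \subseteq \mathrm{NF}(\mathcal{R}')$) more carefully, but there is no substantive difference.
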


\begin{lemma}
\label{lem:disproving by eliminating variable}
Suppose
$s \stackrel{*}{\leftrightarrow}_\mathcal{R} t$, $t \in \mathrm{NF}(\mathcal{R})$
and $\mathcal{V}(t) \not\subseteq \mathcal{V}(s)$.
Then $\neg \mathrm{UNC}(\mathcal{R})$.
\end{lemma}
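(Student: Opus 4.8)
The plan is to manufacture, from the given conversion, two \emph{distinct} normal forms that are still convertible, contradicting $\mathrm{UNC}(\mathcal{R})$. Pick a variable $x \in \mathcal{V}(t) \setminus \mathcal{V}(s)$, which exists since $\mathcal{V}(t) \not\subseteq \mathcal{V}(s)$, and two distinct variables $y_1, y_2 \notin \mathcal{V}(t)$ (possible since $\mathcal{V}(t)$ is finite and $\mathcal{V}$ is denumerable). Let $\sigma_i = \{ x \mapsto y_i \}$ for $i = 1, 2$. Since $x \notin \mathcal{V}(s)$ we have $s\sigma_1 = s = s\sigma_2$, and since $\stackrel{*}{\leftrightarrow}_\mathcal{R}$ is closed under substitutions (each step $u \to_\mathcal{R} v$ yields $u\tau \to_\mathcal{R} v\tau$), from $s \stackrel{*}{\leftrightarrow}_\mathcal{R} t$ we get $t\sigma_1 \stackrel{*}{\leftrightarrow}_\mathcal{R} s\sigma_1 = s = s\sigma_2 \stackrel{*}{\leftrightarrow}_\mathcal{R} t\sigma_2$, hence $t\sigma_1 \stackrel{*}{\leftrightarrow}_\mathcal{R} t\sigma_2$. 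Moreover $t\sigma_1 \neq t\sigma_2$: picking a position $p$ with $t|_p = x$ gives $(t\sigma_1)|_p = y_1 \neq y_2 = (t\sigma_2)|_p$.

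The only step that needs an argument is $t\sigma_1, t\sigma_2 \in \mathrm{NF}(\mathcal{R})$. For this I would first establish that normal forms are stable under renamings injective on the term's variables: if $u \in \mathrm{NF}(\mathcal{R})$ and $\rho$ is injective on $\mathcal{V}(u)$, then $u\rho \in \mathrm{NF}(\mathcal{R})$. Indeed, suppose $u\rho \to_\mathcal{R} w$. Since rewrite steps introduce no fresh variables, $\mathcal{V}(w) \subseteq \mathcal{V}(u\rho) = \rho(\mathcal{V}(u))$, so the partial inverse $\rho^{-1}$, well defined on $\rho(\mathcal{V}(u))$ by injectivity on $\mathcal{V}(u)$, applies to $w$; closure of $\to_\mathcal{R}$ under substitution then gives $u = u\rho\rho^{-1} \to_\mathcal{R} w\rho^{-1}$, contradicting $u \in \mathrm{NF}(\mathcal{R})$. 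Now $\sigma_i$ is injective on $\mathcal{V}(t)$ precisely because $y_i \notin \mathcal{V}(t)$, so $t\sigma_i \in \mathrm{NF}(\mathcal{R})$ by this observation applied to $u := t$.

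Combining these, $t\sigma_1$ and $t\sigma_2$ are distinct normal forms with $t\sigma_1 \stackrel{*}{\leftrightarrow}_\mathcal{R} t\sigma_2$, whence $\neg\mathrm{UNC}(\mathcal{R})$. The one mildly delicate point is the normal-form–stability lemma and the need to restrict to \emph{fresh variables}: substituting arbitrary normal forms for $x$ would not work, e.g.\ for $\mathcal{R} = \{ f(a) \to b \}$ the term $t = f(x)$ is a normal form but $t\{x \mapsto a\}$ is not. This is also exactly the reasoning behind steps 3(b) and 3(c) of the UNC completion procedure of Figure~\ref{fig:UNC completion}.
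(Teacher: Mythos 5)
Your proof is correct and follows essentially the same route as the paper's: the paper picks a single fresh variable $y$, sets $t' = t\{x := y\}$, notes $t' \in \mathrm{NF}(\mathcal{R})$ and $t' \stackrel{*}{\leftrightarrow}_\mathcal{R} s \stackrel{*}{\leftrightarrow}_\mathcal{R} t$ with $t' \neq t$, exactly the renaming idea you use (with two fresh variables instead of comparing $t\sigma_1$ with $t$ itself, a cosmetic difference). Your explicit justification that normal forms are stable under renamings injective on the term's variables merely fills in what the paper dismisses as ``clearly.''
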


\begin{theorem}
\label{thm:correctness of UNC procedure}
The UNC completion procedure is correct, i.e.\
if the procedure returns UNC then $\mathrm{UNC}(\mathcal{R})$,
and if the procedure returns NotUNC then $\neg \mathrm{UNC}(\mathcal{R})$.
\end{theorem}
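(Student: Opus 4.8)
The plan is to trace through the two halts that can produce an answer (UNC at Step 2, NotUNC inside Step 3) and show each is sound, using the invariant that the TRS maintained by the procedure always has the same conversion relation as the original input. Concretely, write $\mathcal{R}_0$ for the input TRS and $\mathcal{R}_k$ for the TRS after $k$ completion rounds. The key invariant I would prove by induction on $k$ is: $\mathcal{R}_k \supseteq \mathcal{R}_0$, every rule $l \to r \in \mathcal{R}_k \setminus \mathcal{R}_0$ satisfies $l \stackrel{*}{\leftrightarrow}_{\mathcal{R}_0} r$ and $l \notin \mathrm{NF}(\mathcal{R}_0)$, and consequently $\mathrm{UNC}(\mathcal{R}_k) \Leftrightarrow \mathrm{UNC}(\mathcal{R}_0)$ and $\mathrm{NF}(\mathcal{R}_k) = \mathrm{NF}(\mathcal{R}_0)$. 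The base case is trivial. For the induction step, each rule added in Step~3 is of the form $u \to v$, $v \to u$, or $v \to w$ where $\langle u,v\rangle$ is a critical pair of $\mathcal{R}_k$ (hence $u \stackrel{*}{\leftrightarrow}_{\mathcal{R}_k} v$, and by induction $u \stackrel{*}{\leftrightarrow}_{\mathcal{R}_0} v$), and where the reduct $w$ satisfies $u \stackrel{*}{\to}_{\mathcal{R}_k} w$ (resp.\ $v \stackrel{*}{\to}_{\mathcal{R}_k} w$), again giving convertibility over $\mathcal{R}_0$. The side conditions checked by the procedure (e.g.\ $u \notin \mathrm{NF}$ before adding $u \to v$; the variable inclusion $\mathcal{V}(w) \subseteq \mathcal{V}(v)$ etc.) are exactly what is needed to invoke Lemma~\ref{lem:addition}, which yields $\mathrm{UNC}(\mathcal{R}_{k+1}) \Leftrightarrow \mathrm{UNC}(\mathcal{R}_k)$; the equality of normal form sets is immediate from $l \notin \mathrm{NF}(\mathcal{R}_0)$ for the new left-hand sides (a new rule cannot make a previously normal term reducible, since its lhs was already reducible). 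The variable-inclusion check also guarantees that each added rule is a legitimate rewrite rule, i.e.\ $\mathcal{V}(r) \subseteq \mathcal{V}(l)$, so that $\mathcal{R}_{k+1}$ is a well-formed TRS.

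With the invariant in hand, soundness of the UNC output is direct: if the procedure returns UNC at Step~2 after round $k$, then $\varphi(\mathcal{R}_k)$ holds and $\Phi(u,v)$ holds for all critical pairs of $\mathcal{R}_k$; by the defining assumption on $(\varphi,\Phi)$ this gives $\mathrm{CR}(\mathcal{R}_k)$, hence $\mathrm{UNC}(\mathcal{R}_k)$, hence $\mathrm{UNC}(\mathcal{R}_0)$ by the invariant. (One may phrase this through Lemma~\ref{lem:approximation II}(i) applied to $\mathcal{R}_0 \subseteq \mathcal{R}_k$, but the invariant already delivers the equivalence outright.) Soundness of the NotUNC output is handled by the three exit points in Step~3: in case~3(a), $u,v$ are distinct ($u \neq v$ is required for entering the loop body) $\mathcal{R}_k$-normal forms with $u \stackrel{*}{\leftrightarrow}_{\mathcal{R}_k} v$, directly contradicting $\mathrm{UNC}(\mathcal{R}_k)$; in cases~3(b) and 3(c) the failure of the variable-inclusion check together with $s \stackrel{*}{\leftrightarrow}_{\mathcal{R}_k} t$ and $t \in \mathrm{NF}(\mathcal{R}_k)$ triggers Lemma~\ref{lem:disproving by eliminating variable}, giving $\neg\mathrm{UNC}(\mathcal{R}_k)$. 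In every instance the invariant transports the conclusion back to $\mathcal{R}_0$.

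The main subtlety—and the step I expect to be most delicate to state cleanly—is the bookkeeping around \emph{normal forms}: one must argue that adding rules whose left-hand sides are non-normal does not enlarge the set of normal forms, so that $\mathrm{NF}(\mathcal{R}_k) = \mathrm{NF}(\mathcal{R}_0)$ persists, and hence the normal-form tests performed inside Step~3 (which decide which disproof branch to take, and which variable inclusions to demand) are the ``right'' tests with respect to the original system. This is conceptually easy but needs care because a newly added rule $l \to r$ could in principle have $r$ reducible in ways that interact with other rules; what saves us is only that $l$ itself was already reducible in $\mathcal{R}_0$, so no term's normality status changes. A secondary point worth a sentence is that the theorem claims only partial correctness—termination is explicitly not asserted (``or may diverge'')—so I would not need to bound the number of rounds; it suffices that \emph{whenever} the procedure halts with UNC or NotUNC, the answer is correct, which the induction above establishes for every $k$.
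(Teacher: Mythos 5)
Your proof is correct and follows essentially the same route as the paper's: the round-by-round invariance of UNC via Lemma~\ref{lem:addition}, the UNC output justified by the assumption on $(\varphi,\Phi)$ (plus CR $\Rightarrow$ UNC), Step~3(a) handled directly by distinct convertible normal forms, and Steps~3(b)/(c) by Lemma~\ref{lem:disproving by eliminating variable}. The explicit induction on rounds and the normal-form preservation argument merely spell out details the paper leaves implicit.
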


We now present two simple results, which turn out effective
for some examples.

\begin{definition}
Let $\mathcal{R}$ be a TRS.
We write $\mathcal{R} \leadsto \mathcal{R}'$
if $\mathcal{R}' = (\mathcal{R} \setminus \{ l \to r \}) \cup  \{ l \to l, r \to l \}$
for some $l \to r \in \mathcal{R}$ such that
$r \notin \mathrm{NF}(\mathcal{R})$ and $r \to l$ is a rewrite rule,
or $\mathcal{R}' = \mathcal{R} \setminus \{ l \to r \}$
for some $l \to r \in \mathcal{R}$ such that
$l = r$ and $l \notin \mathrm{NF}(\mathcal{R} \setminus \{ l \to r \})$.
Any transformation $\mathcal{R} \stackrel{*}{\leadsto} \mathcal{R}'$
is called a \emph{rule reversing transformation}.
\end{definition}

\begin{theorem}
\label{thm:UNC of rule reversing transformation}
Let $\mathcal{R}'$ be a TRS obtained by 
a rule reversing transformation from $\mathcal{R}$.
Then, $\mathrm{UNC}(\mathcal{R})$ iff $\mathrm{UNC}(\mathcal{R}')$.
\end{theorem}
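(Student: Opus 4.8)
The plan is to prove Theorem~\ref{thm:UNC of rule reversing transformation} by reducing it to the single-step case $\mathcal{R} \leadsto \mathcal{R}'$ and then iterating, since $\stackrel{*}{\leadsto}$ is the reflexive transitive closure of $\leadsto$ and biconditionals compose. So it suffices to show: whenever $\mathcal{R} \leadsto \mathcal{R}'$ by one of the two clauses in the definition, we have $\mathrm{UNC}(\mathcal{R})$ iff $\mathrm{UNC}(\mathcal{R}')$. I would treat the two clauses separately.

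For the first clause, $\mathcal{R}' = (\mathcal{R} \setminus \{l \to r\}) \cup \{l \to l, r \to l\}$ with $r \notin \mathrm{NF}(\mathcal{R})$ and $r \to l$ a legal rewrite rule (so $\mathcal{V}(l) \subseteq \mathcal{V}(r)$; combined with $\mathcal{V}(r) \subseteq \mathcal{V}(l)$ from $l \to r$ being a rule, we get $\mathcal{V}(l) = \mathcal{V}(r)$, and note $l \notin \mathcal{V}$). The key point is that $\stackrel{*}{\leftrightarrow}_{\mathcal{R}} = \stackrel{*}{\leftrightarrow}_{\mathcal{R}'}$: adding $l \to l$ changes nothing, and the pair $\{l \to r\}$ versus $\{r \to l\}$ generates the same symmetric relation, hence the same conversion. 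It remains to check $\mathrm{NF}(\mathcal{R}) = \mathrm{NF}(\mathcal{R}')$. A term is $\mathcal{R}$-reducible iff it contains an instance of some left-hand side of $\mathcal{R}$; the left-hand sides of $\mathcal{R}'$ are those of $\mathcal{R} \setminus \{l \to r\}$ together with $l$ and $r$. Since $l \to r \in \mathcal{R}$ already had $l$ as a left-hand side, adding the rule $l \to l$ does not enlarge the reducibility relation. The only subtlety is the rule $r \to l$: I must argue that $r$ being a left-hand side does not create new reducible terms, and this is exactly where the hypothesis $r \notin \mathrm{NF}(\mathcal{R})$ is used --- $r$ itself already contains an instance of some left-hand side of $\mathcal{R}$, and that left-hand side survives in $\mathcal{R}'$ (it is either $l$, which survives via $l \to l$, or a different rule's left-hand side, which survives), so any term containing an instance of $r$ already contained an $\mathcal{R}$-instance and conversely. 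Having $\stackrel{*}{\leftrightarrow}$ and $\mathrm{NF}$ unchanged, the UNC predicates coincide definitionally.

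For the second clause, $\mathcal{R}' = \mathcal{R} \setminus \{l \to r\}$ where $l = r$ and $l \notin \mathrm{NF}(\mathcal{R}')$. Since $l = r$, the rule $l \to r$ contributes only the identity to $\to_{\mathcal{R}}$, so $\stackrel{*}{\leftrightarrow}_{\mathcal{R}} = \stackrel{*}{\leftrightarrow}_{\mathcal{R}'}$ trivially. For normal forms: removing a rule can only shrink the reducibility relation, but the hypothesis $l \notin \mathrm{NF}(\mathcal{R} \setminus \{l \to r\})$ guarantees that $l$ --- the only left-hand side whose removal is at issue --- is still reducible by a remaining rule, hence every term that was $\mathcal{R}$-reducible via the rule $l \to r$ (i.e.\ contains an instance of $l$) is still $\mathcal{R}'$-reducible. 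Thus $\mathrm{NF}(\mathcal{R}) = \mathrm{NF}(\mathcal{R}')$ and again the UNC predicates coincide. Here one can invoke Lemma~\ref{lem:addition} directly in the contrapositive form (with the roles of $\mathcal{R}$ and $\mathcal{R} \cup \{l \to r\}$), which already packages the $l = r$ case; indeed Lemma~\ref{lem:addition} covers exactly the addition/removal of a rule $l \to r$ with $l \stackrel{*}{\leftrightarrow}_{\mathcal{R}} r$ and $l \notin \mathrm{NF}(\mathcal{R})$, so much of the bookkeeping is already available.

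The main obstacle I anticipate is the normal-form bookkeeping in the first clause: one must be careful that the left-hand side witnessing $r \notin \mathrm{NF}(\mathcal{R})$ could itself be (a renaming of) $l$ at a proper subterm of $r$, or could even be $r$ at the root in a degenerate situation, and one must verify in each case that this witness persists in $\mathcal{R}'$ and matches the same positions. A clean way to organize this is via the observation that for any TRS $\mathcal{Q}$ and any rule $l \to r$ with $l \notin \mathrm{NF}(\mathcal{Q})$, we have $\mathrm{NF}(\mathcal{Q}) = \mathrm{NF}(\mathcal{Q} \cup \{l \to r\}) = \mathrm{NF}(\mathcal{Q} \cup \{r \to l\})$ --- prove this small lemma once and apply it with $\mathcal{Q} = \mathcal{R} \setminus \{l \to r\}$ (using that $l \notin \mathrm{NF}(\mathcal{Q})$ because $r \notin \mathrm{NF}(\mathcal{R})$ forces reducibility by a rule other than $l \to r$, or by $l \to r$ at a proper subterm, in either case a rule of $\mathcal{Q}$), together with $\mathcal{R} = \mathcal{Q} \cup \{l \to r\}$ and $\mathcal{R}' = \mathcal{Q} \cup \{l \to l, r \to l\}$ and $\mathrm{NF}(\mathcal{Q} \cup \{l \to l\}) = \mathrm{NF}(\mathcal{Q})$. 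Everything else is routine.
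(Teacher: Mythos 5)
Your core argument is correct and is essentially the paper's own proof: reduce to a single $\leadsto$-step and check, for each of the two clauses, that $\mathrm{NF}(\mathcal{R}) = \mathrm{NF}(\mathcal{R}')$ and that the one-step symmetric relations generate the same conversion, so the two UNC statements coincide literally; the paper records exactly these two facts and stops. One caution: the ``clean way to organize this'' you sketch at the end is not sound as stated. The auxiliary claim that $r \notin \mathrm{NF}(\mathcal{R})$ forces $l \notin \mathrm{NF}(\mathcal{Q})$ for $\mathcal{Q} = \mathcal{R}\setminus\{ l \to r \}$ fails, e.g.\ for $\mathcal{R} = \{ f(x) \to f(g(x)) \}$: there $r = f(g(x))$ is reducible only by $l \to r$ at the root, $\mathcal{Q} = \emptyset$, and both $l$ and $r$ are $\mathcal{Q}$-normal forms; moreover, even granted $l \notin \mathrm{NF}(\mathcal{Q})$, the equality $\mathrm{NF}(\mathcal{Q}) = \mathrm{NF}(\mathcal{Q} \cup \{ r \to l \})$ would need $r \notin \mathrm{NF}(\mathcal{Q})$, which is a different hypothesis. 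So keep the direct bookkeeping you give first, which correctly measures reducibility of the new left-hand side $r$ against the left-hand sides of $\mathcal{R}$ itself (this is exactly where $r \notin \mathrm{NF}(\mathcal{R})$ is used) and notes that $l$ survives as a left-hand side of $\mathcal{R}'$ via $l \to l$; the appeal to Lemma~\ref{lem:addition} for the second clause is also fine.
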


\begin{example}
Let
\[
\mathcal{R} =
\left\{
\begin{array}{lclcll}
   a &\to& f(a)\\
   h(c,a) &\to& b\\
   h(a,x) &\to& h(x,f(x))\\
\end{array}
\right\}
\]
Since $\mathcal{R}$ is overlapping and not shallow,
Propositions~\ref{prop:UNomega}, \ref{prop:dec shallow}
do not apply. Proposition~\ref{prop:modular} 
does not apply either.
Since it is left-linear, conditional linearization technique
does not apply.
Note here that $f(a) \notin \mathrm{NF}(\mathcal{R})$ because of the rule $a \to f(a)  \in \mathcal{R}$.
Thus, one can apply the rule reversing transformation
to obtain 
\[
\mathcal{R}' =
\left\{
\begin{array}{lclcll}
   a &\to& a\\
   f(a) &\to& a\\
   h(c,a) &\to& b\\
   h(a,x) &\to& h(x,f(x))\\
\end{array}
\right\}
\]
Now, it is easy to check $\mathcal{R}'$ is left-linear and development closed,
and thus $\mathcal{R}'$ is confluent.
Thus, from Theorem~\ref{thm:UNC of rule reversing transformation},
we conclude $\mathcal{R}$ has UNC.
\end{example}

\begin{definition}
A TRS $\mathcal{R}$ is said to be \emph{right-reducible}
if $r \notin \mathrm{NF}(\mathcal{R})$ for all $l \to r \in \mathcal{R}$.
\end{definition}

\begin{theorem}
\label{thm:unc by rhs}
Any right-reducible TRS has $\mathrm{UNC}$.
\end{theorem}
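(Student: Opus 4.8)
The plan is to apply Lemma~\ref{lem:approximation II}(i) with a suitable confluent approximation $\mathcal{S}$ of $\mathcal{R}$. The natural candidate is $\mathcal{S} = \mathcal{R} \cup \{ l \to l \mid l \to r \in \mathcal{R}\}$ --- but the truly decisive observation is much simpler: if $\mathcal{R}$ is right-reducible, then $\mathrm{NF}(\mathcal{R})$ contains no term that is the right-hand side of a rule instance, and in fact I claim $\mathrm{NF}(\mathcal{R})$ is closed under the property that none of its elements can appear as a reduct of anything. So first I would verify the key structural fact: for a right-reducible TRS, whenever $s \xrightarrow{*}_{\mathcal{R}} t$ with $s \neq t$, then $t \notin \mathrm{NF}(\mathcal{R})$. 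Indeed the last rewrite step $s' \to_{p,\mathcal{R}} t$ produces $t = s'[r\sigma]_p$ for some $l \to r \in \mathcal{R}$; since $r \notin \mathrm{NF}(\mathcal{R})$, the subterm $r\sigma$ at position $p$ is reducible, hence $t$ is reducible, i.e.\ $t \notin \mathrm{NF}(\mathcal{R})$.

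From this the theorem follows directly, without invoking Lemma~\ref{lem:approximation II} at all. Suppose $a \xleftrightarrow{*}_{\mathcal{R}} b$ with $a, b \in \mathrm{NF}(\mathcal{R})$; I must show $a = b$. Take any conversion sequence between $a$ and $b$. Since $a$ is a normal form, the first step (if any) must be $a \xleftarrow{}_{\mathcal{R}} a_1$, i.e.\ it goes backwards; symmetrically the last step must be $b_1 \xrightarrow{}_{\mathcal{R}} b$. The hard part --- and this is where I expect the real work to sit --- is to rule out any \emph{forward} step in the conversion. But by the structural fact above, a forward step $u \to_{\mathcal{R}} v$ always yields a reducible $v$; so $v$ is never a normal form, and in particular a forward step can never be the last step of the conversion arriving at $b \in \mathrm{NF}(\mathcal{R})$, nor followed only by backward steps reaching $b$ (since backward steps from a reducible term need not restore normality --- this needs care). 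The clean way around this subtlety is: consider the conversion as a sequence of peaks and valleys; since every term reached by a forward arrow is reducible, and $a, b$ are irreducible, the endpoints can only be touched by backward arrows. So the conversion has the shape $a \xleftarrow{*} \circ \xrightarrow{*} \circ \cdots$, but actually I should argue more carefully by induction on the length of the conversion, peeling off the leading backward steps $a \xleftarrow{*}_{\mathcal{R}} a'$ until the first forward step; the term $a'$ just before that forward step must then have two forward descendants (continuing the conversion) --- no, this still does not obviously close.

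The cleanest route, which I would actually adopt, is via Lemma~\ref{lem:approximation II}(i) after all, but with the observation sharpened: let $\mathcal{S} = \mathcal{R}$ itself suffices \emph{if} $\mathcal{R}$ is confluent, which it need not be; so instead I note that right-reducibility gives $\mathrm{NF}(\mathcal{R}) = \{t \mid t \text{ has no reducible subterm}\}$ and then show $\mathrm{UNC}(\mathcal{R})$ by a direct induction on conversion length using the structural fact. Concretely: by induction on the length $n$ of a conversion $a \xleftrightarrow{*}_{\mathcal{R}} b$ with $a,b \in \mathrm{NF}(\mathcal{R})$. If $n = 0$, done. If $n \geq 1$: since $a \in \mathrm{NF}(\mathcal{R})$, the first step is $a \xleftarrow{}_{\mathcal{R}} c$ for some $c$, leaving a conversion $c \xleftrightarrow{*}_{\mathcal{R}} b$ of length $n-1$; but $a \leftarrow_{\mathcal{R}} c$ together with $a \in \mathrm{NF}$ means $a$ is a reduct of $c$, and by the structural fact $a$ would be reducible unless the step $c \to a$ is the identity --- contradiction with $a \in \mathrm{NF}(\mathcal{R})$, \emph{unless} $n = 0$. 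Hence $n = 0$ is forced and $a = b$. I expect the main obstacle to be stating the structural fact with exactly the right quantifiers (handling the reflexive case $s = t$ and the position bookkeeping) so that it genuinely forbids any nontrivial incoming arrow at a normal form; once that is pinned down the induction is immediate.
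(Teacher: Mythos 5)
Your final argument is essentially the paper's own proof: a normal form $a$ can neither be the source of a rewrite step nor, by right-reducibility, the target of one (any target of a step has the form $c[r\sigma]_p$ and thus contains an instance of a reducible right-hand side), so a conversion between two normal forms must have length zero. The earlier detours through Lemma~\ref{lem:approximation II} and the caveat about identity steps are unnecessary---a step $c \to_{\mathcal{R}} a$ with $c = a$ still exhibits a redex in $a$, so the same contradiction closes that case immediately.
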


\begin{example}[Cops $\sharp$126]
\[
\mathcal{R} =
\left\{
\begin{array}{lclcll}
 f(f(x,y),z) &\to& f(f(x,z),f(y,z))
\end{array}
\right\}
\]
The state of the art confluence tools fail to
prove confluence of this example.
However, it is easy to see $\mathcal{R}$ is right-reducible,
and thus, UNC is easily obtained automatically.
\end{example}

\begin{table*}\small
\begin{center}
\begin{tabular}{l|c|c|c|c|c|c|c|c|c|c}\hline
~~~~\textit{without \textbf{(rev)}}   & \textbf{(sno)}& \textbf{($\omega$)}&\textbf{(pcl)}&\textbf{(scl)}&\textbf{(wd)}&\textbf{(sc)}~1/2/3&\textbf{(dc)}~1/2/3&\textbf{(rr)}&\textbf{(cp)} & \textit{all} \\\hline
\texttt{YES}&        7      &     7   &  7   & 0   &    2        &  0/6/8      &   0/6/9        &   35        &   0          & 47 \\
\texttt{NO} &        0      &     0   &  0   & 0   &    0        &  14/33/41   &   14/34/41     &   0         &   42         & 58 \\
\texttt{YES}+\texttt{NO} 
            &        7      &     7   &  7   & 0   &    2        &  14/39/49   &   14/40/50     &   35        &   42          & 105 \\
timeout (60s)&        0     &     0   &  7   & 0   &    0        &  2/7/17     &   4/10/19      &    0        &   0          & -- \\\hline
\multicolumn{10}{c}{}\\[-1ex]\hline
~~~~\textit{with \textbf{(rev)}}   & \textbf{(sno)}& \textbf{($\omega$)}&\textbf{(pcl)}&\textbf{(scl)}&\textbf{(wd)}&\textbf{(sc)}~1/2/3&\textbf{(dc)}~1/2/3&\textbf{(rr)}&\textbf{(cp)} & \textit{all} \\\hline
\texttt{YES}&        3      &     3   &  3   & 0   &    0        &  24/42/45   &   24/35/39     &  35         &   0          & 62 \\
\texttt{NO} &        0      &     0   &  0   & 0   &    0        &  15/39/44   &   15/40/44     &   0         &   35         & 56 \\
\texttt{YES}+\texttt{NO} 
            &        3      &     3   &  3   & 0   &    0        &  39/81/89   &   39/75/83     &   35        &   35         & 118  \\
timeout (60s)&       0      &     0   &  0   &  0  &    2        &  3/4/8      &   3/5/9        &   0         &   0          & -- \\\hline
\multicolumn{10}{c}{}\\[-1ex]\hline
~~~~\textit{both}   & \textbf{(sno)}& \textbf{($\omega$)}&\textbf{(pcl)}&\textbf{(scl)}&\textbf{(wd)}&\textbf{(sc)}~1/2/3&\textbf{(dc)}~1/2/3&\textbf{(rr)}&\textbf{(cp)} & \textit{all} \\\hline
\texttt{YES}+\texttt{NO} 
            &        7      &     7   &  7  & 0  &    2        &  39/82/90    &   39/78/85      &   35        &   42          & 127 \\\hline
\end{tabular}
\end{center}
\caption{Test on basic criteria}
\label{table:test I}
\end{table*}

\section{Experiment}

We have tested various methods presented so far. The methods used in our experiment
are summarized as follows.
\begin{itemize}
\item[\textbf{(sno)}]
$\mathrm{UNC}(\mathcal{R})$ if $\mathcal{R}$ is strongly non-overlapping.

\item[\textbf{($\omega$)}]
$\mathrm{UNC}(\mathcal{R})$ if $\mathcal{R}$ is non-$\omega$-overlapping.

\item[\textbf{(pcl)}]
$\mathrm{UNC}(\mathcal{R})$ if $\mathcal{R}^L$ is parallel-closed.

\item[\textbf{(scl)}]
$\mathrm{UNC}(\mathcal{R})$ if $\mathrm{UNC}(\mathcal{R})$ is right-linear and $\mathcal{R}^L$ is strongly closed.

\item[\textbf{(wd)}]
$\mathrm{UNC}(\mathcal{R})$ if $\mathcal{R}$ is non-duplicating and weight-decreasing joinable
by the condition of Theorem~\ref{thm:I}.

\item[\textbf{(sc)}]
UNC completion using strongly-closed critical pairs criterion for linear TRSs.

\item[\textbf{(dc)}]
UNC completion using development-closed critical pairs criterion for left-linear TRSs.

\item[\textbf{(rr)}]
$\mathrm{UNC}(\mathcal{R})$ if $\mathcal{R}$ is right-reducible.

\item[\textbf{(cp)}]
$\neg \mathrm{UNC}(\mathcal{R})$ 
by adhoc search of a counterexample for $\mathrm{UNC}(\mathcal{R})$.

\item[\textbf{(rev)}]
Rule reversing transformation, combined with other criteria above.

\end{itemize}
Here, we remark that \textbf{(sno)} is subsumed by \textbf{($\omega$)}
and just included for the reference.
For the implementation of non-$\omega$-overlapping condition, we need
unification over infinite terms; our implementation is based on the algorithm in  \cite{Jaffar:unifInf}.
The last one \textbf{(rev)} is used combined with the other methods.
For \textbf{(sc)} and \textbf{(dc)},
we employed an approximation of $\stackrel{*}{\to}$ by $\darrow$
in Step 3(d).
We employed a heuristics for \textbf{(rev)}
the first kind of transformation is tried only when the term length of $l$ is less than that of $r$.
For \textbf{(cp)}, we use an adhoc search based on rule reversing, critical pairs computation,
and rewriting.

We test on the 144 TRSs from the Cops (Confluence Problems) database\footnote{
Cops can be accessed from \url{http://cops.uibk.ac.at/}, which currently includes 438 TRSs.}
of which no confluence tool has proven confluence nor terminating.
The motivation of using such testbed is as follows:
If a confluent tool can prove CR, then UNC is obtained by confluent tools.
If $\mathcal{R}$ is terminating then 
$\mathrm{CR}(\mathcal{R})$ iff $\mathrm{UNC}(\mathcal{R})$,
and thus the result follows also from the result of confluence tools.
Assuming dedicated termination or confluence tools are used at first,
we haven't elaborated on sophisticated combination with confluence proofs in ACP.

In Table~\ref{table:test I}, we summarize the results. 
Out test is performed on a PC with 2.60GHz cpu with 4G of memory.
The column headings show the technique used.
The number of examples for which UNC is proved (disproved) 
successfully is shown in the row titled '\texttt{YES}' (resp.\ '\texttt{NO}').
In the columns below \textbf{(sc)} and \textbf{(dc)},
we put $l/n/m$ where each $l,n,m$ denotes the scores for the 
1-round  (2-rounds, 3-rounds) UNC completion.
The columns below '\textit{all}' show the numbers of examples
succeeded in any of the methods.

The columns below the row headed '\textit{with \textbf{(rev)}}' are 
the results for which methods are applied after the rule reversing transformation.
The columns below the row headed '\textit{both}'
show the numbers of examples succeeded by each technique, 
where the techniques are applied to both of the original TRSs and the TRSs obtained by the rule reversing transformation.

3 rounds UNC completions \textbf{(sc)}, \textbf{(dc)}
with rule reversing are most effective, but they also record most timeouts.
Simple methods \textbf{(rr)}, \textbf{(cp)} are also effective for not few examples.
There is only a small number of examples in the testbed for which 
weight-decreasing criterion or critical pairs criteria for conditional linearization work.
Rule reversing \textbf{(rev)} is only worth incorporated for UNC completions.
For other methods, the rule reversing make the methods less effective;
for methods \textbf{(sno)}, \textbf{($\omega$)},
\textbf{(pcl)}, \textbf{(scl)} and \textbf{(wd)},
this is because the rule reversing transformation generally increases the number of lhs of the rules.
In total, UNC of the 127 problems out of 144 problems have been solved by combining our techniques.
All the details of the experiment are found in
\url{http://www.nue.ie.niigata-u.ac.jp/tools/acp/experiments/ppdp18-sbm/}.

\begin{table*}
\begin{center}
\begin{minipage}{5.3cm}
\begin{tabular}{l|c|c|c}\hline
                           & \textsf{ACP}& \textsf{CSI} &\textsf{FORT} \\\hline
\texttt{YES}               &    66       &     41       &   38 \\
\texttt{NO}                &    60       &     43       &   34 \\
\texttt{YES}+\texttt{NO}   &   126       &     84       &   72 \\
time                       &   13m       &     25m      &   55s\\\hline
\end{tabular}
\end{minipage}
\begin{minipage}{3.8cm}
\begin{tikzpicture}[scale=.4,thick]
\draw (0,1) circle (1.8);
\draw (.866,-.5) circle (1.8);
\draw (-.866,-.5) circle (1.8);
\node at (0,0) {33};
\node at (0,-2.9) {67};
\node at (0,1.9) {22};
\node at (0,-1.3) {1};
\node at (1.645,-0.95) {0};
\node at (-1.645,-0.95) {0};
\node at (1.126,0.65) {4};
\node at (-1.195,0.65) {7};
\node at (0,3.4) {\textsf{ACP}};
\node at (3.8,-1.6) {\textsf{FORT}};
\node at (-3.5,-1.6) {\textsf{CSI}};
\end{tikzpicture}
\end{minipage}
\hspace{1cm}
\begin{minipage}{3.8cm}
\begin{tikzpicture}[scale=.4,thick]
\draw (0,1) circle (1.8);
\draw (.866,-.5) circle (1.8);
\draw (-.866,-.5) circle (1.8);
\node at (0,0) {26};
\node at (0,-2.9) {61};
\node at (0,1.9) {11};
\node at (0,-1.3) {1};
\node at (1.645,-0.95) {0};
\node at (-1.645,-0.95) {0};
\node at (1.126,0.65) {7};
\node at (-1.195,0.65) {16};
\node at (0,3.4) {\textsf{ACP}};
\node at (3.8,-1.6) {\textsf{FORT}};
\node at (-3.5,-1.6) {\textsf{CSI}};
\end{tikzpicture}
\end{minipage}
\end{center}
\caption{Comparison of UNC proofs (1)}
\label{table:test II}
\end{table*}

\begin{table*}
\begin{center}
\begin{minipage}{6cm}
\begin{tabular}{l|c|c|c|c}\hline
                           & \multicolumn{2}{c|}{58 ground} & \multicolumn{2}{c}{86 non-ground} \\\hline
                           & \textsf{ACP}& \textsf{CSI} & \textsf{ACP}& \textsf{CSI}\\\hline
\texttt{YES}               &    33       &     34       &   33        &     7       \\   
\texttt{NO}                &    23       &     24       &   37        &     19       \\  
\texttt{YES}+\texttt{NO}   &    56       &     58       &   70        &     26      \\  
time                       &    2.5m     &    70s       &   11.5m     &     25m \\\hline
\end{tabular}
\end{minipage}
\hspace{1cm}
\begin{minipage}{6cm}
\begin{tabular}{l|c|c|c|c}\hline
                           & \multicolumn{2}{c|}{72 LL-RG} & \multicolumn{2}{c}{72 non-LL-RG} \\\hline
                           & \textsf{ACP}& \textsf{FORT}  & \textsf{ACP}& \textsf{FORT}\\\hline
\texttt{YES}               &    37       &      38        &   29        &     --      \\   
\texttt{NO}                &    33       &      34        &   27        &     --       \\  
\texttt{YES}+\texttt{NO}   &    70       &      72        &   56        &     --      \\  
time                       &    2.8m     &      40s     &   11.4m       &     -- \\\hline
\end{tabular}
\end{minipage}
\end{center}
\caption{Comparison of UNC proofs (2)}
\label{table:test III}
\end{table*}

\section{Tool} 

The experiment in the previous section
reveals how presented methods for UNC (dis)proving should be combined---we 
have incorporated UNC (dis)proof methods
\textbf{($\omega$)},
\textbf{(pcl)},
\textbf{(scl)},
\textbf{(wd)},
\textbf{(rr)},
\textbf{(cp)},
\textbf{(rev+sc)}/3 and
\textbf{(rev+dc)}/3 into our confluence tool \textsf{ACP} \cite{ACP}.

\textsf{ACP} originally intends to (dis)prove confluence of TRSs;
we have extended it to also deal with (dis)proving UNC of TRSs.
Since \textsf{ACP} facilitates CR proof methods, 
it is easy to use confluence criteria other than strong-closedness and development-closedness;
thus, we add yet another UNC completion procedure in which 
confluence check is performed only to the final result of completion.

We have also incorporated modularity results:
we have incorporated extensions of Proposition \ref{prop:modular},
namely persistent decomposition \cite{AT97},
and layer-preserving decomposition \cite{AotoTandToyamaY:topdown}.
Using these decomposition methods, our tool first try to decompose 
the problem into smaller components if possible.


\textsf{ACP} is written in SML/NJ and provided as the heap image of SML.
The new version (ver.\ 0.62) 
is downloadable from\\
\hfil \url{http://www.nue.ie.niigata-u.ac.jp/tools/acp/}. \hfil\\
To run the UNC (dis)proving, it should be invoked like this:\\
\hspace*{1cm}\texttt{\$ sml @SMLload=acp.x86-linux -p unc}~\textit{filename}

Other tools that support UNC (dis)proving include \textsf{CSI} \cite{CSI2},
which is a powerful confluence prover supporting UNC proof
for non-$\omega$-overlapping TRSs and a decision procedure of UNC
for ground TRSs, 
and \textsf{FORT} \cite{FORT},
which implements decision procedure for first-order theory of
left-linear right-ground TRSs based on tree automata.
Our new methods are also effective for TRSs outside 
the class of non-$\omega$-overlapping TRSs
and that of left-linear right-ground TRSs.

A comparison of our tool and these tools 
(\textsf{CSI} ver.~1.1 and \textsf{FORT} ver.~1.0)
is given in Table~\ref{table:test II} (a).
The diagram on the center (right) in Table 2 shows the distribution of problems 
for which some tool can show UNC (resp. non-UNC).
There are 22 problems for which UNC has been newly proved automatically,
and 11 problems for which UNC have been newly disproved automatically.
Since most of success of \textsf{CSI} (\textsf{FORT})
is due to the decision procedure for ground TRSs (left-linear right-ground TRSs),
the size of the problem sets of ground TRS vs.\ non-ground TRSs
(left-linear right-ground TRSs vs.\ non-left-linear or non-right-ground TRSs)
highly affect the result.
In Table~\ref{table:test III}(b),  we present a comparison of
\textsf{ACP} and \textsf{CSI} distinguishing case of ground TRSs and non-ground TRSs,
and that of \textsf{ACP} and \textsf{FORT} distinguishing case of left-linear right-ground TRSs
and other TRSs.
Our methods work for many of left-linear right-ground TRSs, but takes much longer time
than decision procedures in \textsf{CSI} or \textsf{FORT}.

\section{Conclusion}

In this paper, we have studied automated methods for (dis)proving UNC of TRSs.
We have presented some new methods for (dis)proving UNC of TRSs.
Presented methods, 
except for 
the decidability results (Propositions~\ref{prop:dec LLRG} and \ref{prop:dec shallow}),
have been implemented over our confluence tool \textsf{ACP}.
Our tool is capable of UNC (dis)proofs for TRSs outside 
the class of non-$\omega$-overlapping TRSs and that of left-linear right-ground TRSs,
for which class UNC dis(proof) had been already implemented by tools
\textsf{CSI} and \textsf{FORT}, respectively. 

We have not yet incorporated 
the decidability results (Propositions~\ref{prop:dec LLRG} and \ref{prop:dec shallow}).
Currently, our tool lacks 
a sophisticated infrastructure for implementing efficient decision procedures.
Incorporating these methods to our tool remains as our future work.
It is shown in \cite{condlin} that $CL_\mathrm{sp}$,
the S,K,I-rules for the combinatory logic supplemented with surjective pairing, 
has UNC. Our tool, however, can not handle this example;
this is theoretically so, even with the help of Propositions~\ref{prop:dec LLRG} and \ref{prop:dec shallow}.
The argument used in \cite{condlin} for showing UNC of $CL_\mathrm{sp}$ seems hardly automatable.
Thus, more powerful methods to prove UNC automatically should be investigated.
Lastly, another future plan is to extend our tools to deal with NFP and UNR,
and conditional rewriting as well.

%

%


\appendix

\section{Omitted Proofs}
\label{sec:omitted proofs}







We first prepare two lemmas to present
a proof of Theorem \ref{thm:parallel closed}.



\begin{lemma}
\label{lem:cp lemma I}
Let $\mathcal{R}$ be a semi-equational CTRS
and $l \to r \Leftarrow \Gamma \in \mathcal{R}$ be left-linear.
Suppose $s  \mathrel{_P{\prarrow}} l\theta \to_{\epsilon, l \to r\Leftarrow \Gamma} r\theta$,
and any redex occurrence of $l\theta \to_P s$ is contained in 
a subterm occurrence of $\theta(x)$ in $l\theta$ for some $x \in \mathcal{V}(l)$.
Then there exists $t$ such that 
$s \to_{\epsilon, l \to r \Leftarrow \Gamma} t \prarrow r\theta$.
\end{lemma}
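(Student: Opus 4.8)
The plan is to carry out the standard ``parallel moves'' argument for the variable-overlap case, adapted to semi-equational conditional rewriting. The crucial step is to read off from the hypothesis a substitution $\theta'$ that records the cumulative effect of the parallel step on the substitution part of the redex $l\theta$. Since $l$ is left-linear, each $x \in \mathcal{V}(l)$ occurs exactly once in $l$, so $l\theta$ contains exactly one occurrence of $\theta(x)$, at the unique $x$-position of $l$; as distinct variable positions of $l$ are pairwise incomparable, these occurrences are at pairwise incomparable positions, and hence by assumption every redex of the parallel step $l\theta \plarrow s$ lies inside exactly one of them. For each $x \in \mathcal{V}(l)$ I would let $\theta'(x)$ be the result of performing on $\theta(x)$ precisely those redexes of the parallel step that lie within the occurrence of $\theta(x)$ in $l\theta$, and set $\theta'(y) := \theta(y)$ for $y \notin \mathcal{V}(l)$. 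Left-linearity of $l$ then gives $s = l\theta'$, and $\theta(x) \plarrow \theta'(x)$ for every variable $x$ (trivially when $x \notin \mathcal{V}(l)$).

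It then remains only to verify (a) $l\theta' \to_{\epsilon, l \to r \Leftarrow \Gamma} r\theta'$ and (b) $r\theta \plarrow r\theta'$; taking $t := r\theta'$ then yields $s \to_{\epsilon, l \to r \Leftarrow \Gamma} t \prarrow r\theta$, as required. Claim (b) is immediate: performing the step $\theta(x) \plarrow \theta'(x)$ inside every occurrence of every variable of $r$ gives $r\theta \plarrow r\theta'$ (and since this is a parallel step, non-linearity of $r$ is no obstacle). For (a), the $l$-part is fine by construction, so the only point is the conditional part. Because $l\theta \to_{\epsilon, l \to r \Leftarrow \Gamma} r\theta$, for each equation $u_i \approx v_i$ of $\Gamma$ we have $u_i\theta \stackrel{*}{\leftrightarrow}_\mathcal{R} v_i\theta$. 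From $\theta(x) \plarrow \theta'(x)$, hence $\theta(x) \stackrel{*}{\to}_\mathcal{R} \theta'(x)$, we obtain $u_i\theta \stackrel{*}{\to}_\mathcal{R} u_i\theta'$ and $v_i\theta \stackrel{*}{\to}_\mathcal{R} v_i\theta'$, so $u_i\theta' \stackrel{*}{\leftarrow}_\mathcal{R} u_i\theta \stackrel{*}{\leftrightarrow}_\mathcal{R} v_i\theta \stackrel{*}{\to}_\mathcal{R} v_i\theta'$, i.e.\ $u_i\theta' \stackrel{*}{\leftrightarrow}_\mathcal{R} v_i\theta'$. Each such conversion is finite and built from $\to_\mathcal{R}$-steps, so there is a rank $m$ with ${\stackrel{*}{\leftrightarrow}}^{(m)}_{\mathcal{R}}$ relating $u_i\theta'$ and $v_i\theta'$ for all $i$, whence $l\theta' \to^{(m+1)}_{\mathcal{R}} r\theta'$ via the rule $l \to r \Leftarrow \Gamma$ at the root, which is (a).

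The main obstacle I anticipate is the bookkeeping behind the equality $s = l\theta'$: one must argue carefully that left-linearity of $l$ forces the occurrences of the $\theta(x)$ to partition the redex set of the parallel step and leaves the $l$-skeleton untouched, so that $s$ is again an instance of $l$ via a well-defined $\theta'$; this is exactly where left-linearity of $l \to r \Leftarrow \Gamma$ is indispensable. By contrast, the conditional side is painless, precisely because in a semi-equational CTRS the conditions ask only for convertibility, so replacing a side of a conversion by one of its reducts keeps the condition satisfied and the rank merely needs re-budgeting. Finally I would dispatch the degenerate case in which the parallel step has no redexes, where $s = l\theta$ and $t = r\theta$ and there is nothing to prove.
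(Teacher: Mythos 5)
Your proposal is correct and follows essentially the same route as the paper's proof: define $\theta'$ by contracting, inside each $\theta(x)$, exactly those redexes of the parallel step that fall within it, use left-linearity to get $s = l\theta'$, argue $\vdash_\mathcal{R}\Gamma\theta'$ from the pointwise convertibility $\theta(x) \stackrel{*}{\leftrightarrow}_\mathcal{R} \theta'(x)$ (your explicit rank bookkeeping just spells out what the paper leaves implicit), and close with $r\theta \plarrow r\theta'$ obtained by performing the substitution steps at the variable occurrences of $r$.
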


\begin{proof}
Let $P = \{ p_1,\ldots,p_k \}$,
and, for each $1 \le i \le k$, 
let $\alpha_i$ be the subterm occurrence in $l\theta$ at $p_i$ 
and $\beta_i$ be the subterm occurrence in $s$ at $p_i$.
For each $x \in \mathcal{V}(l)$,
let $\theta(x) = C_x[\alpha_{i_1},\ldots,\alpha_{i_m}]$ with all 
$\alpha_1,\ldots,\alpha_k$ in $\theta(x)$ displayed.
Take a substitution $\theta'$ such as $\theta'(x) = C_x[\beta_{i_1},\ldots,\beta_{i_m}]$.
Then, we have $s = l\theta'$ by linearity of $l$,
and moreover, $\theta'(y) \stackrel{*}{\leftrightarrow} \theta(y)$ for all $y \in \mathcal{V}$
by definition.
From the latter and $\vdash_\mathcal{R} \Gamma\theta$,
we obtain $\vdash_\mathcal{R} \Gamma\theta'$.
Thus, $s = l\theta' \to_\epsilon r\theta'$.
Let $r = C'[x_1,\ldots,x_n]$ with all variable occurrences in $r$ displayed.
Then $r\theta = C'[x_1\theta,\ldots,x_n\theta]
\plarrow C'[x_1\theta',\ldots,x_n\theta'] = r\theta'$.
Thus, 
$s \to_{\epsilon, l \to r \Leftarrow \Gamma} r\theta' \prarrow r\theta$
and the claim is obtained.
\end{proof}

\begin{lemma}
\label{lem:cp lemma II}
Let $\mathcal{R}$ be a semi-equational CTRS and 
$l_1 \to r_1 \Leftarrow \Gamma_1 \in \mathcal{R}$.
Suppose $s  \mathrel{_{p}{\gets}} l_1\theta \to_{\epsilon, l_1 \to r_1\Leftarrow \Gamma_1} r_1\theta$,
and the redex occurrence of $l_1\theta \to_p s$ is not contained in 
any subterm occurrence of $\theta(x)$ ($x \in \mathcal{V}(l_1)$) in $l_1\theta$.
Then $s  \gets l_1\theta \to r_1\theta$
is an instance of a conditional critical pair
$\Sigma \Rightarrow \langle v, w \rangle$ and substitution $\sigma$,
i.e.\ there exists some substitution $\sigma$
such that
$s = v\sigma$, $r_1\theta = w\sigma$ and $\vdash_\mathcal{R} \Sigma\sigma$.
\end{lemma}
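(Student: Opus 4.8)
The statement is the conditional-critical-pair lemma for semi-equational CTRSs, and the plan is to carry out the usual first-order overlap analysis, simply dragging the condition parts along. First I would name the inner step: since $s \mathrel{_{p}{\gets}} l_1\theta$ is a rewrite step of the semi-equational CTRS $\mathcal{R}$, there are a rule $l_2 \to r_2 \Leftarrow \Gamma_2 \in \mathcal{R}$ and a substitution $\tau$ with $l_1\theta|_p = l_2\tau$, $\vdash_\mathcal{R} \Gamma_2\tau$, and $s = (l_1\theta)[r_2\tau]_p$; after renaming we may assume $\mathcal{V}(l_2,r_2,\Gamma_2)$ is disjoint from $\mathcal{V}(l_1,r_1,\Gamma_1)$.

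Next I would exploit the hypothesis that the redex occurrence at $p$ is not inside any $\theta(x)$: this forces $p$ to be a non-variable position of $l_1$ (otherwise the redex at $p$ would lie within some $\theta(x)$), so $l_1|_p \notin \mathcal{V}$ and $(l_1|_p)\theta = l_1\theta|_p = l_2\tau$. Consequently the disjoint-domain combination $\theta \uplus \tau$ is a unifier of $l_1|_p$ and $l_2$, hence $\sigma := \mathrm{mgu}(l_2, l_1|_p)$ exists. Provided it is not the degenerate case $p = \epsilon$ with $l_2 \to r_2 \Leftarrow \Gamma_2$ identical to $l_1 \to r_1 \Leftarrow \Gamma_1$ (in which case $l_1\theta$ already determines $\theta$ on $\mathcal{V}(l_1) \supseteq \mathcal{V}(r_1)$, so $s = r_1\theta$ and there is nothing to prove), this overlap yields the conditional critical pair $\Gamma_2\sigma,\Gamma_1\sigma \Rightarrow \langle l_1[r_2]_p\sigma,\ r_1\sigma \rangle$; put $\Sigma := \Gamma_2\sigma,\Gamma_1\sigma$, $v := l_1[r_2]_p\sigma$ and $w := r_1\sigma$.

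Finally I would produce the witnessing substitution. By the mgu property there is $\mu$ with $\sigma\mu = \theta \uplus \tau$ on all relevant variables. Then $v\mu = l_1[r_2]_p(\theta\uplus\tau) = (l_1\theta)[r_2\tau]_p = s$ and $w\mu = r_1(\theta\uplus\tau) = r_1\theta$, the latter because $\mathcal{V}(r_1) \subseteq \mathcal{V}(l_1)$ avoids the variables of $l_2$. Moreover $\Sigma\mu = \Gamma_2\tau,\Gamma_1\theta$, and $\vdash_\mathcal{R} \Gamma_2\tau$ holds because the inner step is a valid rewrite step while $\vdash_\mathcal{R} \Gamma_1\theta$ holds because the root step is, so $\vdash_\mathcal{R} \Sigma\mu$. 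Taking this $\mu$ as the substitution required by the statement completes the argument. There is no deep obstacle here; the only points needing care are the bookkeeping of which of $\theta$, $\tau$ supplies which block of $\Sigma\mu$ when restricting $\theta \uplus \tau$ along $\sigma$, and the explicit dismissal of the degenerate root-overlap with identical rules. The conditional, semi-equational character of $\mathcal{R}$ does not interfere: the condition parts are inert syntactic baggage in the unification, and their satisfiability for the instance follows verbatim from the satisfiability of the conditions of the two steps forming the peak.
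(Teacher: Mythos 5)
Your proposal is correct and follows essentially the same route as the paper's proof: the hypothesis forces $p$ to be a non-variable position of $l_1$, so $l_1|_p$ and the (renamed) inner left-hand side unify, the peak is the corresponding conditional critical pair instantiated by the substitution factoring $\theta \uplus \tau$ through the mgu, and satisfiability of the instantiated condition part $\Gamma_2\tau,\Gamma_1\theta$ comes directly from the two rewrite steps (the paper merely streamlines this by merging $\theta$ and $\tau$ into one substitution after the variable renaming). Your explicit side remark on the degenerate root overlap of a rule with itself is extra care the paper's proof silently omits, though note that in that case the stated conclusion (existence of a CCP) is not literally what you verify, so "nothing to prove" should rather be read as "that case is handled separately where the lemma is applied."
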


\begin{proof}
Let $s  \mathrel{_{p,l_2 \to r_2\Leftarrow \Gamma_2}{\gets}} l_1\theta$.
W.l.o.g.\ assume $\mathcal{V}(l_1) \cap \mathcal{V}(l_2) = \emptyset$.
Then we can let $l_1\theta = l_1[l_2]_p \theta$ and $s = l_1\theta[r_2\theta]_p$.
By the condition $p \in \mathrm{Pos}_\mathcal{F}(l_1)$,
and hence $l_1\theta|_p = l_1|_p\theta = l_2 \theta$,
and thus $l_1|_p$ and $l_2$ is unifiable.
Hence, there exists a conditional critical pair
$\Gamma_1\rho,\Gamma\rho \Rightarrow \langle l_1[r_2]_p\rho, r_1 \rho \rangle$ of $\mathcal{R}$,
where $\rho$ is an mgu of $l_1|_p$ and $l_2$.
Furthermore, by the definition of mgu, there exists a substitution $\sigma$
such that $\sigma \circ \rho = \theta$.
Then we have $s = l_1[r_2]_p\theta = l_1[r_2]_p\rho\sigma = v\sigma$,
$r_1\theta = r_1\rho\sigma = w\sigma$, and
$\Gamma_1\theta \cup \Gamma_2\theta
= (\Gamma_1 \rho \cup \Gamma_2\rho) \sigma
= \Sigma\sigma$.
Thus, from $\vdash_\mathcal{V} \Gamma_1\theta, \Gamma_2\theta$,
it follows $\vdash_\mathcal{V} \Sigma\sigma$.
\end{proof}

\begin{proof}[Proof of Theorem \ref{thm:parallel closed}]
We show the claim
$t \plarrow t_1$ and
$t \plarrow t_2$
imply 
$t_1 \plarrow^* t_3$ and $t_2 \plarrow  t_3$ for some $t_3$.
In fact, the proof is almost same as that of the criteria for TRSs.
The only essential difference is 
captured by Lemmas~\ref{lem:cp lemma I} and \ref{lem:cp lemma II}.
For such parallel peak,
let $t \plarrow_{P_1} t_1$ with $P_1 = \{ p_{11},\ldots,p_{1m} \}$
and $t \plarrow_{P_2} t_2$ with $P_2 = \{ p_{21},\ldots,p_{2n} \}$.
We set subterm occurrences $\alpha_i = t|_{p_{1i}}$ for $i = 1,\ldots,m$
and $\beta_j = t|_{p_{2j}}$ for $j = 1,\ldots,n$.
Let $t = C_1[\alpha_1,\ldots,\alpha_{m}]_{p_{11},\ldots,p_{1m}}
       = C_2[\beta_1,\ldots,\beta_{n}]_{p_{21},\ldots,p_{2n}}$.
Let $t|_{p_k} = l_k\sigma_k$ with $l_k \to r_k \Leftarrow \Gamma_k \in \mathcal{R}$.
Then, we have $t_1 = C_1[r_{11}\sigma_{11},\ldots,r_{1m}\sigma_{1m}]$
$t_2 = C_2[r_{21}\sigma_{21},\ldots,r_{2n}\sigma_{2n}]$,
and $\vdash_\mathcal{R} \Gamma_k\sigma_k$ for all $p_k \in P_1 \cup P_2$.
Let 
\[
\begin{array}{lcl}
\textit{Red}_\textit{in}(t_1 \prarrow t \plarrow t_2)
    &=& \{ \alpha_i  \mid \exists j. \alpha_i \subset \beta_j  \}
   \uplus \{ \beta_j \mid \exists i. \beta_j \subseteq \alpha_i   \}\\
\textit{Red}_\textit{out}(t_1 \prarrow t \plarrow t_2)
    &=& \{ \alpha_i  \mid \forall j. \alpha_i \not\subset \beta_j  \}
   \uplus \{ \beta_j \mid \forall i. \beta_j \not\subseteq \alpha_i  \}
\end{array}
\]
Let us denote by $\vert t \vert$ the size of a term $t$.
Let $\vert M \vert = \sum_{t \in M} \vert t \vert$.
The proof of the claim is by induction on $\vert \textit{Red}_\textit{in}(t_1 \prarrow t \plarrow t_2) \vert$.

\begin{itemize}
\item Case $\vert I \vert = 0$.
Then for any $p_{k_1},p_{k_2} \in P_1 \cup P_2$, $k_1 \neq k_2$ implies $p_{k_1} \parallel p_{k_2}$.
For notational simplicity, we only consider the case
$t = C[\alpha_1,\ldots,\alpha_m,\beta_1,\ldots,\beta_n]$,
$t_1 = C[\alpha_1',\ldots,\alpha_m',\beta_1,\ldots,\beta_n]$,
$t_2 = C[\alpha_1,\ldots,\alpha_m,\beta_1',\ldots,\beta_n']$,
with 
$\alpha_{i} \to \alpha_{i}'$ ($1 \le i \le m$)
and 
$\beta_{j} \to \beta_{j}'$ ($1 \le j \le n$).
Let $t_3 = C[\alpha_1',\ldots,\alpha_m',\beta_1',\ldots,\beta_n']$,
Then 
$t_1 \plarrow_{P_2} t_3$ and $t_2 \plarrow_{P_1}  t_3$.

\item Case $\vert I \vert > 0$.

Let $\gamma_1,\ldots, \gamma_{h}$ be subterm occurrences of the term 
$t$ contained in
$\textit{Red}_\textit{out}(t_1 \prarrow t \plarrow t_2)$.

Then we can write $t = C'[\gamma_1,\ldots,\gamma_h]$,
$t_1 = C'[\gamma_{11},\ldots,\gamma_{1h}]$,
$t_2 = C'[\gamma_{21},\ldots,\gamma_{2h}]$,
where, for each $1 \le k  \le h$, 
$\gamma_k \plarrow \gamma_{1k}$ and $\gamma_1 \plarrow \gamma_{2k}$
with one of them being a root step.
It is sufficient to show there are $\gamma_1',\ldots,\gamma_h'$ 
such that $\gamma_{1k} \plarrow^* \gamma_{k}'$ and $\gamma_{2k} \plarrow \gamma_{k}'$ for each $1 \le k \le h$.

Suppose $1 \le k  \le h$.
\begin{itemize}
\item
Let us consider the case $\gamma_k \plarrow_{\{ \epsilon \}} \gamma_{1k}$ and $\gamma_k \plarrow_P  \gamma_{2k}$.
Then there exist $l \to r \Leftarrow \Gamma \in \mathcal{R}$ and $\sigma$
such that $\gamma_k = l\sigma$ and $\gamma_{1k} = r\sigma$ and  $\vdash_\mathcal{R} \Gamma\sigma$.
Let $\gamma_{k} = \hat{C}[\hat{\gamma}_1,\ldots,\hat{\gamma}_g]$
where the subterm occurrences $\hat{\gamma}_1,\ldots,\hat{\gamma}_g$ 
are at the respective positions in $P$.
Then we can let $\gamma_{2k} = \hat{C}[\hat{\gamma}_1',\ldots,\hat{\gamma}_g']$
with $\hat{\gamma}_i \to \hat{\gamma}_i'$ for each $1 \le i \le g$.

First, consider the case that that for each $\hat{\gamma}_i$,
there exists $x \in \mathcal{V}(l)$ such that $\hat{\gamma}_i$ is contained in some $\sigma(x)$.
Then, by Lemma \ref{lem:cp lemma I},  
$\gamma_{2k} \to \circ \prarrow \gamma_{1k}$.

Otherwise, there exists some $1 \le i \le g$ 
such that $\hat{\gamma}_i$ is contained in $\sigma(x)$ for no $x \in \mathcal{V}(l)$.
Let $p$ be the position of $\hat{\gamma}_i$ in $\gamma_k$.
Then we have
$\gamma_k \to_p \gamma_k[\hat{\gamma}_i']_p \plarrow_{P \setminus \{ p \}} \gamma_{2k}$.
Then, by Lemma~\ref{lem:cp lemma II},
$\langle \gamma_k[\hat{\gamma}'_i], \gamma_{1k}  \rangle$ is
an instance of some CCP $\Gamma \Rightarrow \langle u,v \rangle$,
i.e.\
there exists some $\theta$ such that
$\gamma_k[\hat{\gamma}'_i] = u\theta$,
$\gamma_{1k} = v\theta$ and 
$\vdash_\mathcal{R} \Gamma\theta$.
We distinguish two cases.
\begin{itemize}
\item Case $p = \epsilon$.
Then we have $P = \{ \epsilon \}$
and $\gamma_k[\hat{\gamma}_i']_p = \gamma_{2k}$.
Furthermore, $\Gamma \Rightarrow \langle u,v \rangle$ is an overlay critical pair,
and hence, we have $\Gamma \vdash_\mathcal{R} u \plarrow \circ \prarrow^*  v$ by the parallel-closed assumption.
Thus, 
$u\theta \plarrow \circ \prarrow^*  v\theta$ follows from $\vdash_\mathcal{R} \Gamma\theta$
by Definition.
Hence we have 
$\gamma_{1k} = v\theta \plarrow^*  \circ \prarrow u\theta  = \gamma_k[\hat{\gamma}'_i]  = \gamma_{2k}$.

\item Case $p \neq \epsilon$.
Then, $\Gamma \Rightarrow \langle u,v \rangle$ is an inner-outer critical pair.
Hence, we have $\Gamma \vdash_\mathcal{R} u \plarrow v$ by the parallel-closed assumption.
Thus, 
$u\theta \plarrow v\theta$ follows from $\vdash_\mathcal{R} \Gamma\theta$
by Definition.
Hence we have 
$\gamma_{1k} = v\theta \prarrow u\theta  = \gamma_k[\hat{\gamma}'_i]  \plarrow_{P \setminus \{ p \}} \gamma_{2k}$.
Now, 
$\textit{Red}_\textit{in}(t_1 \prarrow \circ \plarrow t_2)$
contains $\hat{\gamma}_1,\ldots,\hat{\gamma}_d$.
On the other hand, 
$\textit{Red}_\textit{in}(\gamma_{1k} \prarrow \gamma_k[\hat{\gamma}'_i] \plarrow \gamma_{2k})$
contains only subterm occurrences of 
$\hat{\gamma}_1,\ldots,\hat{\gamma}_{p-1},\hat{\gamma}_{p+1},\ldots,\hat{\gamma}_d$.
Thus, we have 
$\vert \textit{Red}_\textit{in}(\gamma_{1k} \prarrow \gamma_k[\hat{\gamma}'_i]\plarrow \gamma_{2k}) \vert
< \vert \textit{Red}_\textit{in}(t_1 \prarrow \circ \plarrow t_2) \vert$.
Thus, by induction hypothesis,
$\gamma_{1k} \plarrow^* \circ \prarrow \gamma_{2k}$.
\end{itemize}

\item
The case $\gamma_k \plarrow_P \gamma_{1k}$ and $\gamma_k \plarrow_{\{ \epsilon \}} \gamma_{2k}$.
This case is proved analogously to the previous case.
\end{itemize}
\end{itemize}
\end{proof}



\begin{proof}[Proof of Theorem~\ref{thm:decidability of WDJ}]
We here supplement the proof of Theorem~\ref{thm:decidability of WDJ}.
For (c), take $S_{l \to r\Leftarrow c}(s,t) = \{ \sigma \mid C[l\sigma] = s,  C[l\sigma] = t  \}$
and then
$\mathrm{RED}_1(\Gamma, s, t) =
\bigcup_{l \to r \Leftarrow c \in \mathcal{R}} \{ \Sigma \mid \langle \Sigma,\mathsf{rhs}(c\sigma)\rangle 
\in  \mathrm{SIM}_0(\Gamma, \mathsf{lhs}(c\sigma)),
\sigma \in S_{l \to r \Leftarrow c}(s,t) \}$,
where $\mathsf{lhs}(u_1\sigma\approx v_1\sigma,\ldots,u_n\sigma \approx v_n\sigma) = \langle u_1\sigma,\ldots,u_n\sigma \rangle$
and $\mathsf{rhs}(u_1\sigma\approx v_1\sigma,\ldots,u_n\sigma \approx v_n\sigma) = \langle v_1\sigma,\ldots,v_n\sigma \rangle$.
For (d), take $A = \bigcup_{(\Psi,s') \in \mathrm{SIM}_0(\Gamma, s)} \{ \langle\Gamma',s',t'\rangle \mid (\Gamma',t') \in \mathrm{SIM}_0(\Psi, t) \}$
and $\bigcup \{ \mathrm{RED}_1(\Gamma',s',t')  \mid \langle\Gamma',s',t'\rangle \in A \}$.
For (g), as ${\sim}_1 =  {\sim}_0 \circ {\leftrightarrow}_1 \circ {\sim}_0$,
take $\mathrm{SRS}_{010}(\Gamma, s, t) \cup  \mathrm{SRS}_{010}(\Gamma, t, s)$.

Now, 
the condition (i) is equivalent to $\langle \Sigma, t \rangle \in \mathrm{SIM}_0(\Gamma,s)$ for some $\Sigma$
or $\mathrm{SIM}_1(\Gamma,s,t) \neq \emptyset$.
The condition (ii) is equivalent to $\mathrm{RED}_2(\Gamma,s,t) \cup  \mathsf{RED}_2(\Gamma,t,s) \neq \emptyset$.
The first part of condition (iii) is equivalent to 
(a) $\Gamma \Vdash_{\mathcal{R}} s  \to_2 \circ \sim_0 t$ or 
(b) $\Gamma \Vdash_{\mathcal{R}} s  \to_1 \circ \sim_1 t$ or 
(c) $\Gamma \Vdash_{\mathcal{R}} s  \to_1 \circ \sim_0 t$.
(a,c) is equivalent to $ \mathrm{RED}_1(\Sigma,s,t') \cup \mathrm{RED}_2(\Sigma,s,t') \neq \emptyset$ 
for some $\langle \Sigma, t' \rangle  \in \mathrm{SIM}_0(\Gamma, t)$.
(b) is equivalent to
$\mathrm{SIM}_1(\Sigma,s',t) \neq \emptyset$ 
for some $\langle \Sigma, s' \rangle  \in \mathrm{RED}_1(\Gamma,s)$.
The second part is similar.
\end{proof}

\begin{proof}[Proof of Lemma \ref{lem:approximation II}]
(i) Suppose $s \stackrel{*}{\leftrightarrow}_\mathcal{R} t$
and $s,t \in \mathrm{NF}(\mathcal{R})$.
Then $s \stackrel{*}{\to}_\mathcal{R} w \mathrel{{_\mathcal{R}{\stackrel{*}{\gets}}}} t$ for some $w$ by $\mathrm{CR}(\mathcal{R})$.
But by $s,t \in \mathrm{NF}(\mathcal{S})$, we obtain $s = w = t$.
(ii) From ${\to}_\mathcal{R} \subseteq {\to}_{\mathcal{S}}$,
we have $\mathrm{NF}(\mathcal{S}) \subseteq \mathrm{NF}(\mathcal{R})$,
and thus $s,t \in \mathrm{NF}(\mathcal{R})$.
From ${\to}_{\mathcal{S}} \subseteq {\stackrel{*}{\leftrightarrow}}_\mathcal{R}$,
$s \stackrel{*}{\leftrightarrow}_\mathcal{R} t$.
\end{proof}

\begin{proof}[Proof of Lemma~\ref{lem:addition}]
($\Rightarrow$)
Suppose $s  \stackrel{*}{\leftrightarrow}_{\mathcal{R} \cup \{ l \to r \}} t$
with $s,t \in \mathrm{NF}(\mathcal{R} \cup \{ l \to r \})$.
Then from $l \stackrel{*}{\leftrightarrow}_\mathcal{R} r$,
we have $s  \stackrel{*}{\leftrightarrow}_{\mathcal{R}} t$.
Furthermore, by $l \notin \mathrm{NF}(\mathcal{R})$,
$\mathrm{NF}(\mathcal{R}) = \mathrm{NF}(\mathcal{R} \cup \{ l \to r \})$.
Thus, $s  \stackrel{*}{\leftrightarrow}_{\mathcal{R}} t$ and $s,t  \in \mathrm{NF}(\mathcal{R})$.
Hence $s = t$ by $\mathrm{UNC}(\mathcal{R})$.
($\Leftarrow$)
Suppose $s  \stackrel{*}{\leftrightarrow}_{\mathcal{R}} t$
with $s,t \in \mathrm{NF}(\mathcal{R})$.
Then, by $\mathcal{R} \subseteq \mathcal{R} \cup \{ l \to r \}$,
we have $s  \stackrel{*}{\leftrightarrow}_{\mathcal{R} \cup \{ l \to r \}} t$.
Furthermore, by $l \notin \mathrm{NF}(\mathcal{R})$,
$\mathrm{NF}(\mathcal{R}) = \mathrm{NF}(\mathcal{R} \cup \{ l \to r \})$.
Thus, 
Suppose $s  \stackrel{*}{\leftrightarrow}_{\mathcal{R} \cup \{ l \to r \}} t$
with $s,t \in \mathrm{NF}(\mathcal{R} \cup \{ l \to r \})$.
Hence,  $s = t$ by $\mathrm{UNC}(\mathcal{R}\cup \{ l \to r \})$.
\end{proof}

\begin{proof}[Proof of Lemma~\ref{lem:disproving by eliminating variable}]
Suppose
$s \stackrel{*}{\leftrightarrow}_\mathcal{R} t \in \mathrm{NF}(\mathcal{R})$
and $x \in \mathcal{V}(t) \setminus \mathcal{V}(s)$.
Take a fresh variable $y$  and let $t' = t\{ x:= y \}$.
Clearly, from $t \in \mathrm{NF}(\mathcal{R})$ 
we have $t' \in \mathrm{NF}(\mathcal{R})$.
By $t' \stackrel{*}{\leftrightarrow}_\mathcal{R} s \stackrel{*}{\leftrightarrow}_\mathcal{R} t$,
we obtain the claim.
\end{proof}

\begin{proof}[Proof of Theorem~\ref{thm:correctness of UNC procedure}]
By Lemma~\ref{lem:addition}, each round Step 1--4 keeps whether $\mathrm{UNC}(\mathcal{R})$ or not.
By the assumption on $\varphi,\Psi$, if UNC is returned in Step 2,
then $\mathrm{UNC}(\mathcal{R})$ holds.
In Step 3a, $u, v$ are convertible distinct normal forms and
hence $\neg \mathrm{UNC}(\mathcal{R})$ holds.
In Step 3b/c, $\neg \mathrm{UNC}(\mathcal{R})$ holds
by Lemma~\ref{lem:disproving by eliminating variable}.
\end{proof}

\begin{proof}[Proof of Theorem~\ref{thm:UNC of rule reversing transformation}]
It suffices to show $\mathcal{R} \leadsto \mathcal{R}'$ implies 
$\mathrm{UNC}(\mathcal{R})$ iff $\mathrm{UNC}(\mathcal{R}')$.
It is easy to see that both conditions ensure
that $\mathrm{NF}(\mathcal{R}) = \mathrm{NF}(\mathcal{R}')$
and ${\stackrel{=}{\leftrightarrow}_\mathcal{R}} = {\stackrel{=}{\leftrightarrow}_{\mathcal{R}'}}$.
From the latter, ${\stackrel{*}{\leftrightarrow}_\mathcal{R}} = {\stackrel{*}{\leftrightarrow}_{\mathcal{R}'}}$.
Thus the claim follows.
\end{proof}

\begin{proof}[Proof of Theorem~\ref{thm:unc by rhs}]
Suppose $s \stackrel{*}{\leftrightarrow}_\mathcal{R} t$, $s,t \in \mathrm{NF}(\mathcal{R})$ and $s \neq t$.
Then from $s \neq t$, we have $s \stackrel{+}{\leftrightarrow}_\mathcal{R} t$,
and thus $s \leftrightarrow_\mathcal{R} s' \stackrel{*}{\leftrightarrow}_\mathcal{R} t$ for some $s'$.
If $s \to_\mathcal{R}  s'$ then this contradicts $s \in \mathrm{NF}(\mathcal{R})$.
If $s' \to_\mathcal{R} s$ then $s' = C[l\theta]$ 
and $s = C[r\theta]$ for some $l \to r \in \mathcal{R}$,
and hence from $r \notin \mathsf{NF}(\mathcal{R})$
we know $s \notin \mathsf{NF}(\mathcal{R})$.
This is again a contradiction.
\end{proof}

\section{Comparison to our Definition \ref{def:derivation of sim}
and Definition 9 of \cite{WeightDec} and a proof of Theorem \ref{thm:I}}
\label{sec:Comparison to WeightDec}

\newcommand{\simto}{\mathrel{{\sim}{\triangleright}}}

The following definition is obtained by
adding the rule \textbf{(refl)} to the Definition 9 of \cite{WeightDec}.

\begin{definition}
\label{def:new sim}
Let $\mathcal{R}$ be a non-duplicating LR-separated CTRS.
Let $\Gamma$ be a multiset of equations $t' \approx s'$
and a fresh constant $\bullet$.
Then relations $t \underset{\Gamma}{\sim} s$ and $t \underset{\Gamma}{\simto} s$ on terms 
are inductively defined as follows:
\begin{description}
\item[\textbf{(asp)}] $t \underset{\{ t \approx s \}}{\sim} s$.
\item[\textbf{(refl)}] $t \underset{\{   \}}{\sim} t$.
\item[\textbf{(sym)}] If $t \underset{\Gamma}{\sim} s$ then $s \underset{\Gamma}{\sim} t$.
\item[\textbf{(trans)}] If $t \underset{\Gamma}{\sim} r$ and $r \underset{\Gamma'}{\sim} s$ then $t \underset{\Gamma \sqcup \Gamma'}{\sim} s$.
\item[\textbf{(cntxt)}] If $t \underset{\Gamma}{\sim} s$ then $C[t] \underset{\Gamma}{\sim} C[s]$.
\item[\textbf{(rule)}] If $l \to r \Leftarrow x_1 \approx y_1,\ldots,x_n \approx y_n \in \mathcal{R}$ and
$x_1\theta \underset{\Gamma_i}{\sim} y_i\theta$  $(i = 1,\ldots,n)$ then
$C[l\theta] \underset{\Gamma}{\simto} C[r\theta]$ 
where $\Gamma = \Gamma_1 \sqcup \cdots \sqcup \Gamma_n$.
\item[\textbf{(bullet)}] If $t \underset{\Gamma}{\simto} s$ then $t \underset{\Gamma \sqcup \{ \bullet \}}{\sim} s$.
\end{description}
\end{definition}

Note 
$t \underset{\Gamma}{\sim} s$ in the sense of Definition 9 of \cite{WeightDec}
implies 
$t \underset{\Gamma}{\sim} s$ in the sense of Definition \ref{def:new sim}.
On the other hand,
$t \underset{\Gamma}{\sim} s$ in the sense of Definition \ref{def:new sim}
uses \textbf{(refl)} rule in the derivation,
then 
$t \underset{\Gamma}{\sim} s$ in the sense of Definition 9 of \cite{WeightDec} does not hold.

Now, Lemma 3 of \cite{WeightDec} also follows for our Definition
of $\underset{\Gamma}{\sim}$ and $\underset{\Gamma}{\simto}$,
since the claim holds for the \textbf{(refl)} case trivially.

\begin{lemma}[Lemma 3 of \cite{WeightDec}, generalized]
Let $\Gamma = \{ p_1 \approx q_1, \ldots, p_m \approx q_m, \bullet,\ldots,\bullet \}$
be a multiset in which $\bullet$ occurs $k$ times ($k ≥ 0$), and
let $\mathcal{P}_i : p_i\theta \stackrel{*}{\leftrightarrow} q_i \theta$ ($i = 1,\ldots,m$).
(1) If $t \underset{\Gamma}{\sim} s$ 
then there exists a proof $\mathcal{Q}: t\theta \stackrel{*}{\leftrightarrow} s \theta$
with $w(\mathcal{Q}) \le \Sigma_{i=1}^m + k$
(2) If $t \underset{\Gamma}{\simto} s$ 
then there exists a proof $\mathcal{Q}: t\theta \to s \theta$
with $w(\mathcal{Q}) \le \Sigma_{i=1}^m + k + 1$.
\end{lemma}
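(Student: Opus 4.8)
The plan is to prove parts~(1) and (2) \emph{simultaneously} by induction on the structure of the derivation of $t \underset{\Gamma}{\sim} s$, respectively $t \underset{\Gamma}{\simto} s$, in the sense of Definition~\ref{def:new sim}. Since that inference system is exactly Definition~9 of \cite{WeightDec} with the single axiom \textbf{(refl)} added, the case analysis for every rule other than \textbf{(refl)} is the one already carried out in \cite{WeightDec}, and I would reproduce it only to the extent of explaining how the multiset $\Gamma$ — both its equational part and its $\bullet$'s — and the associated weight bound split along the premises. The only genuinely new case is \textbf{(refl)}: there $\Gamma = \emptyset$, so $m = k = 0$, and the empty (zero-step) conversion $\mathcal{Q}\colon t\theta \stackrel{*}{\leftrightarrow} t\theta$ witnesses the claim with $w(\mathcal{Q}) = 0 \le \sum_{i=1}^{0} w(\mathcal{P}_i) + 0$; this is immediate, which is exactly the remark made just before the lemma. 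For the other base case \textbf{(asp)} one has $\Gamma = \{ p_1 \approx q_1 \}$, $t = p_1$, $s = q_1$, $m = 1$, $k = 0$, and one takes $\mathcal{Q} := \mathcal{P}_1$.

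For the inductive rules the arguments are routine. For \textbf{(sym)} one reverses the conversion supplied by the induction hypothesis, leaving the weight unchanged. For \textbf{(trans)} the premises partition $\Gamma$ as $\Gamma' \sqcup \Gamma''$, hence partition both the list of equations and the $\bullet$'s; concatenating the two induction-hypothesis conversions yields $\mathcal{Q}$ whose weight is at most the sum of the two bounds, which is precisely the bound required for the union. For \textbf{(cntxt)} one places the induction-hypothesis proof of $t\theta \stackrel{*}{\leftrightarrow} s\theta$ inside the context $C\theta$; this does not change the number of rewrite steps, so the weight is preserved. For \textbf{(bullet)} one applies part~(2) of the induction hypothesis to the premise $t \underset{\Gamma}{\simto} s$, obtaining a one-step rewrite proof of weight $\le \sum_i w(\mathcal{P}_i) + k + 1$; read as a conversion this is a proof of the kind demanded by part~(1), and since $\Gamma \sqcup \{\bullet\}$ has $k+1$ occurrences of $\bullet$ the allowed bound rises by exactly $1$, so nothing further is needed.

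The heart of the argument is the \textbf{(rule)} case, which also establishes part~(2). Given $l \to r \Leftarrow x_1 \approx y_1, \ldots, x_n \approx y_n \in \mathcal{R}$ with rule instantiation $\sigma$ and premises $x_i\sigma \underset{\Gamma_i}{\sim} y_i\sigma$, one applies part~(1) of the induction hypothesis to each premise with the lemma's substitution, obtaining conversions $\mathcal{Q}_i$ witnessing $x_i\sigma\theta \stackrel{*}{\leftrightarrow} y_i\sigma\theta$ whose weights are bounded by the number of equations plus $\bullet$'s in the corresponding $\Gamma_i$; by the semi-equational interpretation of the condition part these license a single conditional rewrite step $(C[l\sigma])\theta \to (C[r\sigma])\theta$, and the resulting proof $\mathcal{Q}$ has weight at most $1 + \sum_i w(\mathcal{Q}_i)$, which since $\Gamma = \bigsqcup_i \Gamma_i$ is exactly $\sum_i w(\mathcal{P}_i) + k + 1$. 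I expect the main obstacle to be precisely the bookkeeping in this case: tracking the composition of $\sigma$ with the lemma's substitution $\theta$, and verifying the weight inequality $w(\mathcal{Q}) \le 1 + \sum_i w(\mathcal{Q}_i)$, which is where non-duplication of $\mathcal{R}$ is essential — without it, substituting a condition proof into a position copied by $l \to r$ would multiply its weight. Everything else is either literally as in \cite{WeightDec} or, for \textbf{(refl)}, trivial.
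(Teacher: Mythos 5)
Your proof is correct and takes essentially the same route as the paper, which simply observes that the induction establishing Lemma~3 of \cite{WeightDec} carries over verbatim to Definition~\ref{def:new sim}, the only new case being \textbf{(refl)}, which is trivial --- exactly the case analysis you sketch, with the \textbf{(rule)} case contributing the extra $+1$ and \textbf{(bullet)} absorbing it into the extra $\bullet$. One small correction: non-duplication is not what makes the \textbf{(rule)} case work --- the weight of a conditional step counts each condition proof once, regardless of how often variables of $r$ occur --- it is needed for the weight-decreasing joinability theorem built on top of this lemma, not for the lemma itself.
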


Thus, Theorem 1  of \cite{WeightDec} follows for our Definition
of $\underset{\Gamma}{\sim}$ and $\underset{\Gamma}{\simto}$.

\begin{theorem}[Theorem 1  of \cite{WeightDec}, generalized]
Let $\mathcal{R}$ be a semi-equational non-duplicating LR-separated CTRS.
Then $\mathcal{R}$ is weight decreasing joinable if for any 
critical pair $\Gamma \vdash \langle s,t \rangle$ of $\mathcal{R}$,
either
(i) $s \underset{\Sigma}{\sim}  t$ for some $\Sigma \sqsubseteq \Gamma \sqcup \{ \bullet \}$,
(ii) 
$s \underset{\Sigma}{\simto} t$ or 
$t \underset{\Sigma}{\simto} s$ 
for some $\Sigma \sqsubseteq \Gamma \sqcup \{ \bullet \}$, or 
(iii) 
$s \underset{\Sigma_1}{\simto} \circ \underset{\Sigma_2}{\sim} t$
and $t \underset{\Sigma_1'}{\simto} \circ \underset{\Sigma_2'}{\sim} s$
for some $\Sigma_1,\Sigma_2,\Sigma_1',\Sigma_2'$
such that $\Sigma_1 \sqcup \Sigma_2 \sqsubseteq \Gamma \sqcup \{ \bullet \}$
and $\Sigma_1' \sqcup \Sigma_2' \sqsubseteq \Gamma \sqcup \{ \bullet \}$.
\end{theorem}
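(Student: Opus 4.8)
The plan is to follow the proof of Theorem~1 of \cite{WeightDec} essentially verbatim: the new content is only the presence of the \textbf{(refl)} rule in Definition~\ref{def:new sim}, and the preceding discussion already shows this rule is harmless for the weight estimates, so the real work is to check that every step of the original argument survives. I would therefore reuse the weight machinery of \cite{WeightDec} and re-run the critical-pair analysis, verifying the inequalities in the presence of \textbf{(refl)}.

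First I would recall the weight function $w(\cdot)$ on proofs over a non-duplicating LR-separated CTRS: a rewrite step $C[l\theta]\to C[r\theta]$ by $l\to r\Leftarrow x_1\approx y_1,\ldots,x_n\approx y_n$ has weight $1+\sum_i w(\mathcal{Q}_i)$, where $\mathcal{Q}_i$ is the convertibility proof of $x_i\theta\stackrel{*}{\leftrightarrow}y_i\theta$ used, and the weight of a conversion is the sum of the weights of its steps; left-right separation together with non-duplication makes this a well-defined natural number. The system is \emph{weight-decreasing joinable} when every local peak $\mathcal{P}\colon t_1\gets s\to t_2$ admits a valley $\mathcal{Q}\colon t_1\stackrel{*}{\to}\circ\stackrel{*}{\gets}t_2$ with $w(\mathcal{Q})\le w(\mathcal{P})$, with the appropriate strict decrease when $\mathcal{P}$ is not already a valley; establishing this is exactly the conclusion of the theorem (and, via the result underlying Proposition~\ref{prop:weight decreasing}, it further gives confluence and hence UNC of the TRS being linearized).

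Next I would take an arbitrary local peak $t_1\,{}_{p_1}\!\gets s\to_{p_2}t_2$ and split on the relative positions of $p_1,p_2$. If $p_1\parallel p_2$ the two steps commute directly. If one redex is nested inside a substitution part of the other, left-linearity of $\mathcal{R}$ lets the outer step be applied to the already-rewritten term, non-duplication guarantees the copies or erasures of the inner redex do not increase the weight, and the instantiated conditions are merely carried along as unchanged subproofs. The remaining case is a genuine overlap, which after passing to the most general instance is an instance of some critical pair $\Gamma\Rightarrow\langle s,t\rangle$; here I invoke the generalized Lemma~3, which turns $s\underset{\Sigma}{\sim}t$ with $\Sigma\sqsubseteq\Gamma\sqcup\{\bullet\}$ into a conversion $s\theta\stackrel{*}{\leftrightarrow}t\theta$ of weight at most (the weight contributed by the $\Gamma\theta$-proofs) plus (the number of $\bullet$'s in $\Sigma$), and $s\underset{\Sigma}{\simto}t$ into a single reduction of weight at most that bound $+1$. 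Since the peak already pays, for each of its two contracted redexes, one unit plus the weight of the corresponding conditions, conditions (i)--(iii) are precisely the three admissible shapes of joining proof — a conversion within the $\bullet$-budget, a single reduction in one direction, or a reduction-then-conversion in both directions — whose weight is covered by $w(\mathcal{P})$, and the case analysis closes.

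The only genuinely new point relative to \cite{WeightDec} is the \textbf{(refl)} rule, and for it the generalized Lemma~3 holds trivially: such a derivation contributes $\Gamma=\{\,\}$ and the reflexivity proof of weight $0$, so the stated bound is met vacuously and every subsequent estimate is unchanged. I expect the main obstacle to be purely bookkeeping in the overlap case: one must match the $\bullet$-budget recorded in $\Sigma$ exactly against the rewrite steps actually performed in the peak (one $\bullet$ per contracted redex) and thread the weights of the semi-equationally evaluated conditions — which are themselves weighted proofs, possibly of arbitrary rank — through the inequality, so that $w(\mathcal{Q})\le w(\mathcal{P})$ comes out tight rather than merely true up to a slack term.
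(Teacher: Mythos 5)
Your proposal is correct and takes essentially the same route as the paper: the paper likewise does not redo the peak analysis but reduces the statement to the original Theorem~1 argument of \cite{WeightDec}, observing that the only new ingredient is the \textbf{(refl)} rule and that the generalized Lemma~3 (whose \textbf{(refl)} case is trivial, contributing an empty assumption multiset and a weight-$0$ conversion) makes all the weight estimates carry over unchanged.
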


Below, we abbreviate $\{ \overbrace{\bullet, \ldots, \bullet}^{\text{$k$-times}} \}$ as $\{  \bullet^k \}$.

\begin{lemma}
\label{lem:correspondence I}
Let $\Lambda$ be a multiset of equations.
(i) 
If $\Lambda \Vdash_{\mathcal{R}} u \sim_k v$ 
then $u \underset{\Delta}{\sim} v$ for some
$\Delta = \Lambda' \sqcup \{ \bullet^k  \}$ 
such that $\Lambda' \sqsubseteq \Lambda$.
(ii) 
If $\Lambda \Vdash_{\mathcal{R}} u \to_k v$ 
then $u \underset{\Delta}{\simto} v$ for some 
$\Delta = \Lambda' \sqcup \{ \bullet^{k-1} \}$ 
such that $\Lambda' \sqsubseteq \Lambda$.
(iii)
If $\Lambda \Vdash_{\mathcal{R}} \langle u_1,\ldots,u_n \rangle  \sim_k \langle v_1, \ldots,v_n \rangle$
then 
$u_j \underset{\Delta_j}{\sim} v_j$ ($j = 1,\ldots,n$)
for some $\Delta_1,\ldots,\Delta_n$
such that $\bigsqcup_j \Delta_j = \Lambda' \sqcup \{ \bullet^k \}$
for some $\Lambda' \sqsubseteq \Lambda$.
\end{lemma}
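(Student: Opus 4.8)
The plan is to prove the three statements (i), (ii), (iii) simultaneously by induction on the height of the derivation witnessing the $\Vdash_{\mathcal{R}}$-judgement, with a case analysis on the last inference rule of Figure~\ref{fig:Inference rules} applied. A simultaneous induction is forced because the three parts call one another: the inclusion rule (deriving $\sim_i$ from $\to_i$) appeals to part (ii), the rule-application rule appeals to part (iii), and the tuple rule appeals to part (i). So the induction is on derivations, and within each part we translate each $\Vdash_{\mathcal{R}}$-rule into a rule, or a short combination of rules, of Definition~\ref{def:new sim}.

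For part (i): the assumption rule $\Gamma \sqcup \{u \approx v\} \Vdash_{\mathcal{R}} u \sim_0 v$ is matched by \textbf{(asp)} with $\Lambda' = \{u \approx v\} \sqsubseteq \Lambda$ and $k=0$; the reflexivity rule by \textbf{(refl)} with $\Lambda' = \{\}$; symmetry, transitivity and congruence by \textbf{(sym)}, \textbf{(trans)}, \textbf{(cntxt)}, carrying the multisets obtained from the induction hypothesis through unchanged. For \textbf{(trans)} one uses $(\Gamma' \sqcup \{\bullet^i\}) \sqcup (\Sigma' \sqcup \{\bullet^j\}) = (\Gamma' \sqcup \Sigma') \sqcup \{\bullet^{i+j}\}$, which is exactly the index $i+j$ appearing in the conclusion, with $\Gamma' \sqcup \Sigma' \sqsubseteq \Gamma \sqcup \Sigma$. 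The inclusion rule $\Gamma \Vdash_{\mathcal{R}} s \sim_i t$ from $\Gamma \Vdash_{\mathcal{R}} s \to_i t$ is handled by applying the induction hypothesis (part (ii)) to get $s \underset{\Gamma' \sqcup \{\bullet^{i-1}\}}{\simto} t$ and then \textbf{(bullet)}, yielding $s \underset{\Gamma' \sqcup \{\bullet^{i}\}}{\sim} t$; here we use that the index of a $\to$-judgement is always at least $1$, since $\to_{i+1}$ is produced only by the rule-application rule. Part (ii) has the single case of the rule-application rule: from $\Gamma \Vdash_{\mathcal{R}} \langle u_1\sigma,\ldots,u_n\sigma \rangle \sim_i \langle v_1\sigma,\ldots,v_n\sigma \rangle$ the induction hypothesis (part (iii)) gives $u_j\sigma \underset{\Delta_j}{\sim} v_j\sigma$ with $\bigsqcup_j \Delta_j = \Lambda' \sqcup \{\bullet^i\}$ and $\Lambda' \sqsubseteq \Gamma$, and then \textbf{(rule)} yields $C[l\sigma] \underset{\Lambda' \sqcup \{\bullet^i\}}{\simto} C[r\sigma]$, which is the required form for index $i+1$ since $i = (i+1)-1$. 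Part (iii) is handled by the tuple rule (apply part (i) componentwise and take the disjoint union, noting $\bigsqcup_j (\Gamma_j' \sqcup \{\bullet^{i_j}\}) = (\bigsqcup_j \Gamma_j') \sqcup \{\bullet^{\sum_j i_j}\}$), together with symmetry and transitivity of tuple-judgements, dispatched exactly as in part (i), and the degenerate reflexivity case $\langle \vec u \rangle \sim_0 \langle \vec u \rangle$, which uses \textbf{(refl)} componentwise with all $\Delta_j = \{\}$.

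There is no genuinely hard step here; the only thing that requires care is the accounting of the multiset of $\bullet$'s. The guiding invariant is that each application of the rule-application rule of Figure~\ref{fig:Inference rules}, which increments the natural-number index by one, is mirrored by exactly one application of \textbf{(bullet)}, which adds exactly one $\bullet$; and the relaxation $\Lambda' \sqsubseteq \Lambda$ (rather than equality) absorbs the built-in weakening of the assumption and reflexivity rules of the $\Vdash_{\mathcal{R}}$-calculus. Checking that $\sqcup$ commutes with the displayed decompositions of the multisets is routine. Once this correspondence is in place, combined with the converse translation and the weight estimates of the generalized Lemma~3 of \cite{WeightDec}, the sufficient conditions of Theorem~\ref{thm:I} are transferred to the effectively checkable formulation of Definition~\ref{def:derivation of sim}, which is the purpose of introducing the $\Vdash_{\mathcal{R}}$-calculus.
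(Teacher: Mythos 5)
Your proposal is correct and matches the paper's own argument: a simultaneous induction on the derivation of the $\Vdash_{\mathcal{R}}$-judgement, with each inference rule of Figure~\ref{fig:Inference rules} translated into the corresponding rule(s) of Definition~\ref{def:new sim} (\textbf{(asp)}, \textbf{(refl)}, \textbf{(sym)}, \textbf{(trans)}, \textbf{(cntxt)}, the tuple combination, \textbf{(bullet)} for the inclusion case, and \textbf{(rule)} for the rule-application case), together with the same multiset bookkeeping for the $\bullet$'s. The extra remarks (the $\to$-index being at least $1$, and the componentwise handling of tuple judgements) are consistent with, and in places slightly more explicit than, the paper's proof.
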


\begin{proof}
The proofs of (i)--(iii) 
proceed by induction on the derivation
simultaneously.
\end{proof}

\newcommand{\eq}{\textit{eq}}

For any multiset $\Delta$ of equations and $\bullet$,
let $\Delta^\bullet$ be the multiset of $\bullet$
obtained from $\Delta$ by removing all equations,
and $\Delta^\eq$ be the multiset of equations
obtained from $\Delta$ by removing all $\bullet$.
Furthermore, we denote $\vert \Delta \vert$ the length of $\Delta$.

\begin{lemma}
\label{lem:correspondence II}
Let $\Delta$ be a multiset of equations and $\bullet$.
(i) 
If $u \underset{\Delta}{\sim} v$
then 
$\Lambda \Vdash_{\mathcal{R}} u \sim_k v$
for any $\Lambda \sqsupseteq \Delta^\eq$,
where $k = \vert \Delta^\bullet \vert$.
(ii) 
If $u \underset{\Delta}{\simto} v$ 
then 
$\Lambda \Vdash_{\mathcal{R}} u \sim_k v$
for any $\Lambda \sqsupseteq \Delta^\eq$,
where $k = \vert \Delta^\bullet \vert + 1$
(iii)
If $u_j \underset{\Delta_j}{\sim} v_j$ ($j = 1,\ldots,n$),
then 
$\Lambda \Vdash_{\mathcal{R}} \langle u_1,\ldots,u_n \rangle  \sim_k \langle v_1, \ldots,v_n \rangle$
for any $\Lambda \sqsupseteq \bigsqcup_j \Delta_j^\eq$,
where $k = \vert \bigsqcup_j \Delta_j^\bullet \vert$.
\end{lemma}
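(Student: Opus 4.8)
The plan is to prove (i), (ii) and (iii) together, with (i) and (ii) established by a simultaneous induction on the derivation of $u \underset{\Delta}{\sim} v$ (resp.\ $u \underset{\Delta}{\simto} v$) in the sense of Definition~\ref{def:new sim}, and (iii) obtained as an immediate consequence of (i). The mutual recursion is unavoidable, since rule \textbf{(bullet)} turns a $\simto$-derivation into a $\sim$-derivation while rule \textbf{(rule)} builds a $\simto$-derivation out of several $\sim$-derivations; the index and context bookkeeping therefore has to be threaded through both directions at once. Throughout I will use that the system of Figure~\ref{fig:Inference rules} admits weakening of the context: the assumption and reflexivity axioms already allow an arbitrary extra multiset of equations, and every other rule either preserves the context or forms the multiset union of the contexts of its premises, so enlarging $\Lambda$ never destroys a derivation. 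This is exactly why each clause is stated ``for any $\Lambda \sqsupseteq \Delta^\eq$''.

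For (i) I would run through the six rules of Definition~\ref{def:new sim} that can conclude a $\sim$-judgment. The axioms \textbf{(asp)} and \textbf{(refl)} (where $\vert\Delta^\bullet\vert = 0$) are matched directly by the assumption and reflexivity axioms of Figure~\ref{fig:Inference rules}. The cases \textbf{(sym)} and \textbf{(cntxt)} are handled by applying, respectively, the symmetry and context rules of Figure~\ref{fig:Inference rules} to the induction hypothesis, leaving both the index and $\Delta^\eq$ unchanged. For \textbf{(trans)}, with $\Delta = \Gamma \sqcup \Gamma'$ and a given $\Lambda \sqsupseteq \Gamma^\eq \sqcup \Gamma'^\eq$, I would write $\Lambda = \Gamma^\eq \sqcup \Gamma'^\eq \sqcup \Theta$, put $\Lambda_1 := \Gamma^\eq \sqcup \Theta$ and $\Lambda_2 := \Gamma'^\eq$, apply the induction hypothesis to the two premises with these contexts, and conclude with the transitivity rule, whose conclusion carries exactly $\Lambda_1 \sqcup \Lambda_2 = \Lambda$ and index $\vert\Gamma^\bullet\vert + \vert\Gamma'^\bullet\vert = \vert\Delta^\bullet\vert$. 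Finally \textbf{(bullet)}, where $\Delta = \Gamma \sqcup \{\bullet\}$ so $\Delta^\eq = \Gamma^\eq$ and $\vert\Delta^\bullet\vert = \vert\Gamma^\bullet\vert + 1$, is closed by invoking clause (ii) of the induction hypothesis on the premise $u \underset{\Gamma}{\simto} v$, which already delivers $\Lambda \Vdash_{\mathcal{R}} u \sim_{\vert\Gamma^\bullet\vert + 1} v$.

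For (ii) the only rule is \textbf{(rule)}: $C[l\theta] \underset{\Delta}{\simto} C[r\theta]$ with $l \to r \Leftarrow x_1 \approx y_1,\ldots,x_n \approx y_n \in \mathcal{R}$, premises $x_i\theta \underset{\Gamma_i}{\sim} y_i\theta$, and $\Delta = \bigsqcup_i \Gamma_i$. Given $\Lambda \sqsupseteq \Delta^\eq = \bigsqcup_i \Gamma_i^\eq$, I would first derive the tuple judgment $\Lambda \Vdash_{\mathcal{R}} \langle x_1\theta,\ldots,x_n\theta\rangle \sim_{k'} \langle y_1\theta,\ldots,y_n\theta\rangle$ with $k' = \sum_i \vert\Gamma_i^\bullet\vert = \vert\Delta^\bullet\vert$ — this is what clause (iii) gives from the $n$ premises, with all the slack $\Lambda \setminus \bigsqcup_i \Gamma_i^\eq$ dumped onto the first component. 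The rewrite-rule inference of Figure~\ref{fig:Inference rules} then yields $\Lambda \Vdash_{\mathcal{R}} C[l\theta] \to_{k'+1} C[r\theta]$, and the rule turning $\to_k$ into $\sim_k$ gives $\Lambda \Vdash_{\mathcal{R}} C[l\theta] \sim_{k} C[r\theta]$ with $k = \vert\Delta^\bullet\vert + 1$, as required. Clause (iii) is not a separate induction: from the $n$ derivations $u_j \underset{\Delta_j}{\sim} v_j$, apply (i) to each with contexts $\Gamma_j$ chosen so that $\bigsqcup_j \Gamma_j = \Lambda$ and $\Gamma_j \sqsupseteq \Delta_j^\eq$ (again all slack into $\Gamma_1$), obtaining $\Gamma_j \Vdash_{\mathcal{R}} u_j \sim_{\vert\Delta_j^\bullet\vert} v_j$, and then apply the tuple rule, whose index is $\sum_j \vert\Delta_j^\bullet\vert = \big\vert\bigsqcup_j \Delta_j^\bullet\big\vert$.

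The only real friction I anticipate is the multiset accounting. Several $\Vdash$-rules — transitivity, the tuple rule, and implicitly the rewrite rule through its tuple premise — combine the contexts of their premises by multiset union rather than by mere inclusion, so at each such step the equations of $\Lambda$ beyond $\Delta^\eq$ must be apportioned to exactly one premise, and one must check both that $\sqcup$ of the chosen premise-contexts reproduces $\Lambda$ on the nose and that the $\bullet$-counts add up to the claimed $k$. None of this is deep, but it is the part that has to be written carefully; the rest is a one-for-one translation of the rules of Definition~\ref{def:new sim} into those of Figure~\ref{fig:Inference rules}.
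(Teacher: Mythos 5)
Your proposal is correct and follows essentially the same route as the paper: a simultaneous induction on the derivation in Definition~\ref{def:new sim}, translating each rule into the corresponding inference of Figure~\ref{fig:Inference rules} and apportioning the slack of $\Lambda$ among the premises wherever a rule takes the multiset union of contexts, with (iii) obtained by feeding the componentwise instances of (i) into the tuple rule exactly as the paper does inside its \textbf{(rule)} case. The index bookkeeping ($\bullet$-counts summing under \textbf{(trans)}, \textbf{(rule)} adding one, \textbf{(bullet)} absorbed by clause (ii)) matches the paper's argument.
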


\begin{proof}
The proofs of (i)--(iii) 
proceed by induction on the derivation
simultaneously.
\end{proof}

\begin{lemma}
\label{lem:correspondence III}
Let $\Gamma$ be a multiset of equations.
(i) $s \underset{\Sigma}{\sim}  t$ for some $\Sigma \sqsubseteq \Gamma \sqcup \{ \bullet \}$
iff 
$\Gamma \Vdash_{\mathcal{R}} s \sim_{\le 1} t$.
(ii) 
$s \underset{\Sigma}{\simto} t$ for some $\Sigma \sqsubseteq \Gamma \sqcup \{ \bullet \}$
iff 
$\Gamma \Vdash_{\mathcal{R}} s  \to_1 t$ or $\Gamma \Vdash_{\mathcal{R}} s  \to_2 t$.
(iii) 
$s \underset{\Sigma_1}{\simto} \circ \underset{\Sigma_2}{\sim} t$
for some $\Sigma_1,\Sigma_2$ such that $\Sigma_1 \sqcup \Sigma_2 \sqsubseteq \Gamma \sqcup \{ \bullet \}$
iff 
$\Gamma \Vdash_{\mathcal{R}} s  \to_i \circ \sim_j t$ with $i + j \le 2$.
\end{lemma}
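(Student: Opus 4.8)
The plan is to obtain Lemma~\ref{lem:correspondence III} as a bookkeeping corollary of the two translation lemmas: Lemma~\ref{lem:correspondence I} (from the $\Vdash$-calculus of Figure~\ref{fig:Inference rules} to the $\underset{\cdot}{\sim}/\underset{\cdot}{\simto}$-calculus) and Lemma~\ref{lem:correspondence II} (the reverse direction). The only genuine content is the dictionary between the number of occurrences of $\bullet$ in a multiset and the rank index, together with the fact that a single rewrite step carries an extra $+1$: a $\simto$-step with multiset $\Sigma$ corresponds to $\to$-rank $|\Sigma^\bullet|+1$, while a pure $\sim$-derivation with multiset $\Sigma$ corresponds to $\sim$-rank $|\Sigma^\bullet|$. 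Since $\Sigma\sqsubseteq\Gamma\sqcup\{\bullet\}$ is the same as $\Sigma^\eq\sqsubseteq\Gamma$ together with $|\Sigma^\bullet|\le 1$, this is exactly what converts the bounds ``$\le 1$'', ``$\{1,2\}$'' and ``$i+j\le 2$'' in the three items into a bullet count $\le 1$.

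For the ``if'' directions I would run Lemma~\ref{lem:correspondence I} backwards. From $\Gamma\Vdash_{\mathcal{R}}s\sim_k t$ with $k\le 1$, Lemma~\ref{lem:correspondence I}(i) yields $s\underset{\Delta}{\sim}t$ with $\Delta=\Lambda'\sqcup\{\bullet^{k}\}$ and $\Lambda'\sqsubseteq\Gamma$; as $k\le 1$ this $\Delta$ satisfies $\Delta\sqsubseteq\Gamma\sqcup\{\bullet\}$, proving (i). Item (ii) is identical using Lemma~\ref{lem:correspondence I}(ii): a rank-$k$ rewrite step with $k\in\{1,2\}$ gives $s\underset{\Delta}{\simto}t$ with $\Delta=\Lambda'\sqcup\{\bullet^{k-1}\}$ and $k-1\le 1$. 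For item (iii) I would read $\Gamma\Vdash_{\mathcal{R}}s\to_i\circ\sim_j t$, in accordance with the composition rule of Figure~\ref{fig:Inference rules}, as a splitting $\Gamma=\Gamma_1\sqcup\Gamma_2$ with $\Gamma_1\Vdash_{\mathcal{R}}s\to_i w$ and $\Gamma_2\Vdash_{\mathcal{R}}w\sim_j t$; Lemma~\ref{lem:correspondence I}(ii),(i) then give $s\underset{\Delta_1}{\simto}w$ and $w\underset{\Delta_2}{\sim}t$ with $\Delta_1=\Lambda_1'\sqcup\{\bullet^{i-1}\}$, $\Delta_2=\Lambda_2'\sqcup\{\bullet^{j}\}$, $\Lambda_1'\sqsubseteq\Gamma_1$, $\Lambda_2'\sqsubseteq\Gamma_2$, and since every $\to$-step has rank $i\ge 1$ and $i+j\le 2$ we get $\Delta_1\sqcup\Delta_2=(\Lambda_1'\sqcup\Lambda_2')\sqcup\{\bullet^{(i+j)-1}\}\sqsubseteq\Gamma\sqcup\{\bullet\}$.

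For the ``only if'' directions I would invoke Lemma~\ref{lem:correspondence II}. Item (i) is immediate: from $s\underset{\Sigma}{\sim}t$ with $\Sigma\sqsubseteq\Gamma\sqcup\{\bullet\}$, Lemma~\ref{lem:correspondence II}(i) with $\Lambda:=\Gamma$ (noting $\Sigma^\eq\sqsubseteq\Gamma$) gives $\Gamma\Vdash_{\mathcal{R}}s\sim_k t$ with $k=|\Sigma^\bullet|\le 1$. For (ii) I would not route through the $\simto$-clause of Lemma~\ref{lem:correspondence II} but instead unfold the unique derivation rule producing $s\underset{\Sigma}{\simto}t$, namely $s=C[l\theta]$, $t=C[r\theta]$ for some $l\to r\Leftarrow x_1\approx y_1,\ldots,x_n\approx y_n\in\mathcal{R}$ with $x_i\theta\underset{\Sigma_i}{\sim}y_i\theta$ and $\Sigma=\bigsqcup_i\Sigma_i$; Lemma~\ref{lem:correspondence II}(iii) then yields $\Gamma\Vdash_{\mathcal{R}}\langle x_1\theta,\ldots,x_n\theta\rangle\sim_k\langle y_1\theta,\ldots,y_n\theta\rangle$ with $k=|\Sigma^\bullet|\le 1$, and the rewrite-rule rule of Figure~\ref{fig:Inference rules} upgrades this to $\Gamma\Vdash_{\mathcal{R}}s\to_{k+1}t$ with $k+1\in\{1,2\}$. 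Item (iii) combines the two: the $\simto$-half gives $s\to_{|\Sigma_1^\bullet|+1}w$ and the $\sim$-half gives $w\sim_{|\Sigma_2^\bullet|}t$, and since $(\Sigma_1\sqcup\Sigma_2)^\eq\sqsubseteq\Gamma$ one splits $\Gamma$ as $\Sigma_1^\eq\sqcup(\Gamma\setminus\Sigma_1^\eq)$, with $\Gamma\setminus\Sigma_1^\eq\sqsupseteq\Sigma_2^\eq$, so that the two halves use disjoint parts and $i+j=(|\Sigma_1^\bullet|+|\Sigma_2^\bullet|)+1\le 2$.

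I expect the only delicate point to be the multiset accounting in item (iii): one must subtract the rewrite-step $+1$ when returning to a bullet count (so $\{\bullet^{i-1}\}$, not $\{\bullet^{i}\}$), and the equation parts of the two halves must be held in disjoint sub-multisets of $\Gamma$ — exploiting the splitting built into the composition rule — so that their union stays $\sqsubseteq\Gamma$ rather than merely being two sub-multisets of $\Gamma$ that might double-count a shared equation. Everything else is routine manipulation of $\sqcup$, $\sqsubseteq$, $(\cdot)^\eq$ and $(\cdot)^\bullet$.
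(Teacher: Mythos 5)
Your proposal is correct and follows essentially the same route as the paper: both directions are read off from Lemmas~\ref{lem:correspondence I} and \ref{lem:correspondence II} by matching the number of occurrences of $\bullet$ in $\Sigma$ with the rank index (plus one for a rewrite step) and, for item (iii), splitting $\Gamma$ along the composition rule of Figure~\ref{fig:Inference rules}. The only cosmetic difference is that in item (ii) you inline the rule case of Lemma~\ref{lem:correspondence II} rather than citing its second clause directly, which changes nothing of substance.
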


\begin{proof}
(i) ($\Rightarrow$)
Suppose
$s \underset{\Sigma}{\sim}  t$ for some $\Sigma \sqsubseteq \Gamma \sqcup \{ \bullet \}$.
Then by Lemma \ref{lem:correspondence II},
$\Lambda \Vdash_{\mathcal{R}} s \sim_k t$
for any $\Lambda \sqsupseteq \Sigma^\eq$,
where $k = \vert \Sigma^\bullet \vert$.
If $\Sigma \sqsubseteq \Gamma$
then $\bullet \notin \Sigma$,
and hence, $\Lambda \Vdash_{\mathcal{R}} s \sim_0 t$ for any $\Lambda \sqsupseteq \Sigma^\eq = \Sigma$,
as $k = \vert \Sigma^\bullet \vert = 0$.
Thus, 
$\Lambda \Vdash_{\mathcal{R}} s \sim_0 t$ for any $\Lambda \sqsupseteq \Sigma$.
Hence $\Gamma \Vdash_{\mathcal{R}} s \sim_0 t$.
Otherwise, we have $\bullet \in \Sigma$,
and hence, $\Sigma = \Sigma' \sqcup \{ \bullet \}$ for some $\Sigma' \sqsubseteq \Gamma$.
Then, 
$\Lambda \Vdash_{\mathcal{R}} s \sim_1 t$ for any $\Lambda \sqsupseteq \Sigma^\eq = \Sigma'$,
as $k = \vert \Sigma^\bullet \vert = 1$.
Thus, $\Gamma \Vdash_{\mathcal{R}} s \sim_1 t$.
Therefore, $\Gamma \Vdash_{\mathcal{R}} s \sim_{\le 1} t$ holds.
($\Leftarrow$)
Firstly, suppose $\Gamma \Vdash_{\mathcal{R}} s \sim_{0} t$.
Then, by Lemma \ref{lem:correspondence I},
$s \underset{\Sigma}{\sim} t$ for some
$\Sigma = \Gamma' \sqcup \{ \bullet^0  \}$ 
such that $\Gamma' \sqsubseteq \Gamma$,
i.e.\ 
$s \underset{\Sigma}{\sim} t$ for some $\Sigma \sqsubseteq \Gamma$.
Next, suppose $\Gamma \Vdash_{\mathcal{R}} s \sim_{1} t$.
Then, by Lemma \ref{lem:correspondence I},
$s \underset{\Sigma}{\sim} t$ for some
$\Sigma = \Gamma' \sqcup \{ \bullet^1  \}$ 
such that $\Gamma' \sqsubseteq \Gamma$,
i.e.\ 
$s \underset{\Sigma}{\sim} t$ for some $\Sigma \sqsubseteq \Gamma \sqcup \{ \bullet  \}$.
Thus, the claim holds.

\noindent
(ii) ($\Rightarrow$)
Suppose
$s \underset{\Sigma}{\simto}  t$ for some $\Sigma \sqsubseteq \Gamma \sqcup \{ \bullet \}$.
Then by Lemma \ref{lem:correspondence II},
$\Lambda \Vdash_{\mathcal{R}} s \to_k t$
for any $\Lambda \sqsupseteq \Sigma^\eq$,
where $k = \vert \Sigma^\bullet \vert + 1$.
If $\Sigma \sqsubseteq \Gamma$
then $\bullet \notin \Sigma$,
and hence, $\Lambda \Vdash_{\mathcal{R}} s \to_1 t$ for any $\Lambda \sqsupseteq \Sigma^\eq = \Sigma$,
as $k = \vert \Sigma^\bullet \vert + 1 = 1$.
Thus, 
$\Lambda \Vdash_{\mathcal{R}} s \to_1 t$ for any $\Lambda \sqsupseteq \Sigma$.
Hence $\Gamma \Vdash_{\mathcal{R}} s \to_1 t$.
Otherwise, we have $\bullet \in \Sigma$,
and hence, $\Sigma = \Sigma' \sqcup \{ \bullet \}$ for some $\Sigma' \sqsubseteq \Gamma$.
Then, 
$\Lambda \Vdash_{\mathcal{R}} s \to_2 t$ for any $\Lambda \sqsupseteq \Sigma^\eq = \Sigma'$,
as $k = \vert \Sigma^\bullet \vert + 1 = 2$.
Thus, $\Gamma \Vdash_{\mathcal{R}} s \to_2 t$.
Therefore, $\Gamma \Vdash_{\mathcal{R}} s \to_{1} t$
or $\Gamma \Vdash_{\mathcal{R}} s \to_{2} t$ holds.
($\Leftarrow$)
Firstly, suppose $\Gamma \Vdash_{\mathcal{R}} s \to_{1} t$.
Then, by Lemma \ref{lem:correspondence I},
$s \underset{\Sigma}{\simto} t$ for some
$\Sigma = \Gamma' \sqcup \{ \bullet^0  \}$ 
such that $\Gamma' \sqsubseteq \Gamma$,
i.e.\ 
$s \underset{\Sigma}{\simto} t$ for some $\Sigma \sqsubseteq \Gamma$.
Next, suppose $\Gamma \Vdash_{\mathcal{R}} s \to_{2} t$.
Then, by Lemma \ref{lem:correspondence I},
$s \underset{\Sigma}{\simto} t$ for some
$\Sigma = \Gamma' \sqcup \{ \bullet^1  \}$ 
such that $\Gamma' \sqsubseteq \Gamma$,
i.e.\ 
$s \underset{\Sigma}{\simto} t$ for some $\Sigma \sqsubseteq \Gamma \sqcup \{ \bullet  \}$.
Thus, the claim holds.

\noindent
(iii) ($\Rightarrow$)
Suppose
$s \underset{\Sigma_1}{\simto} \circ \underset{\Sigma_2}{\sim} t$
for some $\Sigma_1,\Sigma_2$ such that $\Sigma_1 \sqcup \Sigma_2 \sqsubseteq \Gamma \sqcup \{ \bullet \}$.
Firstly, if $\Sigma_1 \sqcup \Sigma_2 \sqsubseteq \Gamma$,
then,  as in the proof of (i) and (ii),
it follows $\Gamma \Vdash_{\mathcal{R}} s  \to_1 \circ \sim_0 t$.
Secondly, if $\bullet \in \Sigma_1$,
then as in the proof of (i) and (ii),
it follows $\Gamma \Vdash_{\mathcal{R}} s  \to_2 \circ \sim_0 t$.
Finally, if $\bullet \in \Sigma_2$,
then as in the proof of (i) and (ii),
it follows $\Gamma \Vdash_{\mathcal{R}} s  \to_1 \circ \sim_1 t$.
Thus, in any case, 
$\Gamma \Vdash_{\mathcal{R}} s  \to_i \circ \sim_j t$ with $i + j \le 2$.
($\Leftarrow$)
Suppose $\Gamma \Vdash_{\mathcal{R}} s  \to_i \circ \sim_j t$ with $i + j \le 2$.
Then we have cases 
(a) $\Gamma \Vdash_{\mathcal{R}} s  \to_1 u \sim_0 t$,
(b) $\Gamma \Vdash_{\mathcal{R}} s  \to_1 u  \sim_1 t$, and
(c) $\Gamma \Vdash_{\mathcal{R}} s  \to_2 u  \sim_0 t$.
In case (a),
there exist $\Gamma_1, \Gamma_2$ such that 
$\Gamma = \Gamma_1 \sqcup \Gamma_2$,
$\Gamma_1 \Vdash_{\mathcal{R}} s  \to_1 u$ and
$\Gamma_2 \Vdash_{\mathcal{R}} u  \sim_0 t$.
Then,  as in the proof of (i) and (ii),
$s \underset{\Sigma_1}{\simto} u$ for some for some $\Sigma_1 \sqsubseteq \Gamma_1$
and
$u \underset{\Sigma_2}{\sim} t$ for some for some $\Sigma_2 \sqsubseteq \Gamma_2$.
In case (b), similarly, we have 
$s \underset{\Sigma_1}{\simto} u$ for some for some $\Sigma_1 \sqsubseteq \Gamma_1$
and
$u \underset{\Sigma_2}{\sim} t$ for some for some $\Sigma_2 \sqsubseteq \Gamma_2  \sqcup \{ \bullet \}$.
In case (c), similarly, we have 
$s \underset{\Sigma_1}{\simto} u$ for some for some $\Sigma_1 \sqsubseteq \Gamma_1 \sqcup \{ \bullet \}$.
and
$u \underset{\Sigma_2}{\sim} t$ for some for some $\Sigma_2 \sqsubseteq \Gamma_2$.
Thus, the claim holds.
\end{proof}

\begin{proof}[Proof of Theorem \ref{thm:I}]
It follows immediately from Lemma~\ref{lem:correspondence III},
by noting 
$\Gamma \Vdash_{\mathcal{R}} s \to_{1} t$
implies $\Gamma \Vdash_{\mathcal{R}} s \sim_{1} t$.
\end{proof}

\newpage

\section{Some detailed proofs}

\begin{proof}[Proof of Lemma~\ref{lem:correspondence I}]
We prove (i)--(iii) simultaneously by induction on the derivation.
\begin{enumerate}
\item Case $\Gamma \sqcup \{ u  \approx v \} \Vdash_{\mathcal{R}} u \sim_0 v$.
The claim holds since $u \underset{\{ u \approx v \}}{\sim} v$ by \textbf{asp}.

\item Case $\Gamma \Vdash_{\mathcal{R}} t \sim_0 t$
The claim holds since $t \underset{\{  \}}{\sim} t$ by \textbf{refl}.

\item Case $\Gamma \Vdash_{\mathcal{R}} s \sim_i t $ is derived from $\Gamma \Vdash_{\mathcal{R}} t \sim_i s$.
By induction hypothesis, 
$t  \underset{\Delta}{\sim} s$ for some 
$\Delta = \Gamma' \sqcup \{ \bullet^i \}$ 
such that $\Gamma' \sqsubseteq \Gamma$.
Then $s  \underset{\Delta}{\sim} t$ by \textbf{sym}, and the claim holds.

\item Case $\Gamma \sqcup \Sigma \Vdash_{\mathcal{R}} s \sim_{i+j} u$
is derived from $\Gamma \Vdash_{\mathcal{R}} s \sim_i t$ and $\Sigma \Vdash_{\mathcal{R}} t \sim_j u$
By induction hypothesis, 
$s  \underset{\Delta_1}{\sim} t$ for some 
$\Delta_1 = \Gamma' \sqcup \{ \bullet^i \}$ 
such that $\Gamma' \sqsubseteq \Gamma$,
and 
$t  \underset{\Delta_2}{\sim} u$ for some 
$\Delta_2 = \Sigma' \sqcup \{ \bullet^j \}$ 
such that $\Sigma' \sqsubseteq \Sigma$.
Take $\Delta = \Delta_1 \sqcup \Delta_2$.
Then $s  \underset{\Delta}{\sim} u$ by \textbf{trans}.
Furthermore, 
$\Delta = \Delta_1 \sqcup \Delta_2
= \Gamma' \sqcup \{ \bullet^i \}
\sqcup \Sigma' \sqcup \{ \bullet^j \}
= \Gamma' \sqcup \Sigma' \sqcup \{ \bullet^{i+j} \}$,
and $\Gamma' \sqcup \Sigma' \sqsubseteq \Gamma \sqcup  \Sigma$.
Hence the claim holds.

\item Case $\Gamma \Vdash_{\mathcal{R}} C[s] \sim_i C[t]$ is derived from $\Gamma \Vdash_{\mathcal{R}} s \sim_i t$
By induction hypothesis, 
$s  \underset{\Delta}{\sim} t$ for some 
$\Delta = \Gamma' \sqcup \{ \bullet^i \}$ 
such that $\Gamma' \sqsubseteq \Gamma$.
Then $C[s]  \underset{\Delta}{\sim} C[t]$ by \textbf{cntxt}, and the claim holds.

\item Case 
$\bigsqcup_j \Gamma_j \Vdash_{\mathcal{R}} \langle u_1,\ldots,u_n \rangle  \sim_k \langle v_1, \ldots,v_n \rangle$
is derived from 
$\Gamma_1 \Vdash_{\mathcal{R}} u_1 \sim_{i_1} v_1, \ldots, \Gamma_n \Vdash_{\mathcal{R}} u_n \sim_{i_n} v_n$
where $k = \sum_j i_j$.
By induction hypothesis,  for each $j = 1,\ldots,n$,
$u_j  \underset{\Delta_j}{\sim} v_j$ for some 
$\Delta_j = \Gamma_j' \sqcup \{ \bullet^{i_j} \}$ 
such that $\Gamma_j' \sqsubseteq \Gamma_j$.
Since $\bigsqcup_j \Delta_j = 
\bigsqcup_j \Gamma_j' \sqcup \{ \bullet^k \}$
and $\bigsqcup_j \Gamma_j' \sqsubseteq \bigsqcup_j \Gamma_j$,
the claim holds.

\item Case 
$\Gamma \Vdash_{\mathcal{R}} s \sim_i t$ is derived from $\Gamma \Vdash_{\mathcal{R}} s \to_i t$.
By induction hypothesis, 
$s  \underset{\Delta}{\simto} t$ for some 
$\Delta = \Gamma' \sqcup \{ \bullet^{i-1} \}$ 
such that $\Gamma' \sqsubseteq \Gamma$.
Then $s  \underset{\Delta \sqcup \{ \bullet \}}{\sim} t$ by \textbf{bullet}
and $\Delta \sqcup \{ \bullet \} = \Gamma' \sqcup \{ \bullet^i \}$.
Thus, the claim holds.

\item Case 
$\Gamma \Vdash_{\mathcal{R}} C[l\sigma] \to_{i+1}  C[r\sigma]$
is derived from 
$\Gamma \Vdash_{\mathcal{R}} \langle x_1\sigma,\ldots,x_n\sigma \rangle  \sim_i \langle y_1\sigma, \ldots,y_n\sigma \rangle$
where $l \to r \Leftarrow x_1 \approx y_1,\ldots,x_n \approx y_n \in \mathcal{R}$.
By induction hypothesis, 
$x_j\sigma \underset{\Delta_j}{\sim} y_j\sigma$ ($j = 1,\ldots,n$)
for some $\Delta_1,\ldots,\Delta_n$
such that $\bigsqcup_j \Delta_j = \Gamma' \sqcup \{ \bullet^i \}$
for some $\Gamma' \sqsubseteq \Gamma$.
Then, by \textbf{rule},
we have $C[l\theta] \underset{\Delta}{\simto} C[r\theta]$ 
where $\Delta= \bigsqcup_j \Delta_j$.
Thus, the claim holds.
\end{enumerate}
\end{proof}

\begin{proof}[Proof of Lemma~\ref{lem:correspondence II}]
We prove (i)--(iii) simultaneously by induction on the derivation.

\begin{enumerate}
\item Case \textbf{(asp)}.
We have $t \underset{\{ t \approx s \}}{\sim} s$.
Then 
$\Lambda \Vdash_{\mathcal{R}} t \sim_0 s$
for any $\Lambda \sqsupseteq \{ t \approx s \}$ by definition.

\item Case \textbf{(refl)}.
We have $t \underset{\{   \}}{\sim} t$.
Then 
$\Lambda \Vdash_{\mathcal{R}} t \sim_0 t$ for any $\Lambda$ by definition.

\item Case \textbf{(sym)}.
Suppose $s \underset{\Gamma}{\sim} t$ is derived from $t \underset{\Gamma}{\sim} s$.
Let $\Lambda \sqsupseteq \Gamma^\eq$.
Then by induction hypothesis, 
$\Lambda \Vdash_{\mathcal{R}} t \sim_k s$, where $k = \vert \Gamma^\bullet \vert$.
Then, it follows $\Lambda \Vdash_{\mathcal{R}} t \sim_k s$ by definition.

\item Case \textbf{(trans)}.
Suppose 
$t \underset{\Gamma \sqcup \Sigma}{\sim} s$
is derived from $t \underset{\Gamma}{\sim} r$ and $r \underset{\Sigma}{\sim} s$.
Let $\Lambda \sqsupseteq (\Gamma \sqcup \Sigma)^\eq$.
Then, there exist
$\Lambda_1,\Lambda_2$
such that 
$\Lambda = \Lambda_1 \sqcup \Lambda_2$,
$\Lambda_1 \sqsupseteq \Gamma^\eq$
and 
$\Lambda_2 \sqsupseteq \Sigma^\eq$.
Then, by induction hypothesis, 
$\Lambda_1 \Vdash_{\mathcal{R}} t \sim_{k_1} s$
where $k_1 = \vert \Gamma^\bullet \vert$,
and 
$\Lambda_2 \Vdash_{\mathcal{R}} t \sim_{k_2} s$
where $k_2 = \vert \Sigma^\bullet \vert$.
Then, it follows $\Lambda \Vdash_{\mathcal{R}} t \sim_{k_1+k_2} s$ by definition.
As $k_1+ k_2 = \vert \Gamma^\bullet \vert + \vert \Sigma^\bullet \vert
= \vert (\Gamma \sqcup \Sigma)^\bullet \vert$,
the claim follows.

\item Case \textbf{(cntxt)}.
Suppose
$C[t] \underset{\Gamma}{\sim} C[s]$ is derived from $t \underset{\Gamma}{\sim} s$.
Let $\Lambda \sqsupseteq \Gamma^\eq$.
Then by induction hypothesis, 
$\Lambda \Vdash_{\mathcal{R}} t \sim_k s$, where $k = \vert \Gamma^\bullet \vert$.
Then, it follows $\Lambda \Vdash_{\mathcal{R}} C[t] \sim_k C[s]$ by definition.

\item Case \textbf{(rule)}.
Suppose $C[l\theta] \underset{\Gamma}{\simto} C[r\theta]$
is derived from 
$x_1\theta \underset{\Gamma_i}{\sim} y_i\theta$  $(i = 1,\ldots,n)$,
where $\Gamma = \Gamma_1 \sqcup \cdots \sqcup \Gamma_n$
and $l \to r \Leftarrow x_1 \approx y_1,\ldots,x_n \approx y_n \in \mathcal{R}$.
Let $\Lambda \sqsupseteq \Gamma^\eq$.
Then, there exist
$\Lambda_1,\ldots,\Lambda_n$
such that 
$\Lambda = \bigsqcup_j \Lambda_j$
and $\Lambda_j \sqsupseteq \Gamma_j^\eq$ for each $1 \le j  \le n$.
Hence, by induction hypothesis,
$\Lambda_j \Vdash_{\mathcal{R}} x_j\theta  \sim_{k_j} y_j \theta$
where $k_j = \vert \Gamma_j^\bullet \vert$ for each $1 \le j  \le n$.
Then, by definition,
$\Lambda \Vdash_{\mathcal{R}} \langle x_1\theta,\ldots,x_n\theta \rangle  \sim_{k'} \langle y_1\theta, \ldots,y_n\theta \rangle$
where $k' = \sum_j k_j = \sum_j \vert \Gamma_j^\bullet \vert = \vert (\bigsqcup_j \Gamma_j)^\bullet \vert
= \vert \Gamma^\bullet \vert$.
Then, by definition,
$\Lambda \Vdash_{\mathcal{R}} C[l\theta] \sim_{k'+1} C[r\theta]$.

\item Case \textbf{(bullet)}.
Suppose $s \underset{\Gamma \sqcup \{ \bullet \}}{\sim} t$
is derived from $t \underset{\Gamma}{\simto} s$.
Let $\Lambda \sqsupseteq \Gamma^\eq$.
Then by induction hypothesis, 
$\Lambda \Vdash_{\mathcal{R}} s \simto_k s$, where $k = \vert \Gamma^\bullet \vert + 1$.
Then, it follows $\Lambda \Vdash_{\mathcal{R}} t \sim_k s$ by definition.

\end{enumerate}
\end{proof}

\end{document}